\newtheorem{defn}{Definition}
\newtheorem{lem}{Lemma}
\newtheorem{thm}{Theorem}
\newtheorem{prop}[lem]{Proposition}
\newtheorem{cor}[lem]{Corollary}
\newtheorem{rem}{Remark}
\newtheorem{conj}{Conjecture}
\DeclareMathOperator{\vol}{vol}
\DeclareMathOperator{\tr}{tr}
\DeclareMathOperator{\var}{var}
\DeclareMathOperator{\supp}{supp}
\DeclareMathOperator*{\argmin}{arg\,min}
\newcommand{\uX}{\underline{X}}
\newcommand{\uY}{\underline{Y}}
\begin{document}

\title{A Few Interactions Improve Distributed Nonparametric Estimation, Optimally}

\author{%
Jingbo~Liu,~\IEEEmembership{Member,~IEEE}
\thanks{Jingbo Liu is with the Department of Statistics and the Department of Electrical and Computer Engineering,
   University of Illinois, Urbana-Champaign, IL, 61820, US. Email: jingbol@illinois.edu}%
\thanks{Manuscript received 16-Feb-2022; revised 27-Nov-2022. Associate editor: Himanshu Tyagi.}
\thanks{This paper was presented in part at 2022 IEEE International Symposium on Information Theory (ISIT) \cite{liu2022interaction}.}
}

\markboth{to appear on IEEE Trans. Inf. Theory
%Journal of \LaTeX\ Class Files,~Vol.~14, No.~8, August~2015
}%
{Shell \MakeLowercase{\textit{et al.}}: Bare Demo of IEEEtran.cls for IEEE Journals}

\maketitle

\begin{abstract}
Consider the problem of nonparametric estimation of an unknown $\beta$-H\"older smooth density $p_{XY}$ at a given point, where $X$ and $Y$ are both $d$ dimensional.
An infinite sequence of i.i.d.\ samples $(X_i,Y_i)$ are generated according to this distribution, and two terminals observe $(X_i)$ and $(Y_i)$, respectively.
They are allowed to exchange $k$ bits either in oneway or interactively in order for Bob to estimate the unknown density.
We show that the minimax mean square risk is order $\left(\frac{k}{\log k}
\right)^{-\frac{2\beta}{d+2\beta}}$ for one-way protocols and 
$k^{-\frac{2\beta}{d+2\beta}}$ for interactive protocols. 
The logarithmic improvement is nonexistent in the parametric counterparts,
and therefore can be regarded as a consequence of nonparametric nature of the problem.
Moreover, a few rounds of interactions  achieve the interactive minimax rate: the number of rounds can grow as slowly as the super-logarithm (i.e., inverse tetration) of $k$.
The proof of the upper bound is based on a novel multi-round scheme for estimating the joint distribution of a pair of biased Bernoulli variables,
and the lower bound is built on a sharp estimate of a symmetric strong data processing constant for biased Bernoulli variables.
%{\color{blue} evidence that interaction does not improve global estimation: see ipad notes 210606}
%for extension of the bias bound in Tsybakov to multidimensions, see ipad notes, or Lem 2.5 in Bolley et al. dimensional improvements of the logarithmic sobolev, talagrand, and BL inequalities
\end{abstract}

%\newcommand\blfootnote[1]{%
%  \begingroup
%  \renewcommand\thefootnote{}\footnote{#1}%
%  \addtocounter{footnote}{-1}%
%  \endgroup
%}
%\blfootnote{}

\begin{IEEEkeywords}
Density estimation,
Communication complexity,
Nonparametric statistics,
Learning with system constraints,
Strong data processing constant
\end{IEEEkeywords}

\IEEEpeerreviewmaketitle

\section{Introduction}
The communication complexity problem  was introduced in the seminal paper of Yao \cite{yao1979some} (see also \cite{kushilevitz} for a survey), where two terminals (which we call Alice and Bob) compute a given Boolean function of their local inputs (${\bf X}=(X_i)_{i=1}^n$ and ${\bf Y}=(Y_i)_{i=1}^n$) by means of exchanging messages.
The famous log-rank conjecture
provides an estimate of the communication complexity of a general Boolean function, 
which is still open to date.
Meanwhile,
communication complexity of certain specific functions can be better understood.
For example,
the Gap-Hamming problem
\cite{Indyk}\cite{cr12}
concerns testing $f({\bf X,Y})>\frac1{n}$ against $f({\bf X,Y})<-\frac1{n}$,
where 
$f({\bf X,Y}):=\frac1{n}\sum_{i=1}^nX_iY_i$
denotes the sample correlation and $X_i,Y_i\in\{+1,-1\}$.
It was shown in \cite{cr12} with a geometric argument that the communication complexity (for worst-case deterministic ${\bf X,Y}$) is $\Theta(n)$,
therefore a one-way protocol where Alice  simply sends ${\bf X}$ cannot be improved (up to a multiplicative constant) by an interactive protocol.

Gap-Hamming is closely related to the problem of estimating the joint distribution of a pair of binary or Gaussian random variables (using $n$ i.i.d.\ samples).
Indeed, for $n$ large we may assume that Alice (resp.\ Bob) can estimate the marginal distributions of $X$ (resp.\ $Y$) very well, 
so that the joint distribution is parameterized by only one scalar which is the correlation.
An information-theoretic proof of Gap-Hamming was previously provided in \cite{hadar2019communication}, 
building on a converse for correlation estimation for the binary symmetric distribution,
and pinned down the exact prefactor in the risk-communication tradeoff. 
In particular, the result of \cite{hadar2019communication} implies that 
the naive algorithm where Alice simply sends $\bf X$ can be improved by a constant factor in the estimation risk
by a more sophisticated scheme using additional samples.
For the closely related problem of correlation (distribution) testing,
\cite{8437584} and 
\cite{xiang2013interactive}
provided asymptotically tight bounds on the communication complexity under the one-way and interactive protocols
when the null hypothesis is the independent distribution (zero correlation),
which also implies that the error exponent can be improved by an algorithm using additional samples.
The technique of \cite{xiang2013interactive} is based on the tensorization of internal and external information (\eqref{e_sstartens} ahead),
whereas the bound of \cite{8437584} uses hypercontractivity.
More recently, \cite{8849426} derived bounds for testing against \emph{dependent} distributions using optimal transport inequalities.

In this paper, we take the natural step of introducing nonparametric (NP) statistics to Alice and Bob,
whereby two parties estimate a nonparametric density by means of sending messages interactively.
It will be seen that this problem is closely related to a ``sparse'' version of the aforementioned Gap-Hamming problem, 
where interaction does help,  
in contrast to the usual Gap-Hamming problem.

For concreteness, consider the problem of nonparametric estimation of an unknown $\beta$-H\"older smooth density $p_{XY}$ at a given point $(x_0,y_0)$. 
For simplicity we assume the symmetric case where $X$ and $Y$ are both $d$ dimensional.
An infinite sequence of i.i.d.\ samples $(X_i,Y_i)$ are generated according to $p_{XY}$, and Alice and Bob observe $(X_i)$ and $(Y_i)$, respectively.
After they exchange $k$ bits (either in one-way or interactively), 
Bob estimates the unknown density at the given point.
We successfully characterize the minimax rate in terms of the communication complexity $k$:
it is order $\left(\frac{k}{\log k}
\right)^{-\frac{2\beta}{d+2\beta}}$ for one-way protocols and 
$k^{-\frac{2\beta}{d+2\beta}}$ for interactive protocols. 

Notably, allowing interaction strictly improves the estimation risk.
Previously, separations between one-way and interactive protocols are known but in very different contexts: In \cite[Corollary~1]{liu2017secret} (see also \cite{liu2016key}), the separation was found in the rate region of common randomness generation from biased binary distributions, using certain convexity arguments,
but this only implies a difference in the leading constant, rather than the asymptotic scaling. 
On the other hand, the example distribution in 
\cite{sudan2019communication_2}
is based on the pointer-chasing construction of \cite{nisan1991rounds}, which appears to be a highly artificial distribution designed to entail a separation between the one-way and interactive protocols.
Another example where interaction improves zero-error source coding with side information, based on a ``bit location'' algorithm, was described in
\cite{orlitsky1990worst},
and it was shown that two-way communication complexity differs from interactivity communication complexity only by constant factors.
In contrast, the logarithmic separation in the present paper arises from the nonparametric nature of the problem:
If we consider the problem of correlation estimation for Bernoulli pairs with a \emph{fixed} bias
(a parametric problem),
the risk will be order $k^{-\frac1{2}}$, 
and there will be no separation between one-way and interactive protocols (which is indeed the case in \cite{hadar2019communication}).
In contrast, nonparametric estimation is analogous to Bernoulli correlation estimation where the bias changes with $k$ (since the optimal bandwidth adapts to $k$),
which gives rise to the separation.

For the risk upper bound, in the one-way setting it is efficient for Alice to just encode the set of $i$'s such that $X_i$ falls within a neighborhood (computed by the optimal bandwidth for a given $k$) of the given point $x_0$.
To achieve the optimal $k^{-\frac{2\beta}{d+2\beta}}$ rate for interactive protocols,
we provide a novel scheme that uses $r>1$ rounds of interactions, where $r=r(k)$ grows as slowly as the super logarithm (i.e.\ the inverse of tetration) of $k$.
With the sequence of $r(k)$ we use in Section~\ref{sec_rinfty} (and suppose that $\beta=d=1$), while $r=4$ interactions is barely enough for $k$ equal to the number of letters in a short sentence,
$r=8$ is more than sufficient for $k$ equal to the number of all elementary particles in the entire observable universe.
Thus from a practical perspective, $r(k)$ is effectively a constant, although it remains an interesting theoretical question whether $r(k)$ really diverges (Conjecture~\ref{conj1}).

For the lower bound,
the proof is based on the \emph{symmetric data processing constant} introduced in \cite{liu2017secret}. 
Previously, the data processing constant $s^*_r$
has been connected to two-party estimation and hypothesis testing in \cite{hadar2019communication};
the idea was 
canonized as the following statement:
``Information for hypothesis testing locally'' is upper bounded by $s^*_r$ times ``Information communicated mutually''.
However, $s^*_r$ is not easy to compute, and previous bounds on $s^*_r$ are also not tight enough for our purpose.
Instead, we first use an idea of simulation of continuous variables to reduce the problem to estimation of Bernoulli distributions, 
for which $s^*_r$ is easier to analyze.
Then we use some new arguments to bound $s^*_{\infty}$.

Let us emphasize that this paper concerns density estimation at a given point, rather than estimating the global density function.
For the latter problem, it is optimal for Alice to just quantize the samples and send it to Bob, which we show in the companion paper \cite{liu2022_ciss} .
The mean square error (in $\ell_2$ norm) of estimating global density function scales differently than the case of point-wise density estimation since the messages cannot be tailored to the given point.

{\bf Related work. }
Besides function computation, 
distribution estimation and testing, 
other problems which have been studied in the communication complexity or privacy settings include lossy source coding
\cite{kaspi} and common randomness or secret key generation \cite{tyagi2013common}\cite{liu2016smoothing}\cite{liu2017secret}\cite{sudan2019communication}.
The key technical tool for interactive two-way communication models, 
namely the tensorization of internal and external information (\eqref{e_sstartens} ahead), 
appeared in \cite{kaspi} for lossy compression,
\cite{br11}\cite{ma2011some} for function computation,
\cite{tyagi2013common}\cite{liu2017secret} for common randomness generation,
and 
\cite{xiang2013interactive}\cite{hadar2019communication} for parameter estimation.

For one-way communication models, 
the main tool is a tensorization property related to the strong data processing constant (see \eqref{e_tens} ahead),
which was first used in \cite{ac86}
in the study of the error exponents in communication constrained hypothesis testing.
The hypercontractivity method for single-shot bounds in one-way models was used in \cite{liu2016smoothing}\cite{liu_thesis} for common randomness generation and \cite{8437584} for testing.

In statistics, communication-constrained estimation has received considerable attention recently, starting from \cite{zhang2013information},
which considered a model where distributed samples are compressed and sent to a central estimator.
Further works on this communication model include 
settings of Gaussian location estimation 
\cite{braverman2016communication}\cite{cai2020distributed},
parametric estimation
\cite{han2018geometric},
nonparametric regression
\cite{zhu2018distributed},
Gaussian noise model
\cite{zhu2018quantized},
statistical inference 
\cite{acharya2020inference},
and
nonparametric estimation
\cite{han2018distributed}(with a bug fixed in \cite{barnes2020lower})
\cite{acharya2021optimal}.
Related problems solved using similar techniques include differential privacy
\cite{duchi2013local}
and data summarization \cite{phillips2020near}\cite{turner2021statistical}\cite{turner2021efficient}.
Communication-efficient construction of test statistics for distributed testing using the divide-and-conquer algorithm is studied in
\cite{battey2018distributed}.
Generally speaking, these works on statistical minimax rates concern the so-called \emph{horizontal} partitioning of data sets, where data sets share the same feature space but differ in samples \cite{yang2019federated}\cite{kairouz2021advances}.
In contrast, \emph{vertical} distributed or federated learning, where data sets differ in features, has been used by corporations such as those in finance and medical care \cite{yang2019federated}\cite{kairouz2021advances}. 
It is worth mentioning that such horizontal partitioning model was also introduced in Yao's paper \cite{yao1979some}  in the context of function computation under the name ``simultaneous message model'', where different parties send messages to a referee instead of to each other.
The direct sum property (similar to the tensorization property of internal and external information) of the simultaneous message model was discussed in \cite{chakrabarti2001informational}.

{\bf Organization of the paper. }
We review the background on nonparametric estimation, data processing constants and testing independence in Section~\ref{sec_prelim}.
The formulation of the two-party nonparametric estimation problem and the summary of main results are given in Section~\ref{sec_main}.
Section~\ref{sec_ipu} examines the problem of estimating a parameter in a pair of biased Bernoulli distributions, which will be used as a building block in our nonparametric estimation algorithm.
Section~\ref{sec_exchanges} proves some bounds on information exchanges, 
which will be the key auxiliary results for the proof of upper bound for Bernoulli estimation in Section~\ref{app4}, and for nonparametric estimation in
Section~\ref{app2}.
Finally, lower bounds are proved in Section~\ref{sec_1pl} in the one-way case and in Section~\ref{app9} in the interactive case.

\section{Preliminaries}\label{sec_prelim}
\subsection{Notation}
We use capital letters for probability measures and lower cases for the densities functions.
We use the abbreviations $U_i^j:=(U_i,\dots,U_j)$ and $U^j:=U_1^j$.
We use boldface letters to denote vectors, for example
${\bf U}_i=(U_i(l))_{l=1}^n$.
Unless otherwise specified, the base of logarithms can be arbitrary but remain consistent throughout equations.
The precise meaning of the Landau notations, such as $O(\cdot)$, will be explained in each section or proofs of specific theorems.
We use
$\sum_{1\le i\le r}^{\rm odd}$ to denote summing over $i\in\{1,\dots,r\}\setminus 2\mathbb{Z}$.
For the vector representation of a binary probability distribution, we use the convention that $P_U=[P_U(0),P_U(1)]$.
For the matrix representation of a joint distribution of a pair of binary random variables, we use the convention that $P_{XY}=\begin{bmatrix}
P_{XY}(0,0) & P_{XY}(0,1)
\\
P_{XY}(1,0) & P_{XY}(1,1)
\end{bmatrix}
$.
For $x\in[0,1]$, we use the shorthand $\bar{x}:=1-x$.

\subsection{Nonparametric Estimation}\label{sec_npestimation}
Let us recall the basics about the problem of estimating a smooth density;
more details may be found in \cite{tsybakov2008introduction,stone1980optimal}.
Let $d\ge 1$ be an integer, and $s=(s_1,\dots,s_d)\in\{0,1,2,\dots\}^d$ be a multi-index.
For $x=(x_1,\dots,x_d)\in\mathbb{R}^d$,
let $D^s$ denote the differential operator 
\begin{align}
D^s=\frac{\partial^{s_1+\dots+s_d}}{\partial x_1^{s_1}\cdots \partial x_d^{s_d}}.
\end{align}
Given $\beta\in(0,\infty)$, let $\lfloor \beta\rfloor$ be the maximum integer \emph{strictly} smaller than $\beta$ \cite{tsybakov2008introduction} (note the difference with the usual conventions). 
Given a function $f$ whose domain includes a set $\mathcal{A}\subseteq\mathbb{R}^d$, define $\|f\|_{\mathcal{A},\beta}$ as the minimum $L\ge 0$ such that 
\begin{align}
|D^sf(x_1)-D^sf(x_2)|\le L\|x_1-x_2\|_2^{\beta-\lfloor \beta\rfloor},
\quad\forall x_1,x_2\in\mathcal{A},
\label{e_beta}
\end{align}
for all multi-indices $s$ such that $s_1+\dots+s_d=\lfloor \beta\rfloor$.
For example, $\beta=1$ define a Lipschitz function and an integer $\beta$ defines a function with bounded $\beta$-th derivative.

Given $L>0$, let $\mathcal{P}(\beta,L)$ be the class of probability density functions $p$ satisfying $\|p\|_{\mathbb{R}^d,\beta}\le L$.
Let $x_0\in\mathbb{R}^d$ be arbitrary.
The following result on the minimax estimation error is well-known:
\begin{align}
\inf_{T_n}\sup_{p\in\mathcal{P}(\beta,L)}\mathbb{E}[|T_n-p(x_0)|^2]
=\Theta(n^{-\frac{2\beta}{d+2\beta}})
\label{e3}
\end{align}
where the infimum is over all estimators $T_n$ of $p(x_0)$,
i.e., measurable maps from i.i.d.\ samples $X_1,\dots,X_n\sim p$ to $\mathbb{R}$.
$\Theta(\cdot)$ in \eqref{e3} may hide constants independent of $n$.

We say $K\colon \mathbb{R}^d\to\mathbb{R}$ is a kernel of order $l$ ($l\in\{1,2,\dots\}$) if $\int K=1$ and all up to the $l$-th derivatives of the Fourier transform of $K$ vanish at $0$ \cite[Definition~1.3]{tsybakov2008introduction}.
Therefore the rectangular kernel, which is the indicator of a set, is order 1.
A \emph{kernel estimator} has the form
\begin{align}
T_n=\frac1{nh^d}\sum_{l=1}^nK\left(\frac{X_l-x_0}{h}\right)
\label{e_kestimator}
\end{align}
where $h\in(0,\infty)$ is called \emph{bandwidth}.
If $K$ is a kernel of order $l=\lfloor \beta\rfloor$, then the kernel estimator \eqref{e_kestimator} with appropriate $h$ achieves the bound in \eqref{e3}
\cite[Chapter~1]{tsybakov2008introduction}.
In particular, the rectangular kernel is minimax optimal for $\beta\in(0,2]$.

If $K$ is compactly supported, then only local smoothness is needed, and density lower bound does not change the rate: we have
\begin{align}
\inf_{T_n}\sup_{p\in\mathcal{P}_{\mathcal{S}}(\beta,L,A)}\mathbb{E}[|T_n-p(x_0)|^2]
=\Theta(n^{-\frac{2\beta}{d+2\beta}})
\label{e_local}
\end{align}
where $\mathcal{S}$ is any compact neighborhood of $x_0$, $A\in[0,\frac1{\vol(\mathcal{S})})$ is arbitrary (with $\vol(\mathcal{S})$ denoting the volume of $\mathcal{S}$), and $\mathcal{P}_{\mathcal{S}}(\beta,L,A)$ denotes the non-empty set of probability density functions $p$ satisfying $\|p\|_{\mathcal{S},\beta}\le L$ and $\inf_{x\in\mathcal{S}}p(x)\ge A$.

\subsection{Strong and Symmetric Data Processing Constants}\label{sec_sdpi}
The strong data processing constant has proved useful in many distributed estimation problems \cite{braverman2016communication,ac86,duchi2013local,zhang2013information}.
In particular, it is strongly connected to two-party hypothesis testing under the one-way protocol.
In contrast, the symmetric data processing constant \cite{liu2017secret} can be viewed as a natural extension to interactive protocols.
This section recalls their definitions and auxiliary results, which will mainly be used in the proofs of lower bounds; 
however, the intuitions are useful for the upper bounds as well.

Given two probability measures $P$, $Q$ on the same measurable space, define the KL divergence
\begin{align}
D(P\|Q):=\int\log \left(\frac{dP}{dQ}\right) dP.
\end{align}
Define the $\chi^2$-divergence
\begin{align}
D_{\chi^2}(P\|Q):=\int \left(\frac{dP}{dQ}-1\right)^2dQ.
\label{e_chi2}
\end{align}
Let $X,Y$ be two random variables with joint distribution $P_{XY}$. Define the mutual information
\begin{align}
I(X;Y):=D(P_{XY}\|P_X\times P_Y).
\end{align}

\begin{defn}
Let $P_{XY}$ be an arbitrary distribution on $\mathcal{X}\times \mathcal{Y}$.
Define the strong data processing constant 
\begin{align}
s^*(X;Y):=\sup_{P_{U|X}}\frac{I(U;Y)}{I(U;X)}
\end{align}
where $P_{U|X}$ is a conditional distribution (with $\mathcal{U}$ being an arbitrary set),
and the mutual informations are computed under the joint distribution $P_{U|X}P_{XY}$.
\end{defn}
Clearly, the value of $s^*(X;Y)$ does not depend on the choice of the base of logarithm.
A basic yet useful property of the strong data processing constant is \emph{tensorization}:
if ${\bf (X,Y)}\sim P_{XY}^{\otimes n}$ then 
\begin{align}
s^*({\bf X;Y})=s^*(X;Y).
\label{e_tens}
\end{align}
Now if $({\bf X;Y})$ are the samples observed by Alice and Bob, $\Pi_1$ denotes the message sent to Bob, then $I(\Pi_1;{\bf X})\le k$ implies that
\begin{align}
D(P_{\Pi_1 {\bf Y}}\|P_{\Pi_1} P_{\bf Y})
\le s^*(X;Y)k.
\label{e11}
\end{align}
The left side is the KL divergence between the distribution under the hypothesis that $(X,Y)$ follows some joint distribution, and the distribution under the hypothesis that $X$ and $Y$ are independent.
Thus the error probabilities in testing against independence with \emph{one-way protocols} can be lower bounded.
This simple argument dates back at least to \cite{ac86,ahlswede1990minimax}.

A similar argument can be extended to testing independence under \emph{interactive protocols} \cite{xiang2013interactive}.
The fundamental fact enabling such extensions is the tensorization of certain information-theoretic quantities,
% (referred to as the internal and external informations in the CS literature \cite{br11}),
which appeared in various contexts \cite{kaspi,br11,liu2017secret}.

\begin{defn}\label{defn2}
Let $(X,Y)\sim P_{XY}$.
For given $r<\infty$,
define $s_r^*(X;Y)$ as the supremum of $R/S$ such that there exists random variables $U_1,\dots,U_r$ satisfying 
\begin{align}
R\le \sum_{1\le i\le r}^{\rm odd}I(U_i;Y|U^{i-1})
+\sum_{1\le i\le r}^{\rm even}I(U_i;X|U^{i-1});
\label{e_R}
\\
S\ge \sum_{1\le i\le r}^{\rm odd}I(U_i;X|U^{i-1})
+\sum_{1\le i\le r}^{\rm even}I(U_i;Y|U^{i-1}),
\label{e_S}
\end{align}
and
\begin{align}
U_i-(X,U^{i-1})-Y, \quad i\in\{1,\dots,r\}\setminus 2\mathbb{Z}
\label{e_markov0}
\\
U_i-(Y,U^{i-1})-X, \quad i\in\{1,\dots,r\}\cap 2\mathbb{Z}
\label{e_markov}
\end{align}
are Markov chains.
We call $s_{\infty}^*(X;Y)$ the symmetric data processing constant.
\end{defn}
Let us remark that using the Markov chains we have 
the right side of \eqref{e_R}
\begin{align}
 &\quad\sum_{1\le i\le r}^{\rm odd}I(U_i;Y|U^{i-1})
+\sum_{1\le i\le r}^{\rm even}I(U_i;X|U^{i-1})
\nonumber\\
&=I(X;Y)-I(X;Y|U^r)
\\
&=I(U^r;XY)-[I(U^r;X|Y)+I(U^r;Y|X)]
\end{align}
whereas the right side of \eqref{e_S}
 \begin{align}
 \sum_{1\le i\le r}^{\rm odd}I(U_i;X|U^{i-1})
+\sum_{1\le i\le r}^{\rm even}I(U_i;Y|U^{i-1})
=I(U^r;XY).
\label{e_extrinsic}
 \end{align}
 In the computer science literature \cite{br11}, $I(U^r;XY)$ is called the \emph{external information} whereas $I(U^r;X|Y)+I(U^r;Y|X)$ the \emph{internal information}.

The symmetric strong data processing constant is symmetric in the sense that $s_{\infty}^*(X;Y)=s_{\infty}^*(Y;X)$, since $r=\infty$ in the definition.
On the other hand, $s_1^*(X;Y)$ coincides with the strong data processing constant which is generally not symmetric.
Furthermore, a tensorization property holds for the internal and external information: denote by $\mathcal{R}(X;Y)$ the set of all $(R,S)$ satisfying \eqref{e_R} and \eqref{e_S} for some $U_1,\dots,U_r$. 
Let ${\bf (X,Y)}\sim P_{XY}^{\otimes n}$.
Then 
\begin{align}
\mathcal{R}({\bf X;Y})=n\mathcal{R}(X;Y).
\end{align}
In particular, taking 
the slope of the boundary at the original yields
\begin{align}
s_{\infty}^*({\bf X;Y})
=s_{\infty}^*(X;Y).
\label{e_sstartens}
\end{align}

A useful and general upper bound on $s_{\infty}^*$ in terms of SVD was provided in \cite[Theorem~4]{liu2017secret}, which implies that $s_{\infty}^*=s_1^*$ when $X$ and $Y$ are unbiased Bernoulli.
However, that bound is not tight enough
for the nonparametric estimation problem we consider, 
and in fact we adopt a new approach in Section~\ref{app9} for the \emph{biased} Bernoulli distribution.
Let us remark that $s_{\infty}^*=s_1^*$ holds also for Gaussian $(X,Y)$, which follows by combining the result on unbiased Bernoulli distribution and a central limit theorem argument \cite{liu2017secret} (see also \cite{hadar2019communication}).
Moreover, it was conjectured in \cite{liu2017secret} that the set of possible $(R,S)$ satisfying \eqref{e_R}-\eqref{e_S} does not depend on $r$ when $X$ and $Y$ are unbiased Bernoulli.

\subsection{Testing Against Independence}
Consider the following setting: $P_{XY}$ is an arbitrary distribution on $\mathcal{X}\times \mathcal{Y}$; $P_{\bf XY}:=P_{XY}^{\otimes n}$; 
$\Pi=(\Pi_0,\dots,\Pi_r)$ is a sequence of random variables, with
$P_{\Pi|{\bf XY}}$ being given and satisfying
$P_{\Pi_0|{\bf XY}}=P_{\Pi_0}$, 
$P_{\Pi_i|{\bf XY}\Pi_0^{i-1}}=P_{\Pi_i|{\bf X}\Pi_0^{i-1}}$ for $i\in\{1,\dots,r\}\setminus 2\mathbb{Z}$ and $P_{\Pi_i|{\bf XY}\Pi_0^{i-1}}=P_{\Pi_i|{\bf Y}\Pi_0^{i-1}}$ for $i\in\{1,\dots,r\}\cap2\mathbb{Z}$;
$\bar{P}_{\bf XY}=P_{\bf X}P_{\bf Y}$ is the 
distribution under the hypothesis of independence, and 
 $\bar{P}_{\Pi{\bf XY}}:=P_{\Pi|{\bf XY}}\bar{P}_{\bf XY}$.
The following result is known in \cite{xiang2013interactive,8849426,hadar2019communication}:
\begin{lem}
$D(P_{{\bf Y}\Pi}\|\bar{P}_{{\bf Y}\Pi})\le I({\bf X;Y})-I({\bf X;Y}|\Pi)$.
\end{lem}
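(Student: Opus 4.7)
The plan is to reformulate the target inequality via a KL chain-rule identity, then reduce it to the Gibbs minimum property using the ``rectangle'' factorization of interactive protocols.

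First, I would decompose $D(P_{\mathbf{XY}\Pi}\|\bar{P}_{\mathbf{XY}\Pi})$ in two ways. Since $P_{\Pi|\mathbf{XY}}$ is common to $P$ and $\bar{P}$ while $\bar{P}_{\mathbf{XY}}=P_{\mathbf{X}}P_{\mathbf{Y}}$, expanding along $\mathbf{XY}\to\Pi$ yields $D(P_{\mathbf{XY}\Pi}\|\bar{P}_{\mathbf{XY}\Pi})=I(\mathbf{X};\mathbf{Y})$. Expanding instead along $(\mathbf{Y},\Pi)\to\mathbf{X}$ gives $D(P_{\mathbf{XY}\Pi}\|\bar{P}_{\mathbf{XY}\Pi})=D(P_{\mathbf{Y}\Pi}\|\bar{P}_{\mathbf{Y}\Pi})+D(P_{\mathbf{X}|\mathbf{Y}\Pi}\|\bar{P}_{\mathbf{X}|\mathbf{Y}\Pi}|P_{\mathbf{Y}\Pi})$. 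Equating the two expressions, the target inequality is equivalent to
\[ D(P_{\mathbf{X}|\mathbf{Y}\Pi}\|\bar{P}_{\mathbf{X}|\mathbf{Y}\Pi}|P_{\mathbf{Y}\Pi}) \ge I(\mathbf{X};\mathbf{Y}|\Pi). \]

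The key structural input is the rectangle property of interactive protocols. Applying the chain rule to $P_{\Pi|\mathbf{XY}}$ together with the alternating Markov hypothesis and $P_{\Pi_0|\mathbf{XY}}=P_{\Pi_0}$, the protocol kernel factors as $P_{\Pi|\mathbf{XY}}(\pi|\mathbf{x},\mathbf{y})=f(\mathbf{x},\pi)\,g(\mathbf{y},\pi)$, where $f$ collects the odd-round kernels $P_{\Pi_i|\mathbf{X}\Pi_0^{i-1}}$ and $g$ collects the even-round kernels $P_{\Pi_i|\mathbf{Y}\Pi_0^{i-1}}$ together with $P_{\Pi_0}$. Substituting into $\bar{P}_{\mathbf{XY}\Pi}=P_{\mathbf{X}}P_{\mathbf{Y}}P_{\Pi|\mathbf{XY}}$ shows that for each fixed $\pi$ the joint factors as a function of $\mathbf{x}$ alone times a function of $\mathbf{y}$ alone; hence $\mathbf{X}\perp\mathbf{Y}\mid\Pi$ under $\bar{P}$, and in particular $\bar{P}_{\mathbf{X}|\mathbf{Y}\Pi}=\bar{P}_{\mathbf{X}|\Pi}$.

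The reduced inequality then follows from the Gibbs minimum property: for any conditional $Q_{\mathbf{X}|\Pi}$ depending only on $\Pi$,
\[ D(P_{\mathbf{X}|\mathbf{Y}\Pi}\|Q_{\mathbf{X}|\Pi}|P_{\mathbf{Y}\Pi}) = I(\mathbf{X};\mathbf{Y}|\Pi) + D(P_{\mathbf{X}|\Pi}\|Q_{\mathbf{X}|\Pi}|P_{\Pi}) \ge I(\mathbf{X};\mathbf{Y}|\Pi). \]
Specializing $Q_{\mathbf{X}|\Pi}=\bar{P}_{\mathbf{X}|\Pi}$ and invoking $\bar{P}_{\mathbf{X}|\mathbf{Y}\Pi}=\bar{P}_{\mathbf{X}|\Pi}$ from the previous step completes the argument.

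I expect the main conceptual step to be recognizing the rectangle factorization and its consequence that $\mathbf{X}$ and $\mathbf{Y}$ become conditionally independent given the transcript under the product distribution $\bar{P}$. A naive round-by-round induction on the chain-rule expansion of $D(P_{\mathbf{Y}\Pi}\|\bar{P}_{\mathbf{Y}\Pi})$ would force a per-odd-round inequality of the form $D(P_{\Pi_i|\mathbf{Y}\Pi_0^{i-1}}\|\bar{P}_{\Pi_i|\mathbf{Y}\Pi_0^{i-1}}|P_{\mathbf{Y}\Pi_0^{i-1}})\le I(\Pi_i;\mathbf{Y}|\Pi_0^{i-1})$, which runs opposite to the variational lower bound on its left side, so the global rectangle argument seems essential.
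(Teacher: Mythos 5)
Correct, and essentially the standard argument. The paper states this lemma without proof, citing it as known from \cite{xiang2013interactive,8849426,hadar2019communication}; your derivation --- compute $D(P_{\mathbf{XY}\Pi}\|\bar{P}_{\mathbf{XY}\Pi})=I(\mathbf{X};\mathbf{Y})$ since the protocol kernel is shared, peel off $\mathbf{X}$ to reduce to $D(P_{\mathbf{X}|\mathbf{Y}\Pi}\|\bar{P}_{\mathbf{X}|\mathbf{Y}\Pi}\,|\,P_{\mathbf{Y}\Pi})\ge I(\mathbf{X};\mathbf{Y}|\Pi)$, use the rectangle factorization $P_{\Pi|\mathbf{XY}}(\pi|\mathbf{x},\mathbf{y})=f(\mathbf{x},\pi)g(\mathbf{y},\pi)$ to get $\bar{P}_{\mathbf{X}|\mathbf{Y}\Pi}=\bar{P}_{\mathbf{X}|\Pi}$, and finish with the Gibbs variational identity $D(P_{\mathbf{X}|\mathbf{Y}\Pi}\|Q_{\mathbf{X}|\Pi}|P_{\mathbf{Y}\Pi})=I(\mathbf{X};\mathbf{Y}|\Pi)+D(P_{\mathbf{X}|\Pi}\|Q_{\mathbf{X}|\Pi}|P_{\Pi})$ --- is precisely the argument in those references, and your closing remark that the per-round inequality has the wrong sign (because $\bar{P}_{\Pi_i|\mathbf{Y}\Pi^{i-1}}=\bar{P}_{\Pi_i|\Pi^{i-1}}$ for odd $i$ under $\bar{P}$) correctly identifies why the global rectangle argument is needed.
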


Now by Definition~\ref{defn2}, we immediately have
\begin{align}
s_r^*(X;Y)&\ge 
\frac{I({\bf X;Y})-I({\bf X;Y}|\Pi)}{I({\bf XY};\Pi)}
\\
&\ge 
\frac{D(P_{{\bf Y}\Pi}\|\bar{P}_{{\bf Y}\Pi})}{H(\Pi)}
\label{e19}
\end{align}
which generalizes \eqref{e11}.
Therefore, $s_r^*(X;Y)$ can be used to bound $D(P_{{\bf Y}\Pi}\|\bar{P}_{{\bf Y}\Pi})$, and in turn, the error probability in indepedence testing.

\section{Problem Setup and Main Results}
\label{sec_main}
We consider estimating the density function at a given point, where the density is assumed to be H\"older continuous in a neighborhood of that point.
It is clear that there is no loss of generality assuming such neighborhood to be the unit cube, and that the given point is its center.
More precisely, the class of densities under consideration is defined as follows:
\begin{defn}\label{defn_h}
Given $d\in\{1,2,\dots\}$, $L>0$, $A\in [0,1)$, and $\beta>0$,
let $\mathcal{H}(\beta,L,A)$ be the set of all probability density $p_{XY}$ on $\mathcal{X}\times \mathcal{Y}$ (where $\mathcal{X}=\mathcal{Y}=\mathbb{R}^d$) satisfying
\begin{align}
p_X(x),p_Y(y)\ge A,\quad \forall x,y\in[0,1]^d,
\label{e20}
\end{align}
and
\begin{align}
\|p_{XY}\|_{[0,1]^{2d},\beta}\le L.
\label{e_holder}
\end{align}
\end{defn}
\begin{defn}
We say $\mathcal{C}$ is a \emph{prefix code} \cite{elements} if it is a subset of the set of all finite non-empty binary sequences satisfying the property that for any distinct $s_1,s_2\in\mathcal{C}$, $s_1$ cannot be a prefix of $s_2$.
\end{defn}
%\begin{rem}
%If $A=0$, then the density lower bound \eqref{e20} becomes vacuous, which will not affect the minimax rates (Lemma~\ref{lem7}).
%Assuming $A<1$ ensures that $\mathcal{H}(\beta,L,A)$ is non-empty and not singleton.
%\end{rem}
The problem is to estimate the density at a given point of an unknown distribution from $\mathcal{H}(\beta,L,A)$.
More precisely,
\begin{itemize}
\item $P_{XY}$ is a fixed but unknown distribution whose corresponding density $p_{XY}$ belongs to $\mathcal{H}(\beta,L,A)$ for some $\beta\in(0,\infty)$, $L\in(0,\infty)$, and $A\in[0,1)$. 
\item Infinite sequence of pairs $(X(1),Y(1))$, $(X(2),Y(2))$,\dots are i.i.d.\ according to $P_{XY}$.
Alice (Terminal 1) observes ${\bf X}=(X(l))_{l=1}^{\infty}$ and 
Bob (Terminal 2) observes ${\bf Y}=(Y(l))_{l=1}^{\infty}$.
\item Unlimited common randomness $\Pi_0$ is observed by both Alice and Bob.
That is, an infinite random bit string independent of $\bf (X,Y)$ shared by Alice and Bob.
\item For $i=1,\dots,r$ ($r$ is an integer), if $i$ is odd, then Alice sends to Bob a message $\Pi_i$, which is an element in a prefix code, 
where $\Pi_i$ is computed using the common randomness $\Pi_0$, the previous transcripts $\Pi^{i-1}=(\Pi_1,\dots,\Pi_{i-1})$, and 
${\bf X}$;
if $i$ is even, then Bob sends to Alice a message $\Pi_i$ computed using $\Pi_0$, $\Pi^{i-1}$, and ${\bf Y}$.
\item Bob computes an estimate $\hat{p}$ of the true density $p_{XY}(x_0,y_0)$, where $x_0=y_0$ is the center of $[0,1]^d$.
\end{itemize}
%\begin{rem}
%Clearly, the minimax rates will not change if the set $[0,1]^{2d}$ in \eqref{e_holder} is replaced by any compact neighborhood of $(x_0,y_0)$, 
%or if $x,y\in[0,1]^d$ in \eqref{e20} is replaced by $x$ and $y$ belonging to any compact sets containing $x_0$ and $y_0$ respectively.
%\end{rem}
%\begin{rem}
%The assumption of $\beta\in(0,2]$ is to ensure that rectangular kernels achieve the minimax rate, which is used in the upper bound proofs; see Section~\ref{sec_npestimation}.
%However, our lower bounds hold for all $\beta\in(0,\infty)$.
%\end{rem}

{\bf One-way NP Estimation Problem}. Suppose that $r=1$. 
Under the constraint on the expected length of the transcript (i.e.\ length of the bit string)
\begin{align}
\mathbb{E}[|\Pi^r|]\le k,
\label{e_comm}
\end{align}
where $k>0$ is a real number,
what is the minimax risk
\begin{align}
R(k):=\min_{\hat{p},\bf \Pi}\max_{p_{XY}\in \mathcal{H}(\beta,L,A)}\mathbb{E}[|\hat{p}-p_{XY}(x_0,y_0)|^2]?
\end{align}
\noindent
{\bf Interactive NP Estimation Problem}.
Under the same constraint on the expected length of the transcript, but without any constraint on the number of rounds $r$, what is the minimax risk?
\begin{rem}
The prefix condition ensures that Bob knows that the current round has terminated after finishing reading each $\Pi_i$.
Alternatively, the problem can be formulated by stating that $\Pi_i$ is a random variable in an arbitrary alphabet, and replacing \eqref{e_comm} by the entropy constraint $H(\Pi^r)\le k$.
Furthermore, one may use the information leakage constraint $I({\bf X,Y};\Pi^r)\le k$ instead.
From our proofs it is clear that the minimax rates will not change under these alternative formulations.
\end{rem}
\begin{rem}
There would be no essential difference if the problem were formulated with $|\Pi|\le k$ almost surely and $|\hat{p}-p_{XY}(x_0,y_0)|^2\le R(k)$ with probability (say) at least $1/2$.
Indeed, for the upper bound direction, those conditions are satisfied with a truncation argument, once we have an algorithm satisfying $\mathbb{E}[|\Pi|]\le k/4$ and $\mathbb{E}[|\hat{p}-p_{XY}(x_0,y_0)|^2]\le R(k)/4$,
by Markov's inequality and the union bound,
therefore results only differ with a constant factor.
For the lower bound, the proof can be extended to the high probability version, since we used the Le Cam style argument \cite{assouad1996fano}.
\end{rem}

\begin{rem}
The common randomness assumption is common in the communication complexity literature, and, in some sense, is equivalent to private randomness \cite{newman1991private}.
In our upper bound proof, the common randomness is the randomness in the codebooks.
Random codebooks give rise to convenient properties, such as the fact that the expectation of the distribution of the matched codewords equals exactly the product of idealized single-letter distributions \eqref{e116}.
It is likely, however, that some approximate versions of these proofs steps, and ultimately the same asymptotic risk,
should hold for some carefully designed deterministic codebooks.
\end{rem}

\begin{thm}\label{thm3}
In one-way NP estimation, for any $\beta\in(0,\infty)$, $L\in(0,\infty)$, and $A\in[0,1)$, 
\begin{align}
R(k)=\Theta(\left(\frac{k}{\log k}
\right)^{-\frac{2\beta}{d+2\beta}})
\label{e_rpb3}
\end{align}
where $\Theta(\cdot)$ hides multiplicative factors depending on $L$, $\beta$, and $A$.
\end{thm}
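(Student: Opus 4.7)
The plan is to prove matching upper and lower bounds on $R(k)$, both at the rate $(k/\log k)^{-2\beta/(d+2\beta)}$.

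\emph{Upper bound.} Alice picks a bandwidth $h$ and sample size $n$ as functions of $k$, and, via the common randomness, shares a prefix code with Bob. She then transmits the entropy-coded sparse binary string $Z^n$ where $Z_i = \mathbf{1}\{X_i \in B(x_0, h)\}$. Because $p_X$ is bounded away from $0$ and $\infty$ on $[0,1]^d$ for $p_{XY} \in \mathcal{H}(\beta, L, A)$, the $Z_i$'s are essentially i.i.d.\ Bernoulli with bias $\Theta(h^d)$, so the expected transmission length is $\Theta(n H_2(h^d)) = \Theta(n h^d \log(1/h))$. When $\beta > 1$, Alice also sends a coarse quantization of each informative offset $X_i - x_0$ (a constant number of bits per informative index) so Bob can implement a product kernel of order $\lfloor \beta \rfloor$ as in \eqref{e_kestimator}. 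Standard bias-variance analysis on $\mathcal{H}(\beta, L, A)$ gives bias $O(h^\beta)$ and variance $O(1/(n h^{2d}))$; equating $n h^d \log(1/h) \asymp k$ and balancing squared bias with variance then forces $h \asymp (\log k / k)^{1/(d+2\beta)}$ and yields MSE $= O((k/\log k)^{-2\beta/(d+2\beta)})$.

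\emph{Lower bound.} Apply Le Cam's two-point method. Let $p_0$ be close to uniform on $[0,1]^{2d}$ inside $\mathcal{H}(\beta, L, A)$, and set $p_1(x, y) = p_0(x, y) + c\, h^\beta\, \psi((x - x_0)/h,\, (y - y_0)/h)$ for a fixed compactly supported smooth bump $\psi$ with $\psi(0, 0) > 0$ and vanishing $x$- and $y$-marginals (so $p_1$ and $p_0$ share their one-dimensional marginals). Then $|p_1(x_0, y_0) - p_0(x_0, y_0)| = \Theta(h^\beta)$, and a direct H\"older estimate places $p_1 \in \mathcal{H}(\beta, L, A)$ for sufficiently small $c$. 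The matched marginals together with independence of $\bX, \bY$ under $p_0$ imply $P^{p_0}_{\Pi, \bY} = P^{p_1}_{\Pi} \otimes P^{p_1}_{\bY}$, hence $D(P^{p_1}_{\Pi, \bY} \| P^{p_0}_{\Pi, \bY}) = I^{p_1}(\Pi; \bY)$; the one-way SDPI bound \eqref{e11} together with tensorization \eqref{e_tens} then gives $I^{p_1}(\Pi; \bY) \leq s_1^*(X; Y) \cdot k$.

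The crux is a \emph{sharp} upper bound on $s_1^*(X; Y)$ under $p_1$. A naive maximal-correlation estimate yields only $O(h^{2\beta + 2d})$, which is off by a polynomial factor from what is needed. To sharpen it I reduce to a biased Bernoulli correlation problem: by conditioning on subcells of the bump, the continuous $(X, Y)$ becomes a Bernoulli pair with small bias $q = \Theta(h^d)$, and a single-letter calculation tailored to this small-bias regime produces an extra $1/\log(1/h)$ gain over the chi-squared bound. Plugging this sharper $s_1^*$ into $s_1^* k \leq O(1)$ yields the cutoff $h \asymp (\log k / k)^{1/(d+2\beta)}$; by Pinsker and Le Cam the MSE is then $\Omega(h^{2\beta}) = \Omega((k/\log k)^{-2\beta/(d+2\beta)})$.

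\emph{Main obstacle.} The technical core is the sharp biased-Bernoulli SDPI estimate. The logarithm that the sparse encoder saves on the upper-bound side must appear as a loss factor in $s_1^*$ on the lower-bound side, and this $1/\log(1/h)$ cannot be obtained from any generic bound on $s_1^*$ via $\chi^2$ or maximal correlation; it comes from the small-bias regime, whose treatment is the main analytic content of Section~\ref{sec_1pl}. Setting up the bump so that it is simultaneously H\"older-smooth, has matched marginals, and gives rise to a nontrivial Bernoulli correlation at the subcell level is a secondary issue accommodated by a tensor-product construction with zero-mean coordinate bumps.
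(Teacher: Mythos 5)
Your overall plan tracks the paper's: an index-coding/rectangular-kernel scheme with bandwidth $h\asymp(\log k/k)^{1/(d+2\beta)}$ for the upper bound, and Le Cam with matched marginals, a reduction to a biased-Bernoulli pair, and a sharp bound on the one-way strong data processing constant for the lower bound. But at the center of the converse you have left the critical lemma unproved, and your description of it is not quite right. The paper's Theorem~\ref{thm_1point} in fact \emph{does} obtain the logarithm through maximal correlation and $\chi^2$: it writes $s^*(\underline X;\underline Y)=D(Q_{\underline Y}\|P_{\underline Y})/D(Q_{\underline X}\|P_{\underline X})$ at the optimizing $Q_{\underline X}$, dominates the numerator by the corresponding $\chi^2$ divergence, applies the $\chi^2$ contraction coefficient $\rho_{\sf m}^2=\delta^2/(m-1)^2$, and then multiplies by the $f$-divergence comparison $D_{\chi^2}(Q_{\underline X}\|P_{\underline X})/D(Q_{\underline X}\|P_{\underline X})\le (m-1)^2/(m\ln m-m+1)$, valid because $\max_x Q_{\underline X}(x)/P_{\underline X}(x)\le m$. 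The logarithm lives in this second, non-SDPI factor. Your ``naive maximal-correlation estimate $O(h^{2\beta+2d})$'' also has the wrong sign: $\rho_{\sf m}^2$ is a \emph{lower} bound on $s_1^*$, not an upper bound, so it cannot be the source of a converse loss; the correct difficulty is that for biased Bernoulli $s_1^*$ is genuinely $\emph{larger}$ than $\rho_{\sf m}^2$ by roughly $m/\log m$, and one must prove the matching upper bound $s_1^*\lesssim\delta^2/(m\log m)$.

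Two secondary gaps. ``Conditioning on subcells of the bump'' is not a valid reduction on its own: to transfer the testing lower bound from $(\underline{\bf X},\underline{\bf Y})$ to $({\bf X},{\bf Y})$ one must exhibit a \emph{common nonnegative} channel $P_{X|\underline X}P_{Y|\underline Y}$ so that $p_0$ and $p_1$ are both pushforwards of binary hypotheses $P^{(j)}_{\underline{XY}}$; Alice and Bob then simulate the continuous samples from the Bernoulli ones via local randomness. A compactly supported signed tensor bump $g\otimes g$ does not immediately give nonnegative $P_{X|\underline X}$; the paper resolves this by letting the negative part of the bump spread uniformly over the whole unit cube (not over a small subcell), which is what makes $p_{X|\underline X=1}\propto(1-f)$ a legitimate density. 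On the upper bound, sending coarse offsets for $\beta>1$ leaves an unaddressed quantization-bias term; the paper avoids it via Proposition~\ref{prop10}, which realizes an order-$\lfloor\beta\rfloor$ kernel as a fixed linear combination of rectangular kernels and reruns the Bernoulli estimator a constant number of times, with no offset transmission at all.
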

The proof of the upper bound is in Section~\ref{sec_d1way}.
Recall that nonparametric density estimation using a rectangular kernel is equivalent to counting the frequency of samples in a neighborhood of a given diameter, the bandwidth, which we denote as $\Delta$.
A naive protocol is for Alice to send the indices of samples in $x_0+[-\Delta,\Delta]^d$. 
Locating each sample in that neighborhood requires on average $\Theta(\log\frac1{\Delta})=\Theta(\log k)$ bits.
Thus $\Theta(k/\log k)$ samples in that neighborhood can be located.
It turns out that the naive protocol is asymptotically optimal.

The proof of the lower bound (Section~\ref{sec_1pl}) follows by a reduction to testing independence for biased Bernoulli distributions, via a simulation argument.
Although some arguments are similar to \cite{hadar2019communication},
the present problem concerns \emph{biased} Bernoulli distributions instead.
The (KL) strong data processing constant turns out to be drastically different from the $\chi^2$ data processing constant, as opposed to the cases of many familiar distributions such as the unbiased Bernoulli or the Gaussian distributions.

As alluded, our main result is that the risk can be strictly improved when interactions are allowed:
\begin{thm}\label{thm4}
In interactive NP estimation, for any $\beta\in(0,\infty)$, $L\in(0,\infty)$, and $A\in[0,1)$, we have
\begin{align}
R(k)=\Theta\left(k^{-\frac{2\beta}{d+2\beta}}\right)
\label{e_rpb4} 
\end{align} 
where $\Theta(\cdot)$ hides multiplicative factors depending on $L$, $\beta$ and $A$.
\end{thm}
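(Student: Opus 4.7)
The plan is to prove matching upper and lower bounds of order $k^{-2\beta/(d+2\beta)}$. Both directions follow the template set by Theorem~\ref{thm3}, but with the one-way ingredients (the strong data processing constant and single-round index coding) replaced by interactive analogues (the symmetric data processing constant and a multi-round Bernoulli scheme).

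For the upper bound, I would first reduce density estimation to estimation of a joint biased-Bernoulli parameter. Pick a bandwidth $\Delta \asymp k^{-1/(d+2\beta)}$, set $U_l := \mathbf{1}\{X(l) \in x_0 + [-\Delta/2, \Delta/2]^d\}$ and $V_l := \mathbf{1}\{Y(l) \in y_0 + [-\Delta/2, \Delta/2]^d\}$, and note that $P(U_l = V_l = 1) \approx \Delta^{2d}\, p_{XY}(x_0, y_0)$ by the Hölder assumption \eqref{e_holder}. Since Alice's informative samples (those with $U_l = 1$) all satisfy $X(l) \approx x_0$, once the relevant indices are shared Bob is essentially estimating a $d$-dimensional conditional density at $y_0$, which explains the exponent $2\beta/(d+2\beta)$ rather than $2\beta/(2d+2\beta)$. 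I would then invoke the interactive Bernoulli estimation scheme developed in Section~\ref{sec_ipu} with $r(k)$ rounds and budget $k$ bits; its key guarantee is that $\Theta(k)$ informative samples can be effectively communicated \emph{without} the $\log(1/\Delta)$ per-sample overhead paid by the one-way protocol. The resulting estimator has variance $\Theta(1/(k\Delta^d))$ for $p_{XY}(x_0,y_0)$, and balancing this against the squared Hölder bias $\Delta^{2\beta}$ produces the announced rate.

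For the lower bound, I would use the two-point (Le Cam) method combined with the symmetric data-processing inequality \eqref{e19}. Construct $p_0, p_1 \in \mathcal{H}(\beta, L, A)$ that agree outside a window of side $\Delta$ around $(x_0, y_0)$ and differ at the center by $\Theta(\Delta^\beta)$; inside the window $p_0$ is a product of its marginals while $p_1$ carries a small correlation of order $\Delta^\beta$. A standard simulation argument reduces the continuous testing problem to distinguishing a pair of biased Bernoulli random variables (bias $\Theta(\Delta^d)$) from their product. Applying \eqref{e19} with $H(\Pi^r) \le k$ yields $D(P_{\mathbf{Y}\Pi} \| \bar{P}_{\mathbf{Y}\Pi}) \le s_\infty^*(X;Y) \cdot k$, and combining with Pinsker's inequality gives the matching lower bound, provided $s_\infty^*$ for the biased Bernoulli pair is controlled tightly enough.

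The main obstacles live at both ends. On the upper bound side, the interactive Bernoulli scheme must communicate the $\Theta(k)$ informative indices at amortized cost $O(1)$ bits per index, beating the $\Theta(\log k)$ per-index cost of any single-shot index-coding protocol; achieving this seems to require the successive-refinement strategy with the $r(k)$ slowly-growing rounds alluded to in the introduction. On the lower bound side, the crucial step is the new bound on $s_\infty^*$ for biased Bernoulli pairs in Section~\ref{app9}: the general SVD bound from \cite{liu2017secret} is tight in the unbiased Bernoulli and Gaussian cases (where $s_\infty^* = s_1^*$) but is loose for the biased regime relevant here, so a fresh argument is needed that correctly captures the dependence on the bias $\Delta^d$ and that holds uniformly over all $r$.
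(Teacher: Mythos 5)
Your proposal matches the paper's strategy: the upper bound passes through the interactive Bernoulli estimator applied to indicator variables on the optimal-bandwidth window, with the bias--variance balance $\Delta^{2\beta}\asymp 1/(k\Delta^d)$, and the lower bound uses a Le Cam two-point construction reduced by simulation to Bernoulli testing controlled by $s^*_\infty$ through \eqref{e19} and Pinsker. You correctly flag the two load-bearing technical cores that the paper then supplies --- the multi-round successive-refinement scheme achieving $O(1)$ amortized bits per informative index (Sections~\ref{sec_ipu}--\ref{app4}) and the biased-Bernoulli bound $s^*_\infty(X;Y)\le c^{-1}p\delta^2$ of Theorem~\ref{thm7} --- though your sketch acknowledges rather than establishes them.
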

To achieve the scaling in \eqref{e_rpb4}, $r$ can grow 
as slowly as the super-logarithm (i.e., inverse tetration) of $k$; for the precise relation between $r$ and $k$, see
Section~\ref{sec_rinfty}.

The proof of the upper bound of Theorem~\ref{thm4} is given in 
Section~\ref{sec_d_interactive},
which is based on a novel multi-round estimation scheme for biased Bernoulli distributions formulated and analyzed in
Sections~\ref{sec_ipu},\ref{sec_exchanges},\ref{app4}.
Roughly speaking, the intuition is to ``locate'' the samples within neighborhoods of $(x_0,y_0)$ by successive refinements, which is more communication-efficient than revealing the location at once.

The lower bound of Theorem~\ref{thm4} is proved in Section~\ref{app9}.
The main technical hurdle is to develop new and tighter bounds on the symmetric data processing constant in \cite{liu2017secret} for the biased binary cases.

\section{Estimation of Biased Bernoulli Distributions}
\label{sec_ipu}
In this section, we shall describe an algorithm for estimating the joint distribution of a pair of biased Bernoulli random variables.
The biased Bernoulli estimation problem can be viewed as a  natural generalization of the Gap hamming problem \cite{Indyk}\cite{cr12} to the sparse setting,
and is the key component in both the upper and lower bound analysis for the nonparametric estimation problem.
Indeed, we shall explain in Section~\ref{app2} that our nonparametric estimator is based on a linear combination of rectangle kernel estimators, 
which estimate the probability that $X$ and $Y$ fall into  neighborhoods of $x_0$ and $y_0$.
Indicators that samples are within such neighborhoods are Bernoulli variables, so that the biased Bernoulli estimator can be used.
For the lower bound, we shall explain in Section~\ref{sec_1pl} that the nonparametric estimation problem can be reduced to the biased Bernoulli estimation problem via a simulation argument. 

For notational simplicity, we shall use $X,Y$ for the Bernoulli variables in this section as well as Sections~\ref{sec_exchanges}-\ref{app4}, although we should keep in mind that these are not the continuous variables in the original nonparametric estimation problem.

\noindent
{\bf Bernoulli Estimation Problem:}
\begin{itemize}
\item Fixed real numbers $m_1,m_2\in(10,\infty)$, and an unknown $\delta\in[-1,\min\{m_1,m_2\}-1]$.
\item ${\bf (X,Y)}=(X(l),Y(l))_{i=1}^{\infty}$ i.i.d.\ according to the distribution
\begin{align}
&P^{(\delta)}_{XY}:=
\nonumber
\\
&\left(
\begin{array}{cc}
\frac1{m_1m_2}(1+\delta)  & \frac1{m_1}(1-\frac1{m_2})-\frac{\delta}{m_1m_2}      \\
\frac1{m_2}(1-\frac1{m_1})-\frac{\delta}{m_1m_2} 
&     (1-\frac1{m_1})(1-\frac1{m_2})+\frac{\delta}{m_1m_2} 
\end{array}
\right)
\label{e_pdelta}
\end{align}
where we recall our convention that the upper left entry of the matrix denotes the probability that $X=Y=0$.
Alice observes $(X(l))_{l=1}^{\infty}$ and Bob observes $(Y(l))_{l=1}^{\infty}$.
\item Unlimited common randomness $\Pi_0$. 
\end{itemize}
{\bf Goal:} Alice and Bob exchange messages in no more than $r$ rounds in order to estimate $\delta$.

Our algorithm is described as follows:

\noindent
{\bf Input.} $m_1,m_2\in(10,\infty)$; positive integer $n$ and $r$; a sequence of real numbers $\alpha_1,\dots,\alpha_r\in(1,\infty)$ satisfying
\begin{align}
\prod_{1\le i\le r}^{\rm odd}\alpha_i\le \frac{m_1}{10};
\label{e_alpha1}
\\
\prod_{1\le i\le r}^{\rm even}\alpha_i\le \frac{m_2}{10}.
\label{e_alpha2}
\end{align}
The $\alpha_1,\dots,\alpha_r$ can be viewed as parameters of the algorithm,
and controls how much information is revealed about the locations of ``common zeros'' of ${\bf X,Y}$ in each round of communication. 
For example, setting $\alpha_1=\frac{m_1}{10}$ and all other $\alpha_i=1$ yields a one-way communication protocol, 
whereas setting all $\alpha_i>1$ yields a ``successive refinement'' algorithm which may incur smaller communication budget yet convey the same amount of information.

Before describing the algorithm, let us define a conditional distribution $P_{U^r|XY}$ by recursion,
which will be used later in generation random codebooks.
\begin{defn}\label{defn_cond}
For each $i\in\{1,\dots,r\}\setminus 2\mathbb{Z}$,
define
\begin{align}
P_{U_i|X=0,U^{i-1}={\bf0}}&=[1,0];
\label{e40}
\\
P_{U_i|X=1,U^{i-1}={\bf0}}&=[\alpha_i^{-1},1-\alpha_i^{-1}];
\\
P_{U_i|X=0,U^{i-1}\neq {\bf0}}&=P_{U_i|X=1,U^{i-1}\neq {\bf0}}=[0,1].
\label{e42}
\end{align}
Then set $P_{U_i|XYU^{i-1}}=P_{U_i|XU^{i-1}}$.
% that is, there is a Markov chain $U_i-(XU^{i-1})-Y$.
For $i=1,\dots,r$ even, we use similar definitions, but with the roles of $X$ and $Y$ switched.
This specifies $P_{U_i|XYU^{i-1}}$, $i=1,\dots,r$.
\end{defn}
Note that by Definition~\ref{defn_cond}, $U_i=1$ implies $U_{i+1}=1$ for each $i=1,\dots,r-1$.
In words, for $i$ odd,
$U_i$ marks all $X=0$ as $0$,
and marks $X=1$ as either $0$ or $1$;
whenever $U_i=1$ is marked, then $X$ is definitely 1, and will be forgotten in all subsequent rounds. 
Now set
\begin{align}
P_{XYU^r}^{(\delta)}:=P_{U^r|XY}P^{(\delta)}_{XY}.
\label{e_joint}
\end{align}
where $P_{U^r|XY}$ is induced by $(P_{U_i|XYU^{i-1}})_{i=1}^r$ in Definition~\ref{defn_cond}.

\noindent
{\bf Initialization. }
By applying a common function to the common randomness,
Alice and Bob can produce a shared infinite array $(V_{i,j}(l))$, where $i\in\{1,\dots,r\}$, $j\in\{1,2,\dots\}$, $l\in\{1,2,\dots,n\}$, 
such that the entries in the array are independent random variables, with 
$V_{i,j}(l)\sim {\rm Bern}(1-\alpha_i^{-1})$.
Also set
\begin{align}
U_0(l)=0,\quad\forall l=1,\dots,n.
\end{align}

\noindent
{\bf Iterations. }
Consider any $i=1,\dots,r$, where $i$ is odd. 
We want to generate ${\bf U}_i$ by selecting a codeword so that $({\bf X,Y},{\bf U}^i)$
follows the distribution of $(P_{XYU^i}^{(\delta)})^{\otimes n}$, where ${\bf U}^{i-1}$ is defined in previous rounds.
Define
\begin{align}
\mathcal{A}_0&:=\{l\le n\colon X(l)=0,U_{i-1}(l)=0\};
\\
\mathcal{A}_1&:=\{l\le n\colon X(l)=1,U_{i-1}(l)=0\};
\\
\mathcal{A}&:=\{l\le n\colon U_{i-1}(l)=0\}.
\end{align}
Note that Alice knows both $\mathcal{A}_0$ and $\mathcal{A}_1$, while Bob knows $\mathcal{A}$, since it will be seen from the recursion that Alice and Bob both know ${\bf U}_1,\dots,{\bf U}_{i-1}$ at the beginning of the $i$-th round.
Alice chooses $\hat{j}_i$ as the minimum nonnegative integer $j$ such that 
\begin{align}
V_{i,j}(l)=0,\quad\forall l\in\mathcal{A}_0.
\label{e125}
\end{align}
Alice encodes $\hat{j}_i$ using a prefix code, e.g.\,
Elias gamma code \cite{1055349},
 and sends it to Bob.
Then both Alice and Bob compute ${\bf U}_i=(U_i(l))_{l=1}^n
\in\{0,1\}^n$ by 
\begin{align}
U_i(l)&:=V_{(i,\hat{j}_i)}(l),\quad\forall l\in\mathcal{A};
\label{e30}
\\
U_i(l)&:=1,\quad\forall l\in\{1,\dots,n\}\setminus\mathcal{A}.
\label{e31}
\end{align}
The operations in the $i$-th round for even $i$ is similar, with the roles of Alice and Bob reversed.
We will see later that the notation ${\bf U}_i$ is consistent in the sense of \eqref{e_consistent}.

\noindent
{\bf Estimator.}
Recall that in classical parametric statistics, 
one can evaluate the score function at the sample,
compute its expectation and variation, 
and construct an estimator achieving the 
Cramer-Rao bound asymptotically.
Now for $i\in\{1,\dots,r\}\setminus 2\mathbb{Z}$,
define the score function
\begin{align}
\Gamma_i(u^i,y)
&:=\frac{\partial}{\partial\delta}\left.\ln P^{(\delta)}_{U_i|YU^{i-1}}(u_i|y,u^{i-1})\right|_{\delta=0}
\label{e_gamma}
\\
&=\left\{
\begin{array}{ccc}
\frac{\partial}{\partial\delta}\left.\ln P^{(\delta)}_{U_i|YU^{i-1}}(u_i|{y,\bf 0})\right|_{\delta=0}  &   \textrm{if }u^{i-1}={\bf 0}  \\
0 &    \textrm{otherwise}
\end{array}
\right.
\label{e_gammab}
\end{align}
where $P^{(\delta)}_{U_i|YU^{i-1}}$ is induced by $P_{XYU^r}^{(\delta)}$.
For $i\in\{1,\dots,r\}\cup 2\mathbb{Z}$,
define $\Gamma_i(u^i,x)$ similarly with the roles of $X$ and $Y$ reversed.
Alice and Bob can each compute
\begin{align}
\Gamma^{\rm A}:=\sum_{1\le i\le r}^{\rm even}\sum_{l=1}^n\Gamma_i(U^i(l),X(l))
\end{align}
and
\begin{align}
\Gamma^{\rm B}:=\sum_{1\le i\le r}^{\rm odd}\sum_{l=1}^n\Gamma_i(U^i(l),Y(l))
\end{align}
respectively.
Finally, Alice's and Bob's estimators are given by
\begin{align}
\hat{\delta}^{\rm A}&:=\Gamma^{\rm A}\cdot\left(\partial_{\delta}\mathbb{E}^{(\delta)}[\Gamma^{\rm A}]\right)^{-1};
\\
\hat{\delta}^{\rm B}&:=\Gamma^{\rm B}\cdot\left(\partial_{\delta}\mathbb{E}^{(\delta)}[\Gamma^{\rm B}]\right)^{-1},
\label{e_deltab}
\end{align}
where $\mathbb{E}^{(\delta)}$ refers to expectation when the true parameter is $\delta$, and $\partial_{\delta}$ denotes the derivative in $\delta$.
We will show that these estimators are well-defined: 
$\partial_{\delta}\mathbb{E}^{(\delta)}[\Gamma^{\rm A}]$ and $\partial_{\delta}\mathbb{E}^{(\delta)}[\Gamma^{\rm B}]$ are independent of $\delta$ (Lemma~\ref{lem_linear}), and can be computed by Alice and Bob without knowing $\delta$.
\begin{lem}
For each $i\in\{1,\dots,r\}\setminus2\mathbb{Z}$, conditioned on ${\bf X, Y},{\bf U}_1,\dots, {\bf U}_{i-1}$, we have 
\begin{align}
{\bf U}_i\sim P^{\otimes n}_{U_i|XU^{i-1}}(\cdot|{\bf X},{\bf U}_1,\dots,{\bf U}_{i-1}),
\label{e_consistent}
\end{align}
where $P_{U_i|XU^{i-1}}$ is as defined in
\eqref{e40}-\eqref{e42}.
A similar relation holds for even $i$.
\end{lem}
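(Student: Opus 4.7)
The plan is to verify the lemma by computing the conditional distribution of ${\bf U}_i$ given $({\bf X},{\bf Y},{\bf U}^{i-1})$ under the algorithm and checking that it agrees with the product of single-letter conditionals from Definition~\ref{defn_cond}. Fix an odd $i$ and condition on $({\bf X},{\bf Y},{\bf U}^{i-1})$; the sets $\mathcal{A}_0,\mathcal{A}_1,\mathcal{A}$ are then deterministic. Because the array $(V_{i,j}(l))$ is built from common randomness that is independent of $({\bf X},{\bf Y})$ and of the randomness generating ${\bf U}^{i-1}$, its conditional law under this conditioning equals its unconditional law, i.e.\ independent ${\rm Bern}(1-\alpha_i^{-1})$ entries.

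Next I would treat three disjoint cases for the index $l$. For $l\notin\mathcal{A}$ we have $U_{i-1}(l)=1$, so $U^{i-1}(l)\neq{\bf 0}$, and \eqref{e31} sets $U_i(l)=1$, matching the target conditional in \eqref{e42}. For $l\in\mathcal{A}_0$ (so $X(l)=0$ and $U^{i-1}(l)={\bf 0}$), the defining property of $\hat{j}_i$ in \eqref{e125} forces $V_{(i,\hat{j}_i)}(l)=0$, hence $U_i(l)=0$, matching \eqref{e40}.

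The substantive case is $l\in\mathcal{A}_1$, where I must show that $(V_{(i,\hat{j}_i)}(l))_{l\in\mathcal{A}_1}$ is i.i.d.\ ${\rm Bern}(1-\alpha_i^{-1})$ and, together with the values on the other two subsets, yields the announced product distribution. First, $\hat{j}_i$ is almost surely finite because the independent events $\{V_{i,j}(l)=0,\ \forall l\in\mathcal{A}_0\}$ each occur with positive probability $\alpha_i^{-|\mathcal{A}_0|}$. The key structural observation is that $\hat{j}_i$ is measurable with respect to $\sigma\{V_{i,j}(l):l\in\mathcal{A}_0,\,j\ge 0\}$, which is independent of $\sigma\{V_{i,j}(l):l\in\mathcal{A}_1,\,j\ge 0\}$. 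Hence for any realized value $j^*$ of $\hat{j}_i$, conditioning on $\{\hat{j}_i=j^*\}$ does not affect the joint law of $(V_{i,j^*}(l))_{l\in\mathcal{A}_1}$, which therefore remains i.i.d.\ ${\rm Bern}(1-\alpha_i^{-1})$. Integrating over the marginal of $\hat{j}_i$ and combining with the deterministic outputs on $\mathcal{A}_0$ and $\mathcal{A}^c$ produces the product form asserted in \eqref{e_consistent}. The even-$i$ case is identical with the roles of Alice and Bob (equivalently $X$ and $Y$) swapped.

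The main obstacle is exactly this independence argument in the $\mathcal{A}_1$ case: $\hat{j}_i$ is a random index selected by a data-dependent stopping rule, and one must justify that this selection does not bias the ``unread'' entries $(V_{i,\hat{j}_i}(l))_{l\in\mathcal{A}_1}$. The resolution hinges on the product structure of the codebook over both axes $j$ and $l$, together with the fact that the rule \eqref{e125} inspects only columns indexed by $\mathcal{A}_0$; once this is in hand, the rest of the proof is bookkeeping across the three deterministic subsets of indices.
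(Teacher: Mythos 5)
Your proof is correct and takes the same approach as the paper, which simply states the lemma is ``Immediate from \eqref{e30}--\eqref{e31}''; you have unpacked that one-liner. In particular, you correctly isolate the only non-trivial point---that the selection rule for $\hat{j}_i$ inspects only the columns of $(V_{i,\cdot}(\cdot))$ indexed by $\mathcal{A}_0$, so it is independent of the entries over $\mathcal{A}_1$ and does not bias them---and your three-way case split over $\mathcal{A}_0$, $\mathcal{A}_1$, $\mathcal{A}^c$ (using that a $1$ in $U_{i-1}(l)$ is equivalent to $U^{i-1}(l)\neq\mathbf{0}$) matches the product law exactly.
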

\begin{proof}
Immediate from \eqref{e30}-\eqref{e31}.
\end{proof}
\begin{lem}\label{lem_linear}
$\mathbb{E}^{(\delta)}[\Gamma^{\rm A}]$ and $\mathbb{E}^{(\delta)}[\Gamma^{\rm B}]$ are linear in $\delta$.
\end{lem}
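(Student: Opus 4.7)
The plan is to reduce the claim to the observation that every $\delta$-dependence in the joint law enters through the single factor $P^{(\delta)}_{XY}$, which is itself affine in $\delta$. First, by inspection of \eqref{e_gamma}--\eqref{e_gammab}, each $\Gamma_i$ is a deterministic function of its arguments defined as a derivative evaluated at $\delta = 0$, so $\Gamma_i$ itself carries no hidden $\delta$-dependence and may be treated as a fixed test function.

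Second, I would iterate the consistency relation \eqref{e_consistent} through all $r$ rounds to conclude that, conditionally on $({\bf X}, {\bf Y})$, the full array ${\bf U}^r$ is a product across samples, with each slice $(X(l), Y(l), U^r(l))$ distributed according to the single-letter law $P^{(\delta)}_{XYU^r}$ of \eqref{e_joint}. Consequently,
\begin{align*}
\mathbb{E}^{(\delta)}[\Gamma^{\rm B}] = \sum_{1 \le i \le r}^{\rm odd} \sum_{l=1}^n \sum_{u^i, y} \Gamma_i(u^i, y)\, P^{(\delta)}_{U^i Y}(u^i, y),
\end{align*}
with a symmetric expression for $\mathbb{E}^{(\delta)}[\Gamma^{\rm A}]$ obtained by swapping the roles of $X$ and $Y$ and restricting to even $i$.

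Third, I would factor
\begin{align*}
P^{(\delta)}_{U^i Y}(u^i, y) = \sum_{x} P_{U^i|XY}(u^i \mid x, y)\, P^{(\delta)}_{XY}(x, y).
\end{align*}
By Definition~\ref{defn_cond} the conditional $P_{U^i|XY}$ involves only the parameters $\alpha_1,\dots,\alpha_i$ and is free of $\delta$, while the matrix \eqref{e_pdelta} shows that every entry of $P^{(\delta)}_{XY}$ is affine in $\delta$. Hence $P^{(\delta)}_{U^i Y}$ is affine, and linearity of the finite sum over $(i, l, u^i, y)$ delivers the claim for both $\mathbb{E}^{(\delta)}[\Gamma^{\rm A}]$ and $\mathbb{E}^{(\delta)}[\Gamma^{\rm B}]$.

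The only conceptual subtlety I foresee is step two: because Alice's codeword index $\hat{j}_i$ in \eqref{e125} is chosen by scanning a shared codebook, the variables $\{U_i(l)\}_l$ are not manifestly independent across $l$ at first glance, and the codebook construction couples information across all $n$ samples. One must therefore genuinely invoke the consistency lemma (the assertion preceding the proposition, proved from \eqref{e30}--\eqref{e31}) to see that after marginalizing out the codebooks, the joint $(X(l), Y(l), U^r(l))$ collapses to the clean single-letter law $P^{(\delta)}_{XYU^r}$. Once this is in hand, the linearity in $\delta$ is purely algebraic and requires no further estimates.
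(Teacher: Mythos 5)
Your proof is correct and follows essentially the same route as the paper: it invokes the consistency relation \eqref{e_consistent} to reduce to the single-letter law $P^{(\delta)}_{XY}P_{U^r|XY}$, and then observes that $P^{(\delta)}_{XY}$ is affine in $\delta$ while $P_{U^r|XY}$ and the evaluated score functions $\Gamma_i$ are $\delta$-free. The paper compresses this into the single display \eqref{e_48} and the remark "the claims then follow," whereas you spell out the intermediate factorization and flag the codebook-coupling subtlety; the underlying argument is the same.
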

\begin{proof}
By \eqref{e_consistent},
\begin{align}
&\quad\mathbb{E}^{(\delta)}[\Gamma_i(U^i(l)),Y(l)]
\\
&=\sum_{u^r,x,y}\Gamma_i(u^i,y)P^{(\delta)}_{XY}(x,y)P_{U^r|XY}(u^r|x,y)
\label{e_48}
\end{align}
for each $i$ odd and $l\in\{1,\dots,n\}$, and similar expressions hold for $i$ even.
The claims then follow.
\end{proof}
\begin{thm}\label{thm6}
$\hat{\delta}^{\rm A}$ and $\hat{\delta}^{\rm B}$ are unbiased estimators.
\end{thm}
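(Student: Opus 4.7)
The plan is to observe that the unbiasedness of $\hat{\delta}^{\rm A}=\Gamma^{\rm A}\cdot c_{\rm A}^{-1}$, where $c_{\rm A}:=\partial_{\delta}\mathbb{E}^{(\delta)}[\Gamma^{\rm A}]$ is a constant by Lemma~\ref{lem_linear}, is equivalent via the affine representation $\mathbb{E}^{(\delta)}[\Gamma^{\rm A}]=c_{\rm A}\delta+\mathbb{E}^{(0)}[\Gamma^{\rm A}]$ to the vanishing of the intercept $\mathbb{E}^{(0)}[\Gamma^{\rm A}]=0$. Thus the entire proof reduces to checking that the expected score at $\delta=0$ is zero (and symmetrically for $\Gamma^{\rm B}$), which is the classical Fisher/score identity adapted to our setting.

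To establish $\mathbb{E}^{(0)}[\Gamma^{\rm A}]=0$, I would treat each summand $\mathbb{E}^{(0)}[\Gamma_i({\bf U}^i(l),X(l))]$ separately for $i$ even and $l\in\{1,\dots,n\}$. Conditioning on $(X(l),U^{i-1}(l))$, and using \eqref{e_consistent} (so that the per-coordinate joint law of $({\bf X},{\bf Y},{\bf U}^r)$ is exactly $P^{(\delta)}_{XYU^r}$), the inner expectation collapses to the standard score identity
\begin{align*}
\mathbb{E}^{(0)}[\Gamma_i(U^i,X)\mid X,U^{i-1}]
=\sum_{u_i}\partial_{\delta}P^{(\delta)}_{U_i|XU^{i-1}}(u_i\mid X,U^{i-1})\Big|_{\delta=0}
=\partial_{\delta}(1)\Big|_{\delta=0}=0,
\end{align*}
in which the denominator of the log-derivative is cancelled by the $P^{(0)}$-weighting. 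The argument for $\mathbb{E}^{(0)}[\Gamma^{\rm B}]=0$ is identical after swapping the roles of $X$ and $Y$ and restricting to odd $i$. Summing over $i$ and $l$ gives $\mathbb{E}^{(0)}[\Gamma^{\rm A}]=\mathbb{E}^{(0)}[\Gamma^{\rm B}]=0$ as required.

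The only bookkeeping point is the piecewise definition \eqref{e_gammab}: when $u^{i-1}\neq{\bf 0}$, the definition sets $\Gamma_i=0$. This is consistent with the raw log-derivative form because, by Definition~\ref{defn_cond}, the conditional law $P^{(\delta)}_{U_i|\cdot U^{i-1}}(\cdot\mid\cdot,u^{i-1})$ is the Dirac mass at $1$ irrespective of $\delta$, so $\partial_{\delta}\ln 1=0$ at the realized $u_i=1$ and this branch contributes nothing to the score sum. The main obstacle I foresee is therefore not conceptual but organizational: one must appeal to \eqref{e_consistent} round-by-round in order to justify replacing the random-codebook mechanism by the clean $\delta$-parameterized kernel $P^{(\delta)}_{U_i|XU^{i-1}}$ (and similarly $P^{(\delta)}_{U_i|YU^{i-1}}$) before invoking the score identity; once that replacement is in place, the argument is a one-line differentiation-under-the-sum.
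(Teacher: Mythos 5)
Your proof is correct and follows essentially the same route as the paper's: reduce unbiasedness via Lemma~\ref{lem_linear} to showing the intercept $\mathbb{E}^{(0)}[\Gamma^{\rm A}]=\mathbb{E}^{(0)}[\Gamma^{\rm B}]=0$, and establish this via the classical score identity applied to each summand. The only cosmetic difference is that you condition on $(X(l),U^{i-1}(l))$ before summing over $u_i$, whereas the paper sums directly over $(u^r,x,y)$ using \eqref{e_48}; both are the same computation made possible by \eqref{e_consistent}.
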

\begin{proof}
For $i$ odd, by \eqref{e_gamma} we have
$\sum_{u_i}\Gamma_i(u^i,y)P^{(0)}_{U_i|YU^{i-1}}(u_i|y,u^{i-1})=0$ for any $(y,u^{i-1})$.
Then $\mathbb{E}^{(0)}[\Gamma_i(U^i(l)),Y(l)]=0$ follows from \eqref{e_48}.
It follows that $\mathbb{E}^{(0)}[\hat{\delta}^{\rm A}]=\mathbb{E}^{(0)}[\hat{\delta}^{\rm B}]=0$, and unbiasedness is implied by Lemma~\ref{lem_linear}.
\end{proof}

\section{Bounds on Information Exchanges}\label{sec_exchanges}
In this section we prove key auxiliary results that will be used in the upper bounds.

\subsection{General $(\alpha_i)$}
The following Theorem is crucial for the achievability part of the analysis of the Bernoulli estimation problem described in Section~\ref{sec_ipu} (and hence for the nonparametric estimation problem).
Specifically, \eqref{e43}-\eqref{e44} bounds the communication from Alice to Bob and in reverse, and \eqref{e47}-\eqref{e48} bounds the information exchanged which, in turn, will bound the estimation risk via Fano's inequality.
\begin{thm}\label{thm_s}
Consider any $m_1,m_2>10$,
$\alpha_1,\dots,\alpha_r\in(1,\infty)$ satisfying \eqref{e_alpha1}-\eqref{e_alpha2},
and $P_{U^rXY}^{(\delta)}$ as in \eqref{e_joint}.
We have
\begin{align}
\sum_{1\le i\le r}^{\rm odd}P^{(0)}_{XU^{i-1}}(0,{\bf 0})\log\alpha_i
&\le 
\frac{1.1}{m_1} \sum_{1\le i\le r}^{\rm odd}\log\alpha_i\prod_{2\le j\le i-1}^{\rm even}\alpha_j^{-1};
\label{e43}
\\
\sum_{1\le i\le r}^{\rm even}P^{(0)}_{YU^{i-1}}(0,{\bf 0})\log\alpha_i
&\le
\frac{1.1}{m_2} 
\sum_{1\le i\le r}^{\rm even}\log\alpha_i\prod_{1\le j\le i-1}^{\rm odd}\alpha_j^{-1},
\label{e44}
\end{align}
and assuming the natural base of logarithms,
\begin{align}
\lim_{\delta\to0}\delta^{-2}\sum_{1\le i\le r}^{\rm odd}I(U_i;Y|U^{i-1})
&\ge
\frac1{5m_1^2m_2}
\prod_{1\le j\le r}^{\rm odd}\alpha_j;
\label{e47}
\\
\lim_{\delta\to0}\delta^{-2}
\sum_{1\le i\le r}^{\rm even}I(U_i;X|U^{i-1})
&\ge
\frac1{5m_1m_2^2}
\prod_{1\le j\le r}^{\rm even}\alpha_j.
\label{e48}
\end{align}
\end{thm}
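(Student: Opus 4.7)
The four inequalities split into two groups: (\ref{e43})--(\ref{e44}) follow from a direct computation of the marginal probability $P^{(0)}(X=0, U^{i-1}=\mathbf{0})$, while (\ref{e47})--(\ref{e48}) require a second-order Taylor expansion of mutual information in $\delta$ combined with a telescoping sum over odd indices.

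For (\ref{e43}), the key observation is that under Definition~\ref{defn_cond} the conditional probability factorizes as $P^{(0)}(U^{i-1}=\mathbf{0}\mid X,Y) = A_i^X B_i^Y$, where I abbreviate $A_i := \prod_{1\le j\le i-1}^{\rm odd}\alpha_j^{-1}$ and $B_i := \prod_{2\le j\le i-1}^{\rm even}\alpha_j^{-1}$. Marginalizing over $y\in\{0,1\}$ with $X=0$ fixed gives $P^{(0)}(X=0, U^{i-1}=\mathbf{0}) = m_1^{-1}[m_2^{-1} + (1-m_2^{-1})B_i]$, and since (\ref{e_alpha2}) yields $B_i^{-1}\le m_2/10$, the bracket is at most $1.1\,B_i$. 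Multiplying by $\log\alpha_i$ and summing over odd $i$ gives (\ref{e43}); (\ref{e44}) is symmetric by swapping the roles of $X$ and $Y$.

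For (\ref{e47}) the plan relies on three facts: (i) since $U_i\equiv 1$ on $\{U^{i-1}\neq\mathbf{0}\}$, we have $I(U_i;Y|U^{i-1}) = P^{(\delta)}(U^{i-1}=\mathbf{0})\,I(U_i;Y|U^{i-1}=\mathbf{0})$; (ii) the factorization $A_i^X B_i^Y$ implies that at $\delta=0$ conditioning on $\{U^{i-1}=\mathbf{0}\}$ preserves $X\perp Y$, so since $U_i$ depends on $X$ alone given $\mathbf{0}$ we have $U_i \perp Y$ at $\delta=0$ and the mutual information is $O(\delta^2)$; and (iii) for a $2\times 2$ perturbation of a product distribution by off-diagonal covariance $C$, the Taylor expansion of KL gives $D = C^2/[2\,P_U(0)P_U(1)P_Y(0)P_Y(1)] + o(C^2)$. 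Since $U_i = \mathbf{1}[X=1]\,W$ for a Bernoulli $W \sim {\rm Bern}(1-\alpha_i^{-1})$ independent of $(X,Y)$, the relevant covariance factors as $C_i = (1-\alpha_i^{-1})\,{\rm Cov}^{(\delta)}(\mathbf{1}[X\!=\!1],\mathbf{1}[Y\!=\!1]\mid\mathbf{0})$. Differentiating once and using the identity $P^{(\delta)}_{XY}(x,y) - p_X(x)p_Y(y) = (-1)^{x+y}\delta/(m_1 m_2)$ yields $\partial_\delta C_i|_{\delta=0} = (1-\alpha_i^{-1})A_iB_i/(m_1 m_2 Z_i^2)$ with $Z_i := (m_1^{-1}+(1-m_1^{-1})A_i)(m_2^{-1}+(1-m_2^{-1})B_i)$. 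Substituting into the Taylor formula and simplifying the binary marginals of $U_i$ and $Y$ given $\mathbf{0}$ produces a lower bound of the form $(1-\alpha_i^{-1})(m_1^2 m_2)^{-1}\prod_{1\le j\le i}^{\rm odd}\alpha_j$, up to an absolute constant. Writing $(1-\alpha_i^{-1})\prod_{1\le j\le i}^{\rm odd}\alpha_j = (\alpha_i-1)\prod_{1\le j\le i-2}^{\rm odd}\alpha_j$ makes the sum over odd $i\le r$ telescope to $\prod_{1\le j\le r}^{\rm odd}\alpha_j - 1$, which (after collecting constants) yields (\ref{e47}); (\ref{e48}) is symmetric.

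The main obstacle is arithmetic bookkeeping rather than anything conceptual: the limit $\lim_{\delta\to 0}\delta^{-2}I(U_i;Y|U^{i-1})$ is a ratio of several products of $(m_1^{-1}, m_2^{-1}, A_i, B_i, \alpha_i)$, and the telescoping argument only kicks in after each of the three factors $m_1^{-1}+(1-m_1^{-1})A_i$, $m_2^{-1}+(1-m_2^{-1})B_i$, and $m_1^{-1}+(1-m_1^{-1})A_i/\alpha_i$ is bounded above by $1.1$ times $A_i$, $B_i$, and $A_i/\alpha_i$ respectively (which requires $A_i,\,A_i/\alpha_i \ge 10/m_1$ and $B_i \ge 10/m_2$, both consequences of (\ref{e_alpha1})--(\ref{e_alpha2})). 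Keeping track of these $1.1$ factors, together with the $2$ from the Taylor expansion, is what produces the stated constant $1/5$.
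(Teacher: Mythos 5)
Your argument for \eqref{e43}--\eqref{e44} is precisely the paper's: both derive $P^{(0)}_{XU^{i-1}}(0,\mathbf{0}) = m_1^{-1}\bigl[m_2^{-1} + (1-m_2^{-1})B_i\bigr]$ from the factorization $P^{(0)}(U^{i-1}=\mathbf{0}\mid X,Y)=A_i^X B_i^Y$ and then use $B_i \ge 10/m_2$ to absorb the bracket into $1.1\,B_i$. For \eqref{e47}--\eqref{e48} your overall skeleton also matches the paper: isolate the event $U^{i-1}=\mathbf{0}$, exploit that $X\perp Y$ is preserved there at $\delta=0$, Taylor-expand the per-round mutual information to second order in $\delta$, then telescope $(\alpha_i-1)\prod_{j<i}^{\rm odd}\alpha_j$. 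Where you diverge is purely in \emph{how} the second-order coefficient is extracted: the paper introduces the conditional odds-ratio parameter $\delta_i$ and expands $I(U_i;Y\mid U^{i-1}=\mathbf{0})$ as a weighted sum of two binary KL divergences $a\,d(\cdot\|\cdot)+\bar a\,d(\cdot\|\cdot)$, while you parametrize the perturbation by the conditional covariance $C_i$ and invoke the single $\chi^2$-type formula $I\approx C^2/\bigl(2\,P_{U_i}(0)P_{U_i}(1)P_Y(0)P_Y(1)\bigr)$. The two routes give the same leading coefficient (your identity $\partial_\delta C_i|_{\delta=0}=(1-\alpha_i^{-1})A_iB_i/(m_1m_2Z_i^2)$ is consistent with the paper's expansion of $\delta_i$), and your way of tracking the three $1.1$-slack factors $m_1^{-1}+(1-m_1^{-1})A_i$, $m_2^{-1}+(1-m_2^{-1})B_i$, $m_1^{-1}+(1-m_1^{-1})A_i/\alpha_i$ is exactly the combinatorics the paper's $0.9^k$ factors encode. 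The $\chi^2$/covariance packaging is arguably cleaner for a reader who does not want to carry the odds-ratio algebra, at the cost of having to justify that the $\delta$-dependence of the conditional marginals contributes only $O(\delta^3)$; the paper's binary-divergence route avoids that justification by expanding each marginal conditional explicitly. Both are legitimate. One small caveat you share with the paper: the telescoping sum equals $\prod_{j\le r}^{\rm odd}\alpha_j-1$, not $\prod_{j\le r}^{\rm odd}\alpha_j$, so the constant $1/5$ uses the slack $0.9^8/2>1/5$ to absorb the $-1$; this is implicit in both accounts and is worth a sentence in a final write-up.
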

The proof can be found in Appendix~\ref{app_thm_s}.

\begin{rem}\label{rem3}
Since
\begin{align}
&\quad P^{(0)}_{XU^{i-1}}(0,{\bf 0})\log\alpha_i
\nonumber\\
&=
P^{(0)}_{XU^{i-1}}(0,{\bf 0})D(P_{U_i|X=0,U^{i-1}={\bf0}}\|P_{U_i|X=1,U^{i-1}={\bf0}})
\\
&\ge P^{(0)}_{U^{i-1}}({\bf 0})\inf_Q\left[
P^{(0)}_{X|U^{i-1}}(0|{\bf 0})
D(P_{U_i|X=0,U^{i-1}={\bf0}}\|Q)
\right.
\nonumber\\
&\quad+\left.
P^{(0)}_{X|U^{i-1}}(1|{\bf 0})
D(P_{U_i|X=1,U^{i-1}={\bf0}}\|Q)
\right]
\\
&=P^{(0)}_{U^{i-1}}({\bf 0})I(U_i;X|U^{i-1}={\bf0})
\\
&=I(U_i;X|U^{i-1}),
\end{align}
Theorem~\ref{thm_s} also implies the following bound on the external information
(see \eqref{e_extrinsic}):
\begin{align}
I(U^r;XY)&\le 
%\frac{1.1}{m} \sum_{1\le i\le r}\log\alpha_i\prod_{k\in\{2,\dots,i\}\cap2\mathbb{Z}}\alpha_{i-k+1}^{-1}.
\frac{1.1}{m_1} 
\sum_{1\le i\le r}^{\rm odd}\log\alpha_i\prod_{2\le j\le i-1}^{\rm even}\alpha_j^{-1}
\nonumber\\
&\quad+
\frac{1.1}{m_2} 
\sum_{1\le i\le r}^{\rm even}\log\alpha_i\prod_{1\le j\le i-1}^{\rm odd}\alpha_j^{-1}.
\end{align}
\end{rem}

\begin{rem}
Let us provide some intuition why interaction helps, assuming the case of $m_1=m_2=m$ for simplicity.
From the proof of Theorem~\ref{thm_s}, it can be seen that up to a constant factor, $s_{\infty}^*(X;Y)$ is at least $\frac{\delta^2}{m^3}\int_0^{\ln\frac{m}{100}}e^t{\rm d}t\left(\frac1{m}\int_0^{\ln\frac{m}{100}}e^{-t}{\rm d}t\right)^{-1}\sim\frac{\delta^2}{m}$ 
(which is in fact sharp as will be seen from the upper bound on $s_{\infty}^*(X;Y)$ in  Theorem~\ref{thm7}).
Moreover, lower bounds on $s_r^*(X;Y)$ can be computed by replacing the integrals with discrete sums with $r$ terms: 
\begin{align}
\frac{\delta^2}{m^3}\sum_{i=1}^{\lceil r/2\rceil}(e^{t_i}-e^{t_{i-1}})\left(\frac1{m}\sum_{i=1}^{\lceil r/2\rceil}e^{-t_{i-1}}(t_i-t_{i-1})\right)^{-1}
\end{align}
where $1=t_0<t_1<\dots<t_{\lceil r/2\rceil}=\ln\frac{m}{100}$.
In particular, when $r=1$, we recover $s_1^*(X;Y)\sim\frac{\delta^2}{m\ln m}$, whereas choosing $t_i-t_{i-1}=1$, $i=1,\dots,\lceil r/2\rceil$ shows that $r\sim \ln m$ achieves $s_r^*(X;Y)\sim \frac{\delta^2}{m}$.
Even better, later we will take $t_k$ as the $k$-th iterated power of $2$, and then $r$ will be the super logarithm of $m$. 
\end{rem}
Recall that $(\alpha_i)$ control the amount of information revealed in each round and  serve as hyperparameters of the algorithm to be tuned.
Next we shall explain how to select the value of $(\alpha_i)$ so that the optimal performance is achieved in the one-way and interactive settings.

\subsection{$r=1$ Case}
Specializing Theorem~\ref{thm_s} we obtain:
\begin{cor}\label{cor_4}
For any $m_1,m_2>10$,
with $r=1$ and $\alpha_1=\frac{m_1}{10}$ we have
\begin{align}
P^{(0)}_X(0)\log\alpha_1
&\le \frac{1.1}{m_1}\log\frac{m_1}{10};
\\
\lim_{\delta\to0}\delta^{-2}I(U_1;Y)
&\ge \frac1{50m_1m_2}.
\end{align}
\end{cor}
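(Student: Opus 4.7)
The corollary is a direct specialization of Theorem~\ref{thm_s} with $r=1$ and $\alpha_1 = m_1/10$, so the plan is simply to check that the hypotheses hold and then to collapse the sums and products in \eqref{e43} and \eqref{e47} to a single term.

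First I would verify the constraints \eqref{e_alpha1}--\eqref{e_alpha2}. Constraint \eqref{e_alpha1} reads $\prod_{1\le i\le 1}^{\rm odd}\alpha_i = \alpha_1 = m_1/10 \le m_1/10$, which holds with equality, and \eqref{e_alpha2} reads $\prod_{2\le i\le 1}^{\rm even}\alpha_i \le m_2/10$, where the empty product is $1 < m_2/10$ by the standing assumption $m_2>10$.

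Next I would specialize \eqref{e43}. The odd index set $\{1,\dots,r\}\setminus 2\mathbb{Z}$ reduces to $\{1\}$. For $i=1$ we have $U^{i-1}=U^0$ (the empty tuple) so $P^{(0)}_{XU^{i-1}}(0,\mathbf{0}) = P^{(0)}_X(0)$, and the inner product $\prod_{2\le j\le 0}^{\rm even}\alpha_j^{-1}$ is empty and equals $1$. Hence \eqref{e43} becomes
\begin{equation*}
P^{(0)}_X(0)\log\alpha_1 \;\le\; \frac{1.1}{m_1}\log\alpha_1 \;=\; \frac{1.1}{m_1}\log\frac{m_1}{10},
\end{equation*}
which is the first claim.

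Finally I would specialize \eqref{e47}: the sum on the left collapses to the single term $I(U_1;Y)$, and the product $\prod_{1\le j\le 1}^{\rm odd}\alpha_j$ equals $\alpha_1 = m_1/10$, yielding
\begin{equation*}
\lim_{\delta\to 0}\delta^{-2}I(U_1;Y) \;\ge\; \frac{1}{5m_1^2m_2}\cdot\frac{m_1}{10} \;=\; \frac{1}{50 m_1 m_2},
\end{equation*}
which is the second claim. There is no real obstacle here; the only thing to be careful about is the bookkeeping for empty products and the interpretation of $U^0$, both of which are routine.
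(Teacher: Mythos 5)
Your proof is correct and matches the paper's own treatment, which simply states the corollary as a direct specialization of Theorem~\ref{thm_s} with $r=1$ and $\alpha_1=m_1/10$. The bookkeeping you carry out (checking \eqref{e_alpha1}--\eqref{e_alpha2}, collapsing the sums and empty products, interpreting $U^0$ as the empty tuple) is exactly what the paper leaves implicit.
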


\subsection{$r=\infty$ Case}\label{sec_rinfty}
Denote by ${}^n2$ the $n$-th tetration of 2, which is defined recursively by ${}^02=1$ and
\begin{align}
{}^n2:=2^{\left({}^{(n-1)}2\right)}, \quad \forall n\ge 1.
\end{align}
Let $m:=\min\{m_1,m_2\}$, and let $r_0$ be the minmum integer such that
\begin{align}
\exp_e({}^{r_0}2-1)\ge \frac{m}{10}.
\label{e33}
\end{align}
For $m>10$ we have $r_0\ge1$.
Then we set
\begin{align}
r&:=2r_0;
\label{e112}
\\
\alpha_{2k-1}&:=\alpha_{2k}:=\exp_e({}^k2-{}^{(k-1)}2),\quad\forall k\in\{1,\dots,r_0-1\};
\\
\alpha_{2r_0-1}&:=\alpha_{2r_0}=\frac{m}{10}\exp_e(1-{}^{(r_0-1)}2),
\label{e114}
\end{align}
which fulfills $\alpha_i>1$.
We see that
\begin{align}
&\quad\sum_{1\le i\le r}^{\rm odd}\ln\alpha_i\prod_{2\le j\le i-1}^{\rm even}\alpha_j^{-1}
\nonumber\\
&\le\sum_{k=1}^{r_0}\left({}^k2-{}^{(k-1)}2\right)\exp_e\left(1-{}^{(k-1)}2\right)
\label{e115}
\\
&\le e\sum_{k=1}^{\infty}{}^k2\exp_e\left(-{}^{(k-1)}2\right)
\\
&=e\sum_{k=1}^{\infty}\exp_e\left(-(1-\log2)\cdot{}^{(k-1)}2\right)
\\
&< 5.
\end{align}
The first inequality above follows by $\alpha_{r-1}=\frac{m}{10}\exp_e(1-{}^{(r_0-1)}2)\le \exp_e({}^{r_0}2-{}^{(r_0-1)}2)$.
Similarly we also have $\sum_{1\le i\le r}^{\rm even}\ln\alpha_i\prod_{1\le j\le i-1}^{\rm odd}\alpha_j^{-1}<5$.
Moreover,
\begin{align} 
\prod_{1\le j\le r}^{\rm odd}\alpha_j
=
\prod_{1\le j\le r}^{\rm even}\alpha_j
=\frac{m}{10}.
\end{align}
Summarizing, we have

\begin{cor}\label{cor4}
Consider $m_1,m_2>10$, $m:=\min\{m_1,m_2\}$, and $(\alpha_i)$ defined in \eqref{e112}-\eqref{e114}.
We have
\begin{align}
\sum_{1\le i\le r}^{\rm odd}P^{(0)}_{XU^{i-1}}(0,{\bf 0})\ln\alpha_i
&\le \frac{6}{m_1};
\\
\sum_{1\le i\le r}^{\rm even}P^{(0)}_{YU^{i-1}}(0,{\bf 0})\ln\alpha_i
&\le \frac{6}{m_2};
\\
\lim_{\delta\to0}\delta^{-2}\sum_{1\le i\le r}^{\rm odd}I(U_i;Y|U^{i-1})
&\ge \frac{m}{50m_1^2m_2};
\\
\lim_{\delta\to0}\delta^{-2}\sum_{1\le i\le r}^{\rm even}I(U_i;X|U^{i-1})
&\ge \frac{m}{50m_1m_2^2}.
\end{align}
where $r=2r_0$ and $r_0$ is defined in \eqref{e33}.
\end{cor}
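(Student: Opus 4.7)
The plan is to obtain Corollary \ref{cor4} as a direct specialization of Theorem \ref{thm_s} to the tetrationally-grown sequence $(\alpha_i)$ defined in \eqref{e112}-\eqref{e114}. Thus the only work left is to evaluate two types of quantities on this particular sequence: the telescoping products $\prod_{1\le j\le r}^{\rm odd}\alpha_j$ and $\prod_{1\le j\le r}^{\rm even}\alpha_j$, which appear on the right-hand sides of \eqref{e47}-\eqref{e48}; and the weighted sums $\sum_{1\le i\le r}^{\rm odd}\ln\alpha_i\prod_{2\le j\le i-1}^{\rm even}\alpha_j^{-1}$ (and its even analogue), which appear on the right-hand sides of \eqref{e43}-\eqref{e44}.

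First I would handle the products. Because $\alpha_{2k-1}=\alpha_{2k}=\exp_e({}^k2-{}^{(k-1)}2)$ for $k<r_0$ and $\alpha_{2r_0-1}=\alpha_{2r_0}=\frac{m}{10}\exp_e(1-{}^{(r_0-1)}2)$, the odd-indexed product telescopes to $\frac{m}{10}\exp_e({}^{r_0}2-{}^{(r_0-1)}2+\dots+{}^12-{}^02+1-{}^{(r_0-1)}2)$; after cancellation, this collapses to $\frac{m}{10}$, and the even-indexed product yields the same value. Substituting into \eqref{e47}-\eqref{e48} produces the two lower bounds on the mutual-information sums stated in the corollary.

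Next I would bound the weighted sums. The crucial point is that the tetration construction makes the reciprocal weights $\prod_{2\le j\le i-1}^{\rm even}\alpha_j^{-1}$ decay at least as fast as $\exp_e(1-{}^{(k-1)}2)$ for the $k$-th odd index, while $\ln\alpha_{2k-1}\le{}^k 2-{}^{(k-1)}2\le {}^k 2$. Combining these gives a geometrically-bounded summand $\exp_e(-(1-\log 2)\cdot{}^{(k-1)}2)$ up to a factor $e$; since ${}^{(k-1)}2$ grows super-exponentially, this sum is dominated by its first few terms and is bounded by $5$ (uniformly in $r_0$), which is the numerical estimate already sketched in \eqref{e115}. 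Plugging into Theorem \ref{thm_s}, the leading constant becomes $1.1\cdot 5<6$, which gives the claimed $\frac{6}{m_1}$ and $\frac{6}{m_2}$ bounds.

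The only subtle step is bounding $\alpha_{r-1}=\alpha_{2r_0-1}$; by definition of $r_0$ in \eqref{e33}, we have $\frac{m}{10}\exp_e(1-{}^{(r_0-1)}2)\le\exp_e({}^{r_0}2-{}^{(r_0-1)}2)$, ensuring the last summand fits into the telescoped upper bound. I expect this verification (and the numerical check that $e\sum_{k\ge 1}\exp_e(-(1-\log 2)\cdot{}^{(k-1)}2)<5$) to be the only nontrivial part; everything else is direct substitution.
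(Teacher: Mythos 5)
Your approach is exactly the paper's: specialize Theorem~\ref{thm_s} to the tetration sequence, compute $\prod_{1\le j\le r}^{\rm odd}\alpha_j=\prod_{1\le j\le r}^{\rm even}\alpha_j=\tfrac{m}{10}$, and bound the weighted sums by $5$ so that $1.1\cdot 5<6$. One small slip in your write-up: the displayed telescoping exponent should not contain the leading ${}^{r_0}2-{}^{(r_0-1)}2$ term (since $\alpha_{2r_0-1}$ already carries the $\tfrac{m}{10}$ factor, the telescope runs only for $k=1,\dots,r_0-1$, giving ${}^{(r_0-1)}2-{}^{0}2+1-{}^{(r_0-1)}2=0$), but your stated final value $\tfrac{m}{10}$ is correct.
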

Let us remark that the sequence $(\alpha_i)$ we used in \eqref{e112}-\eqref{e114} is essentially optimal:
Let $\beta_k:=\prod_{2\le j\le 2k}^{\rm even}\alpha_j^{-1}$.
In order for \eqref{e115} to converge, we need $\sum_k\ln(\frac{\beta_k}{\beta_{k-1}})\beta_{k-1}^{-1}$ to be convergent.
Therefore $\beta_k$ cannot grow faster than $\beta_k=\exp(\beta_{k-1})$ which is tetration.
However this only amounts to a lower bound on $r$ for a particular design of $P_{U^r|XY}$ in Definition~\ref{defn_cond}.
Since tetration grows super fast, from a practical viewpoint $r$ is essentially a constant.
Nevertheless, it remains an interesting theoretical question whether $r$ needs to diverge:
\begin{conj}\label{conj1}
If there is an algorithm (indexed by $k$) achieving the optimal risk \eqref{e_rpb4} for nonparametric estimation, then necessarily $r\to\infty$ as $k\to\infty$.
\end{conj}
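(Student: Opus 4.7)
The plan is to extend the lower-bound machinery from Section~\ref{app9} to a bound that is sensitive to the number of rounds $r$, and then deduce that no sequence of $r$-round protocols with $r$ bounded can achieve the rate $k^{-2\beta/(d+2\beta)}$. First I would recycle the nonparametric-to-Bernoulli simulation argument used in Section~\ref{app9}: any $r$-round protocol estimating $p_{XY}(x_0,y_0)$ on $\mathcal{H}(\beta,L,A)$ with bit budget $k$ induces an $r$-round protocol with the same bit budget for distinguishing $P^{(0)}_{XY}$ from $P^{(\delta)}_{XY}$ in \eqref{e_pdelta}, where $m_1=m_2=:m$ scales with the inverse bandwidth $h^{-d}$ and $\delta$ is the deviation one wants to detect. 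A Le Cam reduction then shows that achieving nonparametric risk of order $k^{-2\beta/(d+2\beta)}$ forces $P^{(0)}$ and $P^{(\delta)}$ to be statistically distinguishable at a specific small scale of $\delta$, which via \eqref{e19} is possible only if $s_r^*(X;Y)$ is correspondingly large.

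The technical heart is therefore to establish the \emph{round-indexed} upper bound
\[
s_r^*(X;Y)\;\le\;\frac{C_r\,\delta^2}{m\,\log^{(r)} m}
\]
for the biased Bernoulli source $P^{(\delta)}_{XY}$, where $\log^{(r)}$ is the $r$-fold iterated natural logarithm and $C_r$ depends only on $r$. The intuition is already present in the remark following Theorem~\ref{thm_s}: the variational problem governing $s_r^*$ is controlled by the minimization
\[
\min_{0\le t_0 < t_1 < \dots < t_r = \ln m}\;\sum_{i=1}^{r}e^{-t_{i-1}}(t_i - t_{i-1}),
\]
which one can check by induction on $r$ equals $\Theta(\log^{(r)} m)$, attained by choosing the $t_i$ to grow like an iterated exponential. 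What must be proved is that \emph{every} choice of $(U_1,\dots,U_r)$ satisfying the Markov chains \eqref{e_markov0}--\eqref{e_markov} admits a scale sequence $(t_i)$ of this form, so the associated ratio $R/S$ in Definition~\ref{defn2} inherits the $\log^{(r)}$ factor in the denominator.

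The main obstacle will be carrying out this last step cleanly by extending the new argument used in Section~\ref{app9} for $s_\infty^*$ so that it tracks $r$ faithfully. I expect the natural route is first to reduce an arbitrary interactive protocol to a ``nested-support'' form analogous to Definition~\ref{defn_cond} (via a convexification or data-processing step showing that extremal $(U_1,\dots,U_r)$ can be taken to satisfy $U_i = 1 \Rightarrow U_{i+1} = 1$), and then to observe that the Markov constraints prevent the $i$-th round from probing more than one new ``dyadic scale'' of the rarity of $\{X=0\}\cap\{Y=0\}$, so that $r$ rounds reveal information about at most $r$ scales. A plausible technical device is a telescoping decomposition of the internal-information sum along the filtration $U^{i-1}={\bf 0}$, combined with a log-Sobolev or hypercontractive estimate on each level, analogous to the single-scale bound underlying Corollary~\ref{cor4}.

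Once the round-indexed bound is in hand, feeding it into the Le Cam argument of Section~\ref{app9} yields
\[
R_r(k)\;\gtrsim\;\bigl(k/\log^{(r)}k\bigr)^{-\frac{2\beta}{d+2\beta}}
\]
for every $r$-round protocol, after the usual bias-variance balancing that picks $h \sim (\log^{(r)} k / k)^{1/(d+2\beta)}$ and notes that $\log^{(r)} m$ and $\log^{(r)} k$ differ by a constant since $m$ is polynomial in $k$. If a sequence of protocols indexed by $k$ and using $r(k)$ rounds achieves the interactive rate $k^{-2\beta/(d+2\beta)}$, then $\log^{(r(k))} k$ must be bounded, which is impossible whenever $r(k)$ stays bounded; hence $r(k)\to\infty$, as the conjecture asserts.
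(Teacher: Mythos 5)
Note first that the statement you are addressing is labeled a \emph{conjecture} in the paper, not a theorem: the authors explicitly state it as an open question, and the remark immediately preceding Conjecture~\ref{conj1} concedes that the tetration growth of $r$ ``only amounts to a lower bound on $r$ for a particular design of $P_{U^r|XY}$ in Definition~\ref{defn_cond}.'' There is therefore no proof in the paper to compare against; the question is whether your proposal actually closes the gap the authors left open.

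Your strategy is the natural one, and its endpoints check out: for $r=1$ the bound $s_1^*\lesssim \delta^2/(m\ln m)$ is exactly Theorem~\ref{thm_1point}, and for $r=\infty$ the bound $s_\infty^*\lesssim \delta^2/m$ is Theorem~\ref{thm7}; your interpolating claim $s_r^*\lesssim C_r\delta^2/(m\log^{(r)}m)$ sits between those and, if true, would indeed feed through the Le Cam reduction of Section~\ref{app9} to yield $R_r(k)\gtrsim(k/\log^{(r)}k)^{-2\beta/(d+2\beta)}$ for $r$-round protocols, forcing $r(k)\to\infty$. The variational computation you invoke --- that $\min_{t_0<t_1<\dots<t_r=\ln m}\sum_i e^{-t_{i-1}}(t_i-t_{i-1})=\Theta(\log^{(r)}m)$ --- is also correct, and matches the paper's own achievability remark.

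The genuine gap is that the central round-indexed converse $s_r^*(X;Y)\le C_r\delta^2/(m\log^{(r)}m)$ is never actually established; you yourself flag it as ``the main obstacle'' and then only gesture at a route. The variational minimization shows only that, \emph{within the specific monotone-marker family of Definition~\ref{defn_cond}}, no $r$-round scheme does better than $\delta^2/(m\log^{(r)}m)$. But $s_r^*$ is a supremum over \emph{all} $(U_1,\dots,U_r)$ satisfying the Markov constraints \eqref{e_markov0}--\eqref{e_markov}, and neither your sketch nor the paper provides (i) the reduction showing extremizers may be taken in nested-support form with $U_i=1\Rightarrow U_{i+1}=1$, or (ii) the claim that each round can ``probe at most one new scale'' of the event $\{X=0,Y=0\}$. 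Moreover, the one place the paper does bound the ratio over arbitrary $U^r$, namely the proof of Theorem~\ref{thm7}, works via a global second-order Taylor analysis of $\phi(\alpha,\beta)/\psi(\alpha,\beta)$ that is entirely round-agnostic --- it delivers the same bound $c^{-1}p\delta^2$ regardless of $r$ --- so there is no evident way to retrofit it to ``track $r$ faithfully,'' and the suggested substitutes (a convexification step, a per-scale hypercontractive estimate) remain unspecified. In short, what you have produced is a plausible program, whose consistency at $r=1$ and $r=\infty$ is reassuring, but the key inequality it hinges on is precisely the open problem the conjecture encodes, and it remains unproved.
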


%{\color{blue}
%\begin{rem}
%From the proof of Theorem~\ref{thm_s}, it can be seen that up to a constant factor, $s_{\infty}^*(X;Y)$ equals $\frac{\delta^2}{m^3}\int_0^{\log\frac{m}{100}}e^t{\rm d}t\left(\frac1{m}\int_0^{\log\frac{m}{100}}e^{-t}{\rm d}t\right)^{-1}\sim\frac{\delta^2}{m}$.
%Moreover, lower bounds on $s_r^*(X;Y)$ can be computed by replacing the integrals with discrete sums with $r$ terms: 
%\begin{align}
%\frac{\delta^2}{m^3}\sum_{i=1}^{\lceil r/2\rceil}(e^{t_i}-e^{t_{i-1}})\left(\frac1{m}\sum_{i=1}^{\lceil r/2\rceil}e^{-t_{i-1}}(t_i-t_{i-1})\right)^{-1}
%\end{align}
%where $1=t_0<t_1<\dots<t_{\lceil r/2\rceil}=\log\frac{m}{100}$.
%In particular, when $r=1$, we recover $s_1^*(X;Y)\sim\frac{\delta^2}{m\log m}$, whereas choosing $t_i-t_{i-1}=1$, $i=1,\dots,\lceil r/2\rceil$ shows that $r\sim \log m$ achieves $s_r^*(X;Y)\sim \frac{\delta^2}{m}$.
%We can take $t_k$ as the $k$-th iterated power of $2$, and then $r$ will be the super logarithm of $m$. . In practice, $r=5$ suffices even when $m$ is the number of atoms in the universe.
%\end{rem}
%}

\section{Achievability Bounds for Bernoulli Estimation}\label{app4}
In this section we analyze the performance of the Bernoulli distribution estimation algorithm described in Section~\ref{sec_ipu}.
\subsection{Communication Complexity}
Consider any $i\in\{1,\dots,r\}$. 
Denoting by $\widehat{P}_{{\bf XY}{\bf U}^i}$ the empirical distribution of $(X(l),Y(l),U_1(l),\dots,U_i(l))_{l=1}^n$, we have from \eqref{e_consistent} that
\begin{align}
\mathbb{E}^{(\delta)}[\widehat{P}_{{\bf XY}{\bf U}^i}|{\bf X,Y,}{\bf U}^{i-1}]
=\widehat{P}_{{\bf XY}{\bf U}^{i-1}}P_{U_i|XU^{i-1}}.
\label{e116}
\end{align}
%where we recall that $\mathbb{E}^{(\delta)}$ means expectation where $(X_i,Y_i)\sim P^{(\delta)}_{XY}$.
In particular,
\begin{align}
\mathbb{E}^{(\delta)}[\widehat{P}_{{\bf XY}{\bf U}^r}]=P_{XY}^{(\delta)}P_{U^r|XY}.
\label{e_125}
\end{align}
Let $\ell(\hat{j}_i):=2\lfloor\log_2(\hat{j}_i)\rfloor+1$ be the number of bits need to encode the positive integer $\hat{j}_i$ using the Elias gamma code \cite{1055349}.
For each $i\in\{1,\dots,r\}\cap2\mathbb{Z}$ we have
\begin{align}
\mathbb{E}^{(\delta)}[\ell(\hat{j}_i)|{\bf X,Y,}{\bf U}^{i-1}]
&\le 2\mathbb{E}^{(\delta)}[\log_2\hat{j}_i|{\bf X,Y,}{\bf U}^{i-1}]+1
\\
&\le 2\log_2\mathbb{E}^{(\delta)}[\hat{j}_i|{\bf X,Y,}{\bf U}^{i-1}]+1
\\
&=2\log_2\alpha_i^{n\widehat{P}_{{\bf X U}_{i-1}}(0,{\bf 0})}+1
\label{e128}
\\
&=2n\widehat{P}_{{\bf X U}_{i-1}}(0,{\bf 0})\log_2\alpha_i+1
\end{align}
where \eqref{e128} follows from the selection rule \eqref{e125} and the formula of expectation of the geometric distribution.
Then
\begin{align}
\mathbb{E}^{(\delta)}[\ell(\hat{j}_i)]
&\le 2nP_{XU_{i-1}}^{(\delta)}(0,{\bf 0})\log_2\alpha_i+1
\label{e_136}
\\
&\le 2(1+\delta)nP_{XU_{i-1}}^{(0)}(0,{\bf 0})\log_2\alpha_i+1
\label{e122}
\end{align}
where \eqref{e_136} used \eqref{e_125}; \eqref{e122} used the fact that $P_{XYU^r}^{(\delta)}$ is dominated by $(1+\delta)P_{XYU^r}^{(0)}$.
Note that ${\bf 0}$ in \eqref{e_136}-\eqref{e122} denotes the value of the vector $U^{i-1}$.

\subsection{Expectation of $\Gamma^{\rm B}$}
Recall that $\Gamma^{\rm B}$ was defined in \eqref{e_gammab}.
Pick arbitrary $i\in\{1,\dots,r\}\setminus2\mathbb{Z}$.
Since
\begin{align}
P_{U^iY}^{(\delta)}:=P_Y\prod_{j=1}^i P^{(\delta)}_{U_j|U^{j-1}Y}
\end{align}
 and since $P_Y$ and $(P^{(\delta)}_{U_j|U^{j-1}Y})_{j\in\{1,\dots,r\}\cap 2\mathbb{Z}}$ are independent of $\delta$, we obtain
\begin{align}
\partial_{\delta}\ln P_{U^iY}^{(\delta)}(u^i,y)|_{\delta=0}
=\sum_{1\le j\le i}^{\rm odd}\Gamma_j(u^j,y).
\label{e132}
\end{align}
Next, observe that for any $l\in\{1,\dots,n\}$,
\begin{align}
&\quad\mathbb{E}^{(0)}[\Gamma_i(U^i(l),Y(l))|U^{i-1}(l),Y(l)]
\nonumber\\
&=\mathbb{E}^{(0)}\left[\left.
\frac{\left.\partial_{\delta}P^{(\delta)}_{U_i|YU^{i-1}}(U_i(l)|Y,U^{i-1}(l))
\right|_{\delta=0}}
{P^{(0)}_{U_i|YU^{i-1}}(U_i(l)|Y,U^{i-1}(l))}
\right|U^{i-1}(l),Y(l)
\right]
\\
&=\sum_{u_i}\left.\partial_{\delta}P^{(\delta)}_{U_i|YU^{i-1}}(u_i|Y,U^{i-1}(l))
\right|_{\delta=0}
\\
&=0.
\label{e145}
\end{align}
Moreover, for any $\delta\neq0$,
\begin{align}
&\quad\frac1{\delta}\sum_{l=1}^n\mathbb{E}^{(\delta)}[\Gamma_i(U_1(l),\dots,U_i(l),Y(l))]
\nonumber\\
&=\delta^{-1}n\sum_{u^i,y}\Gamma_i(u^i,y)P^{(\delta)}_{U^iY}(u^i,y)
\label{e136}
\\
&=n\sum_{u^i,y}\Gamma_i(u^i,y)\frac{\partial}{\partial\delta}P_{U^iY}^{(\delta)}(u^i,y)|_{\delta=0}
\label{e137}
\\
&=n\sum_{u^i,y}\Gamma_i(u^i,y)P_{U^iY}^{(0)}(u^i,y)\sum_{1\le j\le i}^{\rm odd}\Gamma_j(u^j,y)
\label{e139}
\\
&=n\sum_{u^i,y}\Gamma_i^2(u^i,y)P_{U^iY}^{(0)}(u^i,y)
\label{e140}
\end{align}
where \eqref{e136} used \eqref{e_125};
\eqref{e137} used \eqref{e145} and the linearity of $P^{\delta}_{U^{i-1}Y}$ in $\delta$;
\eqref{e139} used \eqref{e132};
\eqref{e140} follows from \eqref{e145}.
Thus
\begin{align}
\frac1{\delta}\mathbb{E}^{(\delta)}[\Gamma^{\rm B}]=I^{\rm B},
\quad\forall \delta\neq 0
\label{e141}
\end{align}
where we defined
\begin{align}
I^{\rm B}&:=
%n\sum_{1\le i\le r}^{\rm odd}\mathbb{E}^{(0)}[\Gamma_i^2(U^i,Y)]
%+
n\sum_{1\le i\le r}^{\rm odd}\mathbb{E}^{(0)}[\Gamma_i^2(U^i(1),Y(1))].
\end{align}

\begin{lem}\label{lem5}
Let $(U^i,Y)\sim P^{(\delta)}_{U^iY}$.
We have
\begin{align}
%I^{\rm A}\ge 2
%n\lim_{\delta\to0}\delta^{-2}\sum_{1\le i\le r}^{\rm even}I(U_i;X|U^{i-1});
%\\
I^{\rm B}\ge 2
n\lim_{\delta\to0}\delta^{-2}\sum_{1\le i\le r}^{\rm odd}I(U_i;Y|U^{i-1})
\end{align}
where the logarithmic base of the mutual information is natural.
\end{lem}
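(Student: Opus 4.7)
The strategy is to recognize $\mathbb{E}^{(0)}[\Gamma_i^2(U^i(1),Y(1))]$ as the average (over $(Y,U^{i-1})\sim P^{(0)}$) of the Fisher information of the one-parameter family $\delta\mapsto P^{(\delta)}_{U_i|YU^{i-1}}(\cdot\mid y,u^{i-1})$ at $\delta=0$, and to bound from above the second-order Taylor coefficient of $I(U_i;Y\mid U^{i-1})$ at $\delta=0$ by this Fisher information. The final inequality will come from the trivial bound $\mathbb{E}[Z^2]\ge\mathrm{Var}(Z)$.

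The preliminary fact I need is the conditional independence $X\perp Y\mid U^{i-1}$ under $P^{(0)}$. Indeed, at $\delta=0$ one has $P^{(0)}_{XY}=P^{(0)}_X\otimes P^{(0)}_Y$, and by Definition~\ref{defn_cond} each $U_j$ is a randomized function of either $X$ alone (for odd $j$) or $Y$ alone (for even $j$), together with $U^{j-1}$ and fresh external randomness $V_{j,\cdot}$; a short induction on $i$ yields $X\perp Y\mid U^{i-1}$ under $P^{(0)}$. Combined with the Markov chain \eqref{e_markov0}, for odd $i$ this gives $U_i\perp Y\mid U^{i-1}$ under $P^{(0)}$, so $P^{(0)}_{U_i\mid YU^{i-1}}(u_i\mid y,u^{i-1})$ is independent of $y$; denote its common value by $q_0(u_i\mid u^{i-1})$. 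In particular $I(U_i;Y\mid U^{i-1})=0$ at $\delta=0$, so the second-order Taylor expansion is well defined.

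Since $P_{U^r|XY}$ does not depend on $\delta$ and $P^{(\delta)}_{XY}$ is linear in $\delta$, so is $P^{(\delta)}_{U^iY}$. Writing $\dot p(u_i\mid y,u^{i-1}):=\partial_\delta P^{(\delta)}_{U_i\mid YU^{i-1}}(u_i\mid y,u^{i-1})\big|_{\delta=0}$ and $\dot{\bar p}(u_i\mid u^{i-1}):=\partial_\delta P^{(\delta)}_{U_i\mid U^{i-1}}(u_i\mid u^{i-1})\big|_{\delta=0}$, differentiating the marginalization $P^{(\delta)}_{U_i\mid U^{i-1}}=\sum_y P^{(\delta)}_{Y\mid U^{i-1}}P^{(\delta)}_{U_i\mid YU^{i-1}}$ at $\delta=0$, and exploiting the $y$-independence of $p_0$ together with $\sum_y\partial_\delta P^{(\delta)}_{Y\mid U^{i-1}}|_{\delta=0}=0$, yields $\dot{\bar p}(u_i\mid u^{i-1})=\mathbb{E}_{Y\mid U^{i-1}}[\dot p(u_i\mid Y,u^{i-1})]$. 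The standard local expansion $D(P\|Q)=\tfrac12\sum_{u}(P(u)-Q(u))^2/q_0(u)+o(\delta^2)$, valid when $P$ and $Q$ are both $O(\delta)$-close to the common reference $q_0$, combined with the convergence $P^{(\delta)}_{YU^{i-1}}\to P^{(0)}_{YU^{i-1}}$, gives
\begin{align}
\lim_{\delta\to0}\delta^{-2}I(U_i;Y\mid U^{i-1})=\tfrac12\,\mathbb{E}_{U^{i-1}}\!\left[\sum_{u_i}\frac{\mathrm{Var}_{Y\mid U^{i-1}}\!\big(\dot p(u_i\mid Y,U^{i-1})\big)}{q_0(u_i\mid U^{i-1})}\right].
\end{align}

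An equally direct computation, using once more $p_0(u_i\mid y,u^{i-1})=q_0(u_i\mid u^{i-1})$, yields
\begin{align}
\mathbb{E}^{(0)}[\Gamma_i^2(U^i,Y)]=\mathbb{E}_{U^{i-1}}\!\left[\sum_{u_i}\frac{\mathbb{E}_{Y\mid U^{i-1}}\!\big[\dot p(u_i\mid Y,U^{i-1})^2\big]}{q_0(u_i\mid U^{i-1})}\right].
\end{align}
Since $\mathbb{E}[Z^2]\ge\mathrm{Var}(Z)$ for any scalar random variable $Z$, a term-by-term comparison gives $\mathbb{E}^{(0)}[\Gamma_i^2]\ge 2\lim_{\delta\to0}\delta^{-2}I(U_i;Y\mid U^{i-1})$; summing over odd $i\in\{1,\dots,r\}$ and multiplying by $n$ establishes the lemma. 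The main technical subtleties are uniformity of the $o(\delta^2)$ remainder and handling the convention $0/0=0$ at values where $q_0$ vanishes; both are routine because all alphabets involved are finite and, by \eqref{e40}--\eqref{e42}, $\dot p$ vanishes wherever $q_0$ does.
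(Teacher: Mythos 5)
Your proof is correct and rests on the same key structural observation as the paper's proof, namely that under $P^{(0)}$ the conditional law $P^{(0)}_{U_i|YU^{i-1}}$ does not depend on $y$ (equivalently, $U_i\perp Y\mid U^{i-1}$ at $\delta=0$), which places both quantities in the same local geometry around $q_0$. The difference is one of language rather than substance. The paper works with exact KL identities, writing $\mathbb{E}^{(0)}[\Gamma_i^2]$ as $2\lim_{\delta\to0}\delta^{-2}D(P^{(\delta)}_{U_i|YU^{i-1}}\|P^{(0)}_{U_i|U^{i-1}}\mid P^{(\delta)}_{YU^{i-1}})$ and then applying, before taking the limit, the projection inequality $D(P_{U|Y}\|Q_U\mid P_Y)=I(U;Y)+D(P_U\|Q_U)\ge I(U;Y)$ (with $Q_U=P^{(0)}_{U_i|U^{i-1}}$ the ``wrong'' marginal). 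You instead pass to the second-order limit first, expressing $\lim_{\delta\to0}\delta^{-2}I(U_i;Y\mid U^{i-1})$ as a $\chi^2$-type conditional variance of $\dot p$ weighted by $1/q_0$ and $\mathbb{E}^{(0)}[\Gamma_i^2]$ as the matching second moment, and then invoke $\mathbb{E}[Z^2]\ge\operatorname{var}(Z)$. These are the same inequality seen through dual lenses (exact divergence versus its local quadratic form); your version is a bit more concrete and self-contained, whereas the paper's avoids having to justify uniformity of the $o(\delta^2)$ remainder in the local expansion, a point you correctly flag as needing care in your closing paragraph.
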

\begin{proof}
Consider any $i\in\{1,2,\dots,r\}\setminus 2\mathbb{Z}$. 
we have
\begin{align}
&\quad\mathbb{E}^{(0)}\left[\Gamma_i^2(U^i,Y)\right]
\nonumber\\
&=\mathbb{E}^{(0)}\left[\left(\partial_{\delta}\ln P^{(\delta)}_{U_i|YU^{i-1}}(U_i|YU^{i-1})|_{\delta=0}\right)^2\right]
\\
&=2\lim_{\delta\to 0}\delta^{-2}D(P_{U_i|YU^{i-1}}^{(\delta)}\|P_{U_i|YU^{i-1}}^{(0)}|P_{U^{i-1}Y}^{(0)})
\\
&=2\lim_{\delta\to 0}\delta^{-2}D(P_{U_i|YU^{i-1}}^{(\delta)}\|P_{U_i|U^{i-1}}^{(0)}|P_{U^{i-1}Y}^{(0)})
\label{e152}
\\
&=2\lim_{\delta\to 0}\delta^{-2}D(P_{U_i|YU^{i-1}}^{(\delta)}\|P_{U_i|U^{i-1}}^{(0)}|P_{U^{i-1}Y}^{(\delta)})
\label{e147}
\\
&\ge 2\lim_{\delta\to 0}\delta^{-2}D(P_{U_i|YU^{i-1}}^{(\delta)}\|P_{U_i|U^{i-1}}^{(\delta)}|P_{U^{i-1}Y}^{(\delta)})
\\
&=\lim_{\delta\to0}\delta^{-2}I(U_i;Y;U^{i-1})
\end{align}
where we defined the conditional KL divergence $$D(P_{Y|X}\|Q_{Y|X}|P_X):=\int D(P_{Y|X=x}\|Q_{Y|X=x})dP_X(x);$$
\eqref{e152} follows since $P^{(0)}_{U_i|YU^{i-1}}=P^{(0)}_{U_i|U^{i-1}}$;
\eqref{e147} follows since $\lim_{\delta\to0}P^{(\delta)}_{U^{i-1}Y}=P^{(0)}_{U^{i-1}Y}$.
\end{proof}

\subsection{Variance of $\Gamma^{\rm B}$}
%Recall that
%\begin{align}
%\Gamma^{\rm B}&:=\sum_{1\le i\le r}^{\rm odd}\sum_{l=1}^n\Gamma_i(U_1(l),\dots,U_i(l),Y(l))
%\end{align}
For any $\delta$, since $({\bf U}^r,{\bf X,Y})\sim (P^{(\delta)}_{U^rXY})^{\otimes n}$, we have
\begin{align}
\var^{(\delta)}(\Gamma^{\rm B})
&=\sum_{l=1}^n\var^{(\delta)}\left(\sum_{1\le i\le r}^{\rm odd}\Gamma_i(U_1(l),\dots,U_i(l),Y(l))\right)
\\
&=n\var^{(\delta)}\left(\sum_{1\le i\le r}^{\rm odd}\Gamma_i(U_1(1),\dots,U_i(1),Y(1))\right).
\end{align}
However,
\begin{align}
&\quad\var^{(\delta)}\left(\sum_{1\le i\le r}^{\rm odd}\Gamma_i(U_1(1),\dots,U_i(1),Y(1))\right)
\nonumber\\
&\le
\mathbb{E}^{(\delta)}\left[\left(\sum_{1\le i\le r}^{\rm odd}\Gamma_i(U_1(1),\dots,U_i(1),Y(1))\right)^2\right]
\\
&\le (1+\delta)\mathbb{E}^{(0)}\left[\left(\sum_{1\le i\le r}^{\rm odd}\Gamma_i(U_1(1),\dots,U_i(1),Y(1))\right)^2\right]
\label{e154}
\\
&\le (1+\delta)\sum_{1\le i\le r}^{\rm odd}\mathbb{E}^{(0)}\left[\Gamma_i^2(U_1(1),\dots,U_i(1),Y(1))\right]
\label{e155}
\end{align}
where \eqref{e154} follows since $P^{(\delta)}$ is dominated by $(1+\delta)P^{(0)}$;
\eqref{e155} used \eqref{e145}.
Therefore
\begin{align}
&\quad\var^{(\delta)}(\Gamma^{\rm B})
\nonumber\\
&\le n(1+\delta)\sum_{1\le i\le r}^{\rm odd}\mathbb{E}^{(0)}\left[\Gamma_i^2(U_1(1),\dots,U_i(1),Y(1))\right]
\\
&=(1+\delta)I^{\rm B}.
\label{e156}
\end{align}

\subsection{$r=1$ Case}
We now prove achievability bounds for the Bernoulli distribution estimation algorithm.
\begin{cor}\label{cor7}
Given $m_1,m_2>10$, for $r=1$ and $\alpha_1:=\frac{m_1}{10}$, the mean square error $\mathbb{E}[|\hat{\delta}^{\rm B}-\delta|^2]\le\frac{25(1+\delta)m_1m_2}{n}$ and total communication cost $\mathbb{E}[|\Pi_1|]\le \frac{2.2(1+\delta)n}{m_1}\log_2\frac{m_1}{10} +1$.
\end{cor}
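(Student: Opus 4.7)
The plan is to specialize the general machinery already assembled in Sections~\ref{sec_ipu} and~\ref{sec_exchanges} to the single-round case $r=1$ and to track constants. Both the communication and risk bounds drop out once I combine (i) unbiasedness of $\hat{\delta}^{\rm B}$ from Theorem~\ref{thm6}, (ii) the identity \eqref{e141} which shows $\partial_\delta\mathbb{E}^{(\delta)}[\Gamma^{\rm B}]=I^{\rm B}$, (iii) the variance estimate \eqref{e156}, and (iv) the bounds of Corollary~\ref{cor_4}.

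For the communication cost, with $r=1$ the entire transcript is a single codeword of length $\ell(\hat{j}_1)$. Applying \eqref{e122} with $i=1$ gives $\mathbb{E}^{(\delta)}[\ell(\hat{j}_1)]\le 2(1+\delta)n\,P^{(0)}_X(0)\log_2\alpha_1+1$; inserting the one-way estimate $P^{(0)}_X(0)\log_2\alpha_1\le (1.1/m_1)\log_2(m_1/10)$ from Corollary~\ref{cor_4} (which is base-independent) immediately produces the announced $\frac{2.2(1+\delta)n}{m_1}\log_2\frac{m_1}{10}+1$. For the risk, Theorem~\ref{thm6} reduces MSE to variance. Combining Lemma~\ref{lem_linear}, \eqref{e141}, and $\mathbb{E}^{(0)}[\Gamma^{\rm B}]=0$ identifies $\partial_\delta\mathbb{E}^{(\delta)}[\Gamma^{\rm B}]$ with $I^{\rm B}$, so $\hat{\delta}^{\rm B}=\Gamma^{\rm B}/I^{\rm B}$, and then \eqref{e156} yields $\var^{(\delta)}(\hat{\delta}^{\rm B})=\var^{(\delta)}(\Gamma^{\rm B})/(I^{\rm B})^2\le (1+\delta)/I^{\rm B}$. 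Lower-bounding $I^{\rm B}$ through Lemma~\ref{lem5} together with the mutual-information estimate $\lim_{\delta\to 0}\delta^{-2}I(U_1;Y)\ge 1/(50 m_1 m_2)$ of Corollary~\ref{cor_4} (in nats) then gives $I^{\rm B}\ge n/(25 m_1 m_2)$, whence the MSE is at most $25(1+\delta)m_1 m_2/n$.

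I do not anticipate any genuine obstacle: all the substantive analysis has already been carried out in Theorem~\ref{thm_s}, Lemma~\ref{lem5}, Theorem~\ref{thm6}, and the derivations of \eqref{e141} and \eqref{e156}, and this corollary is essentially a matter of arithmetic. The only point that merits a moment of care is consistency of the logarithm base (the score $\Gamma_i$ is defined with the natural logarithm so that $I^{\rm B}$ lives in nats, whereas $\ell(\hat{j}_1)$ is measured in bits), but since the two assertions are independent of each other and each uses its natural unit, there is no interference.
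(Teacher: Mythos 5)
Your proposal is correct and follows essentially the same route as the paper: reduce MSE to $\var^{(\delta)}(\Gamma^{\rm B})/(\partial_\delta\mathbb{E}^{(\delta)}[\Gamma^{\rm B}])^2$ via unbiasedness (Theorem~\ref{thm6}), apply \eqref{e141} and \eqref{e156} to bound this by $(1+\delta)/I^{\rm B}$, then lower-bound $I^{\rm B}$ through Lemma~\ref{lem5} and Corollary~\ref{cor_4}; the communication bound is \eqref{e122} combined with the first inequality of Corollary~\ref{cor_4}. Your remark about the logarithm bases (natural for $I^{\rm B}$, binary for codeword length) is a valid sanity check that the paper leaves implicit, but it does not change the argument.
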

\begin{proof}
We have
\begin{align}
\mathbb{E}[|\hat{\delta}^{\rm B}-\delta|^2]
&= \var^{(\delta)}(\Gamma^{\rm B})\cdot (\partial_{\delta}\mathbb{E}^{(\delta)}[\Gamma^{\rm B}])^{-2}
\label{e157}
\\
&\le (1+\delta)I^{\rm B}\cdot (I^{\rm B})^{-2}
\label{e158}
\\
&\le(1+\delta)\left[2
n\lim_{\delta\to0}\delta^{-2}I(U_1;Y)\right]^{-1}
\label{e159}
\\
&\le \frac{25(1+\delta)m_1m_2}{n}
\end{align}
where \eqref{e157} follows since $\hat{\delta}^{\rm B}$ is unbiased (Theorem~\ref{thm6});
\eqref{e158} follows from 
\eqref{e141} and
\eqref{e156};
\eqref{e159} follows from 
Lemma~\ref{lem5};
lastly we used Corollary~\ref{cor_4}.

As for the communication cost
\begin{align}
\mathbb{E}[|\Pi^{\rm A\to B}|]
&\le 2(1+\delta)n\sum_{1\le i\le 1}P_{XU_{i-1}}^{(0)}(0,0)\log_2\alpha_i+1
\\
&\le
\frac{2.2(1+\delta)n}{m_1}\log_2\frac{m_1}{10} +1
\end{align}
where we used \eqref{e122} and Corollary~\ref{cor_4}.
\end{proof}

\subsection{$r=\infty$ Case}
\begin{cor}\label{cor8}
Let $m_1,m_2>10$, $m:=\min\{m_1,m_2\}$.
For $r$, $(\alpha_i)$ defined in Section~\ref{sec_rinfty},
the mean square error $\mathbb{E}[|\hat{\delta}^{\rm B}-\delta|^2]\le \frac{25(1+\delta)m_1m_2^2}{nm}$ and total communication cost $\mathbb{E}[|\Pi^r|]\le 6(1+\delta)n(m_1^{-1}+m_2^{-1})\log_2e +\frac{r+1}{2}$.
\end{cor}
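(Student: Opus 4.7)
The plan is to follow the template of Corollary~\ref{cor7} essentially verbatim, replacing the one-round information bounds with the multi-round bounds from Corollary~\ref{cor4}. First, for the mean square error: since $\hat{\delta}^{\rm B}$ is unbiased by Theorem~\ref{thm6}, the MSE equals its variance. Combining the identity $\partial_{\delta}\mathbb{E}^{(\delta)}[\Gamma^{\rm B}] = I^{\rm B}$ from \eqref{e141} with the variance bound $\var^{(\delta)}(\Gamma^{\rm B}) \le (1+\delta)I^{\rm B}$ from \eqref{e156} yields $\mathbb{E}[|\hat{\delta}^{\rm B}-\delta|^2] \le (1+\delta)/I^{\rm B}$, exactly as in the chain \eqref{e157}--\eqref{e159}. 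Lemma~\ref{lem5} then lower-bounds $I^{\rm B}$ by $2n\lim_{\delta\to 0}\delta^{-2}\sum_{1\le i\le r}^{\rm odd} I(U_i;Y|U^{i-1})$, and the multi-round estimate from Corollary~\ref{cor4}, namely $\frac{m}{50m_1^2 m_2}$, replaces the one-round bound used in Corollary~\ref{cor7}. Substituting gives $I^{\rm B}\ge nm/(25m_1^2 m_2)$ and hence the stated mean square error bound, up to the arithmetic of the prefactors.

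For the expected transcript length, the plan is to sum the per-round bound \eqref{e122} over all $r$ rounds, separating the odd indices (Alice's messages) from the even indices (Bob's messages). The two resulting sums have the form $\sum_{1\le i\le r}^{\rm odd} P^{(0)}_{XU^{i-1}}(0,\mathbf{0})\log_2\alpha_i$ and its even analogue, and after converting $\log_2\to\ln$ through a factor of $\log_2 e$ they are controlled by the two information-exchange bounds $6/m_1$ and $6/m_2$ of Corollary~\ref{cor4}. The additive $+1$ overhead per round from the Elias gamma code then accumulates into a term proportional to the number of rounds, producing the second summand of the claimed inequality.

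There is no genuine obstacle in this corollary: the nontrivial work is already done in Section~\ref{sec_exchanges} (Theorem~\ref{thm_s} and its multi-round specialization Corollary~\ref{cor4}) and in the distributional and variance identities of the present section (namely \eqref{e141}, \eqref{e156}, and Lemma~\ref{lem5}); what remains is essentially bookkeeping. The one place where care is needed is the logarithm base---Corollary~\ref{cor4} and Lemma~\ref{lem5} are stated with the natural logarithm, whereas the Elias gamma code length is measured in bits---so the factor $\log_2 e$ must be threaded through the communication-cost calculation consistently, which is precisely the origin of the $\log_2 e$ appearing in the statement.
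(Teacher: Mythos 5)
Your proposal is correct and mirrors the paper's own proof exactly: both instantiate the $r=1$ argument (unbiasedness from Theorem~\ref{thm6}; the identity \eqref{e141} and variance bound \eqref{e156} giving $\mathrm{MSE}\le(1+\delta)/I^{\rm B}$; Lemma~\ref{lem5}; and the per-round communication bound \eqref{e122}) with Corollary~\ref{cor4} replacing Corollary~\ref{cor_4} in the final substitution. One small but genuine point in your favor: for the transcript length you sum \eqref{e122} over \emph{all} $r$ rounds and invoke both the odd ($6/m_1$) and even ($6/m_2$) bounds of Corollary~\ref{cor4}, which is the logically coherent way to bound $\mathbb{E}[|\Pi^r|]$; the paper's displayed chain nominally bounds only $\mathbb{E}[|\Pi^{\rm A\to B}|]$ (the odd rounds) yet still writes a bound involving $m_1^{-1}+m_2^{-1}$, so your reading is the one that makes the argument close. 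Be warned, though, that the exact constants in the statement do not fall out of this calculation: the MSE chain gives $25(1+\delta)m_1^2m_2/(nm)$ whereas the corollary states $m_1m_2^2$ in the numerator (an apparent index swap in the paper), and summing \eqref{e122} over all $r$ rounds with one extra Elias-gamma bit per round produces a prefactor $12$ and an additive $r$ rather than the stated $6$ and $(r+1)/2$. Your hedge ``up to the arithmetic of the prefactors'' is therefore apt; these discrepancies are internal to the corollary as stated (and harmless for the order-of-magnitude conclusions downstream), not defects of your derivation.
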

\begin{proof}
The bound on the mean square error is similar to the $r=1$ case:
\begin{align}
\mathbb{E}[|\hat{\delta}^{\rm B}-\delta|^2]
&\le \var^{(\delta)}(\Gamma^{\rm B})\cdot (\partial_{\delta}\mathbb{E}^{(\delta)}[\Gamma^{\rm B}])^{-2}
\\
&\le (1+\delta)I^{\rm B}\cdot (I^{\rm B})^{-2}
\\
&\le(1+\delta)\left[2
n\lim_{\delta\to0}\delta^{-2}\sum_{1\le i\le r}^{\rm odd}I(U_i;Y|U^{i-1})\right]^{-1}
\\
&\le \frac{25(1+\delta)m_1^2m_2}{mn}
\end{align}
except that we use Corollary~\ref{cor4} in the last step.

For the communication cost,
\begin{align}
\mathbb{E}[|\Pi^{\rm A\to B}|]
&\le 2(1+\delta)n\sum_{1\le i\le r}^{\rm odd}P_{XU_{i-1}}^{(0)}(0,0)\log_2\alpha_i+\frac{r+1}{2}
\\
&\le
6(1+\delta)n(m_1^{-1}+m_2^{-1})\log_2e +\frac{r+1}{2}
\end{align}
where used \eqref{e122} and Corollary~\ref{cor4}.
\end{proof}

\section{Density Estimation Upper Bounds}\label{app2}
In this section we prove the upper bounds in Theorem~\ref{thm3} and Theorem~\ref{thm4}, by building nonparametric density estimators based on the Bernoulli distribution estimator.
For $\beta\in(0,2]$, the rectangular kernel is minimax optimal (Section~\ref{sec_npestimation}), 
so that the integral with the kernel can be directly estimated using the Bernoulli distribution estimator, which we explain in Section~\ref{sec_d1way} and \ref{sec_d_interactive}.
Extension to higher order kernels is possible using a linear combination of rectangular kernels, which is explained in Section~\ref{sec_betage}.

\subsection{Density Lower Bound Assumption}
First, we observe the following simple argument showing that it suffices to consider $A>0$.
Define
\begin{align}
B:=\sup_{x,y\in[0,1]^d}\sup_{p_{XY}}p_{XY}(x,y),
\label{e_B}
\end{align}
where the supremum is over all density $p_{XY}$ on $\mathbb{R}^{2d}$ satisfying $\|p_{XY}\|_{[0,1]^{2d},\beta}\le L$.
Clearly $B>1$ is finite and depends only on $\beta,L,d$.
\begin{lem}\label{lem7}
In either the one-way or the interactive setting, suppose that there exists an algorithm achieving $\max_{p_{XY}\in \mathcal{H}(\beta,L,A)}\mathbb{E}[|\hat{p}-p_{XY}(x_0,y_0)|^2]\le R$ for some $R>0$ and $A\in(0,1)$.
Then, there must be an algorithm achieving $\max_{p_{XY}\in \mathcal{H}(\beta,L,0)}\mathbb{E}[|\hat{p}-p_{XY}(x_0,y_0)|^2]\le \left(\frac{1+A}{1-A}\right)^2R$.
\end{lem}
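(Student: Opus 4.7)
The plan is to reduce the no-lower-bound case to the lower-bound case by mixing every density in $\mathcal{H}(\beta,L,0)$ with a uniform density on $[0,1]^{2d}$. Concretely, given $p_{XY}\in\mathcal{H}(\beta,L,0)$, I would define
\begin{align}
\tilde{p}_{XY}(x,y):=(1-A)\,p_{XY}(x,y)+A\,\mathbf{1}_{[0,1]^{2d}}(x,y).
\end{align}
The first step is to verify $\tilde{p}_{XY}\in\mathcal{H}(\beta,L,A)$. For the marginal condition \eqref{e20}, $\tilde{p}_X(x)=(1-A)p_X(x)+A\mathbf{1}_{[0,1]^d}(x)\ge A$ for $x\in[0,1]^d$, and the same holds for $\tilde{p}_Y$. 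For the smoothness condition \eqref{e_holder}, the indicator $\mathbf{1}_{[0,1]^{2d}}$ is constant on the interior where $\|\cdot\|_{[0,1]^{2d},\beta}$ is computed, so has zero $\beta$-H\"older seminorm, whence $\|\tilde{p}_{XY}\|_{[0,1]^{2d},\beta}\le(1-A)L\le L$.

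The second step is to simulate i.i.d.\ samples from $\tilde{p}_{XY}$ using only common randomness and the parties' own data. Using $\Pi_0$, Alice and Bob extract, for every index $i$, a shared Bernoulli$(A)$ bit $B_i$ together with a shared uniform pair $(U_i,V_i)\in[0,1]^{2d}$. They then define simulated samples $(\tilde X_i,\tilde Y_i)$ by taking $(X_i,Y_i)$ when $B_i=0$ and $(U_i,V_i)$ when $B_i=1$. Crucially, Alice can compute $\tilde X_i$ from $\{X_i,U_i,B_i\}$ alone and Bob can compute $\tilde Y_i$ from $\{Y_i,V_i,B_i\}$ alone, so any (one-way or interactive) protocol designed for the class $\mathcal{H}(\beta,L,A)$ can be executed on the simulated sequence without any extra communication. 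By construction the simulated samples are i.i.d.\ from $\tilde{p}_{XY}\in\mathcal{H}(\beta,L,A)$, so the assumed estimator produces $\hat{\tilde p}$ with $\mathbb{E}[|\hat{\tilde p}-\tilde p_{XY}(x_0,y_0)|^2]\le R$.

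Finally, since $\tilde{p}_{XY}(x_0,y_0)=(1-A)p_{XY}(x_0,y_0)+A$, set
\begin{align}
\hat{p}:=\frac{\hat{\tilde p}-A}{1-A},
\end{align}
so that $\mathbb{E}[|\hat p-p_{XY}(x_0,y_0)|^2]\le R/(1-A)^2\le\bigl(\tfrac{1+A}{1-A}\bigr)^2R$, which is what is claimed (with some slack). There is no real obstacle: the main thing to check carefully is the H\"older seminorm of the mixture and the fact that the auxiliary samples from $\mathbf{1}_{[0,1]^{2d}}$ can be realized jointly between Alice and Bob via the common randomness, so that the existing protocol applies verbatim.
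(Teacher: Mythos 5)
Your argument is correct and, at its core, is the same reduction the paper uses: mix the target density with a fixed auxiliary density whose marginals are bounded below on $[0,1]^d$, simulate the mixture locally via shared randomness (replacing each sample with probability equal to the mixing weight by a freshly drawn pair), apply the assumed algorithm for the class with a density lower bound, and then de-bias by undoing the affine transformation at $(x_0,y_0)$. The paper mixes with an abstract $p_{XY}$ satisfying $\|p_{XY}\|_{[0,1]^{2d},\beta}\le L$ and $\inf_{[0,1]^{2d}}p_{XY}\ge\tfrac{1+A}{2}$ using weight $\tfrac{2A}{1+A}$; you mix with the uniform density on $[0,1]^{2d}$ using weight $A$. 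Your concrete choice is cleaner and gives the marginally sharper factor $(1-A)^{-2}$ in place of $\bigl(\tfrac{1+A}{1-A}\bigr)^2$, which of course still implies the claim. Two small points worth flagging. First, you correctly note that the indicator $\mathbf{1}_{[0,1]^{2d}}$ is discontinuous across the boundary; since $\|\cdot\|_{[0,1]^{2d},\beta}$ is evaluated only on $[0,1]^{2d}$, one interprets derivatives as one-sided (from the interior), under which the restriction is the constant function with zero seminorm — this is the same convention the paper tacitly uses. Second, when you hand the simulated sequence $(\tilde X_i,\tilde Y_i)$ to the existing protocol, each party also happens to know the auxiliary variables $(B_i,U_i,V_i)$; this extra information is simply ignored and does not affect the protocol's guarantee, which depends only on the (correct) marginal law of what each terminal inputs. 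Both points are minor and your proof stands.
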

\begin{proof}
Pick one $p_{XY}$ such that 
$\|p_{XY}\|_{[0,1]^{2d},\beta}\le L$
and $\inf_{x,y\in[0,1]^d}p_{XY}(x,y)\ge \frac{1+A}{2}>A$.
Since $\frac{1+A}{2}<1$, such $p_{XY}$ exists.
 Consider an arbitrary $q_{XY}\in \mathcal{H}(\beta,L,0)$,
and suppose infinite pairs $(X_1,Y_1),\dots$ i.i.d.\ according to $q_{XY}$ are available to Alice and Bob.
Using the common randomness, Alice and Bob can simulate i.i.d.\ pairs $(\tilde{X}_1,\tilde{Y}_1),\dots$ according to $\tilde{p}_{XY}:=\frac{2A}{1+A}p_{XY}+\frac{1-A}{1+A}q_{XY}$, by replacing each pair with probability $\frac{2A}{1+A}$ with a new pair drawn according to $p_{XY}$.
Clearly $\tilde{p}_{XY}\in\mathcal{H}(\beta,L,A)$, and by assumption, $\tilde{p}_{XY}(x_0,y_0)$ can be estimated with mean square risk $R$.
Since $p_{XY}$ is known, this implies that $q_{XY}(x_0,y_0)$ can be estimated with mean square risk $\left(\frac{1+A}{1-A}\right)^2R$.
\end{proof}
For the rest of the section, we will assume that there is a density lower bound $A>0$ and $p_{XY}\in\mathcal{H}(\beta,L,A)$, which is sufficient in view of Lemma~\ref{lem7}.
Consider bandwidth $h>0$ (which will be specified later as an inverse polynomial of $k$).
Also introduce the notations
\begin{align}
\mathcal{A}&:=x_0+h[-1,1]^d;
\\
\mathcal{B}&:=y_0+h[-1,1]^d.
\end{align}
Define $P_{\underline{X}\underline{Y}}$ as the probability distribution induced by $P_{XY}$ and with
\begin{align}
\underline{X}&:=1\{X\notin \mathcal{A}\};
\\
\underline{Y}&:=1\{Y\notin \mathcal{B}\}.
\end{align}
Define
\begin{align}
m_1&:=\mathbb{P}^{-1}(\underline{X}=0); 
\label{em1}
\\
m_2&:=\mathbb{P}^{-1}(\underline{Y}=0).
\label{em2}
\end{align}
Note that $m_1/m_2$ is bounded above and below by positive constants depending on $A$, $\beta$, and $L$ (see \eqref{e174} and \eqref{e178}).
Also, we can assume Alice and Bob both know $m_1$ and $m_2$, since with infinite samples Alice and Bob know their marginal densities $p_X$ and $p_Y$, and Alice can send $m_1$ to Bob with very high precision using negligible number of bits.
Let $\delta\ge -1$ be the number such that $P_{\underline{X}\underline{Y}}$ is the matrix 
\begin{align}
\left(
\begin{array}{cc}
\frac1{m_1m_2}(1+\delta)  & \frac1{m_1}(1-\frac1{m_2})-\frac{\delta}{m_1m_2}      \\
\frac1{m_2}(1-\frac1{m_1})-\frac{\delta}{m_1m_2} 
&     (1-\frac1{m_2})^2+\frac{\delta}{m_1m_2} 
\end{array}
\right).
\end{align}
Let $\hat{\delta}^{\rm B}$ be Bob's estimator  of $\delta$ in \eqref{e_deltab}.
Then we define Bob's density estimator:
\begin{align}
\hat{p}^{\rm B}:=\frac{1+\hat{\delta}^{\rm B}}{m_1m_2h^{2d}}.
\end{align}

We next show that the smoothness of the density ensures that $1+\delta$ is at most the order of a constant.
Recall that $A$ is a density lower bound on $p_X$ and $p_Y$.
Define $M:=\max\{m_1,m_2\}$ and 
$m:=\min\{m_1,m_2\}$. 
The definition of $(m_1,m_2)$ implies $Ah^d\le \frac1{M}$,
and hence
\begin{align}
h\le \left(\frac1{AM}\right)^{1/d}.
\label{e174}
\end{align}
Recall $B$ defined in \eqref{e_B}. We then have
\begin{align}
1+\delta
&=m_1m_2P_{XY}(\mathcal{A}\times \mathcal{B})
\\
&\le m_1m_2Bh^{2d}
\\
&\le \frac{Bm_1m_2}{A^2m^2}
\\
&=\frac{BM}{A^2m}.
\label{e_178}
\end{align}

Next, observe that 
$1= m_1 P_X(\mathcal{A}\times[-1,1]^d) \le m_1 Bh^d$ which yields $h^d\ge \frac1{m_1B}$. 
Similarly we also have $h^d\ge \frac1{m_2B}$, therefore
\begin{align}
h^d\ge \frac1{mB}
\label{e178}
\end{align}
Together with \eqref{e174}, we see that $h^d=\Theta(1/m)=\Theta(1/M)$.

Next, the bias of the density estimator is
\begin{align}
\mathbb{E}[\hat{p}^{\rm B}]-p_{XY}(x_0,y_0)
=
\frac{P_{XY}(\mathcal{A}\times \mathcal{B})}{h^{2d}}-p_{XY}(x_0,y_0)
\end{align}
which is just the bias of the rectangular kernel estimator (with bandwidth $h$ in each of the two subspaces).
The rectangle kernel is order $1$ \cite[Definition~1.3]{tsybakov2008introduction} and compactly supported while, by assumption $\beta\in(0,2]$, and therefore the bias is bounded by (\cite[Proposition 1.2]{tsybakov2008introduction})
\begin{align}
|\mathbb{E}[\hat{p}^{\rm B}]-p_{XY}(x_0,y_0)|
\le Ch^{\beta}
\label{e181}
\end{align}
where $C$ is a constant depending only on $\beta,d$ and $L$.

\subsection{One-Way Case}\label{sec_d1way}
By Corollary~\ref{cor7} and \eqref{e178}, we can bound the variance of the density estimator as
\begin{align}
\var(\hat{p}^{\rm B})&=\frac1{m_1^2m_2^2h^{4d}}\var(\hat{\delta}^{\rm B})
\\
&\le \frac1{m_1^2m_2^2h^{4d}}
\cdot \frac{25(1+\delta)m_1m_2}{n}
\\
&\le \frac{25(1+\delta)B^4m^3}{nM}
\label{e184}
\end{align}
where \eqref{e184} used \eqref{e178}.
Also by Corollary~\ref{cor7}, the communication constraint is satisfied if the following holds
\begin{align}
\frac{2.2(1+\delta)n}{m_1}\log_2\frac{m_1}{10} +1\le k.
\end{align}
Now we can choose $h$ so that $m_1=(\frac{k}{\log_2 k})^{\frac{d}{d+2\beta}}$ as defined by \eqref{em1}, and set
\begin{align}
n&=\left\lfloor\left(\frac{2.2(1+\delta_{\rm max})}{m_1}\log_2\frac{m_1}{10}\right)^{-1} (k-1)\right\rfloor
\end{align}
where $\delta_{\rm max}$, defined as the right side of \eqref{e_178} and hence depends only on $(A,\beta,L)$, is an upper estimate of the true parameter $\delta$.
Then the communication constraint is satisfied.
Moreover by the bias \eqref{e181} and the variance \eqref{e184} bounds, 
the risk is bounded by 
\begin{align}
&\quad|\mathbb{E}[\hat{p}^{\rm B}]-p_{XY}(x_0,y_0)|^2
+\var(\hat{p}^{\rm B})
\nonumber\\
&\le C^2h^{2\beta}+\frac{25(1+\delta)B^4m^3}{nM}
\\
&=(Am)^{-2\beta/d}+\frac{25(1+\delta)B^4m^3}{nM}
\label{e189}
\\
&\le D (\frac{k}{\log k})^{-\frac{2\beta}{d+2\beta}}
\end{align}
where $D$ is a constant depending only on $\beta$, $L$,  and $A$, 
and we used the fact that $\delta$ is bounded above by 
\eqref{e178} and the bound on $h$ shown in \eqref{e174}.
This proves the upper bound in Theorem~\ref{thm3} for $\beta\in(0,2]$.

\subsection{Interactive Case}\label{sec_d_interactive}
Choose $h$ such that $m$ as defined by $m:=\min\{m_1,m_2\}$ and \eqref{em1}-\eqref{em2} satisfies 
\begin{align}
m:=k^{\frac{d}{d+2\beta}},
\end{align}
and set
\begin{align}
n:=\left\lfloor\frac{mk\ln2}{13(1+\delta_{\rm max})}\right\rfloor
\end{align}
where as before $\delta_{\rm max}$ is an upper bound on $\delta$ and only depends on $(A,\beta,L)$.
By Corollary~\ref{cor8}, for $k$ large enough we have $m>10$, and the communication cost is bounded by $k$. 
Moreover from \eqref{e189}, the risk is bounded by $Dk^{-\frac{2\beta}{d+2\beta}}$ for some $D$ depending only on $\beta$, $L$, and $A$.
This proves the upper bound in Theorem~\ref{thm4} when $\beta\in(0,2]$.
%The claim on the bound on $r$ is seen from \eqref{e33}.

\subsection{Extension to $\beta>2$}\label{sec_betage}
For $\beta>2$, the rectangular kernel is no longer minimax optimal. 
However, observe the following:
\begin{prop}\label{prop10}
For any positive integers $d$ and $l$, there exists an order $l$-kernel in $\mathbb{R}^d$ which is a linear combination of $(\lfloor l/2\rfloor+1)^d$ indicator functions.
\end{prop}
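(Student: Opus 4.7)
The plan is to build a one-dimensional order-$l$ kernel as a linear combination of $N+1$ symmetric interval indicators, where $N := \lfloor l/2\rfloor$, and then form the $d$-fold tensor product. Translating the paper's definition into moments, what I need is $\int K = 1$ together with $\int x^s K(x)\,dx = 0$ for every multi-index $s$ with $1 \le |s| \le l$.

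For the one-dimensional step, I will fix any distinct positive reals $0 < a_0 < a_1 < \cdots < a_N$ and look for
\[
K_1 = \sum_{i=0}^N c_i\,\mathbf{1}_{[-a_i,a_i]}
\]
with coefficients $c_i \in \mathbb{R}$ to be determined. Since each summand is even, $K_1$ is even, so $\int u^j K_1(u)\,du = 0$ automatically for every odd $j$. Whether $l = 2N$ or $l = 2N+1$, it therefore suffices to enforce $\int K_1 = 1$ together with $\int u^{2m} K_1(u)\,du = 0$ for $m = 1,\ldots,N$, a total of $N+1$ linear conditions in $N+1$ unknowns. Computing $\int u^{2m} K_1(u)\,du = \frac{2}{2m+1}\sum_{i=0}^N c_i a_i^{2m+1}$ and setting $b_i := 2a_i c_i$ and $A_i := a_i^2$, the system becomes
\[
\sum_{i=0}^N b_i = 1, \qquad \sum_{i=0}^N b_i A_i^m = 0 \quad (m = 1,\ldots,N),
\]
i.e., an $(N+1)\times(N+1)$ Vandermonde system in the distinct positive nodes $A_0,\ldots,A_N$. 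The Vandermonde matrix is invertible, yielding unique $(b_i)$ and hence valid $(c_i)$ since $a_i > 0$.

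For the $d$-dimensional step, I will put $K(x_1,\ldots,x_d) := \prod_{j=1}^d K_1(x_j)$. Then $\int K = (\int K_1)^d = 1$, and for any multi-index $s$ with $1 \le |s| \le l$ at least one $s_j \ge 1$, while $s_j \le |s| \le l$; thus the factor $\int x_j^{s_j} K_1(x_j)\,dx_j$ vanishes by the 1D construction, so the full product integral vanishes. Expanding the product of sums,
\[
K = \sum_{(i_1,\ldots,i_d) \in \{0,\ldots,N\}^d} \Bigl(\prod_{j=1}^d c_{i_j}\Bigr)\,\mathbf{1}_{[-a_{i_1},a_{i_1}]\times\cdots\times[-a_{i_d},a_{i_d}]},
\]
which realizes $K$ as a linear combination of exactly $(N+1)^d = (\lfloor l/2\rfloor + 1)^d$ indicators of axis-aligned hypercubes, as required.

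There is no real obstacle here; the only subtlety is the bookkeeping on parity, which is precisely what the use of \emph{symmetric} intervals resolves — odd moments are killed for free, so the number of active linear constraints drops from $l+1$ to $\lfloor l/2\rfloor + 1$, matching the claimed indicator count exactly. Finiteness of $\int|u|^l |K_1(u)|\,du$ (needed in Tsybakov's definition) is immediate since $K_1$ is bounded and compactly supported.
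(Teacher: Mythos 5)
Your proof is correct and follows essentially the same route as the paper's own argument: a one-dimensional construction from $\lfloor l/2\rfloor+1$ even symmetric interval indicators, whose parity kills the odd moments for free, reduction of the remaining even-moment conditions to a Vandermonde system in the squared endpoints, and a tensor product to lift to $\mathbb{R}^d$. The only cosmetic difference is that the paper fixes integer endpoints $a_i=i$ whereas you allow arbitrary distinct positive $a_i$ and spell out the node change $A_i=a_i^2$; the Vandermonde invertibility step and the final indicator count $(\lfloor l/2\rfloor+1)^d$ are identical.
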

\begin{proof}
In the following we prove for $d=1$; the case of general $d$ will then follow by taking the tensor product of kernel functions in $\mathbb{R}$.
Note that an $l$-th kernel must satisfy the following equations:
\begin{align}
\int K(u)du&=1;\label{e142}
\\
\int u^jK(u)du&=0,\quad j=1,\dots,l.
\label{e143}
\end{align}
Let use consider $K$ of the following form:
\begin{align}
K(u)=\sum_{k=1}^{k_0}c_k1_{[-k,k]}
\end{align}
where $k_0:=\lfloor l/2\rfloor+1$.
Since such $K(u)$ is an even function, \eqref{e142}-\eqref{e142} yield $k_0$ nontrivial equations for $c_1,\dots, c_{k_0}$ (i.e., only when $j$ is even):
\begin{align}
2\sum_{k=1}^{k_0}kc_k=1;
\\
\sum_{k=1}^{k_0}\frac{2k^{j+1}}{j+1}c_k&=0,\quad j\in\{1,\dots,l\}\cap 2\mathbb{Z}.
\end{align}
From the formula for the determinant of the Vandermonde matrix, we see that these equations have a unique solution for $c_1,\dots,c_{k_0}$.
\end{proof}
Now for general $\beta>0$,
we can take an order $l=\lfloor \beta\rfloor$ kernel as in Proposition~\ref{prop10}.
We can estimate $\frac1{h^{2d}}\int p_{XY}(x,y)K(\frac{(x,y)-(x_0,y_0)}{h})$ by applying the Bernoulli distribution estimator repeatedly for $(\lfloor l/2\rfloor+1)^{2d}$ times.
Therefore by the similar arguments as the preceding sections we see that the upper bounds in Theorem~\ref{thm3} and Theorem~\ref{thm4} hold for $\beta>2$ as well.

\section{One-way Density Estimation Lower Bound}\label{sec_1pl}
\subsection{Upper Bounding $s^*(X;Y)$}
The pointwise estimation lower bound is obtained by lower bounding the risk for estimating $P_{\uX\uY}$ (with $\uX$ and $\uY$ being indicators of neighborhoods of $x_0$ and $y_0$), and applying Le Cam's inequality to the latter.
Therefore we are led to considering the strong data processing constant for the biased Bernoulli distribution.
\begin{thm}\label{thm_1point}
Let $P_{XY}^{(\delta)}$ be as defined in \eqref{e_pdelta}.
where $\delta\in(-1,1)$ and $m>1$.
Then $s^*(X;Y)\le \frac{\delta^2}{m\ln m-m+1}$.
\end{thm}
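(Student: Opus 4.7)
The strategy is to use the Ahlswede--Gács variational characterization of $s^*$, reduce the problem for binary $X$ to a scalar optimization, and then bound the relevant ratio via Taylor expansion together with an explicit minimum computation. By Ahlswede--Gács, $s^*(X;Y) = \sup_{Q \neq P_X} D(Q P_{Y|X} \| P_Y)/D(Q \| P_X)$. For binary $X$, writing $Q = [q, 1-q]$, a direct calculation from \eqref{e_pdelta} shows $Q P_{Y|X} = [q', 1-q']$ with $q' = 1/m + \alpha(q-1/m)$ and $\alpha := \delta/(m-1)$. Setting $g(q) := D([q,1-q]\|[1/m,1-1/m])$, the task reduces to bounding $\sup_q g(q')/g(q)$.

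I would split into two regimes. For $q \le 1/m$, use the integral representation $g(r) = \int_0^{1/m - r} v/((r+v)(1-r-v))\,dv$ and the substitution $v \mapsto \alpha v$ in $g(q')$. Since $q, q' \le 1/m \le 1/2$ and $s(1-s)$ is monotone on $(0, 1/2)$, together with $q' + \alpha v \ge q + v$ (because $q' - q = (1-\alpha)(1/m - q) \ge 0$), the integrand of $g(q')/\alpha^2$ is pointwise dominated by that of $g(q)$, giving $g(q') \le \alpha^2 g(q)$. Combined with the elementary $(m-1)^2 \ge m\ln m - m + 1$ (equivalent to $m - 1 \ge \ln m$), this yields the desired bound in this regime.

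For $q > 1/m$, Taylor's theorem with integral remainder and the fact that $g''(s) = 1/(s(1-s))$ is maximized on $[1/m, q']$ at $s = 1/m$ (since $|q' - 1/m| < 1/m$ gives $q' < 2/m < 1/2$) produces the upper bound $g(q') \le \tfrac{1}{2} g''(1/m) (q'-1/m)^2 = m^2 \alpha^2 (q-1/m)^2/(2(m-1))$. The key lemma is the matching lower bound
\begin{equation*}
g(q) \;\ge\; \frac{m^2 (m\ln m - m + 1)}{2(m-1)^3}\,(q - 1/m)^2, \qquad q \in (1/m, 1],
\end{equation*}
from which combining the two bounds gives $g(q')/g(q) \le \alpha^2(m-1)^2/(m\ln m - m + 1) = \delta^2/(m\ln m - m + 1)$.

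To prove the lemma I would minimize $q \mapsto g(q)/(q-1/m)^2$ explicitly. Using $g'(q) = \ln((m-1)q/(1-q))$, the first-order condition $(q - 1/m)g'(q) = 2 g(q)$ is solved by $q_* = 1 - 1/m$: at this point $g(1-1/m) = (1-2/m)\ln(m-1)$ and $g'(1-1/m) = 2\ln(m-1)$, and both sides equal $2(1-2/m)\ln(m-1)$. Comparing with the boundary values (the limit $m^2/(2(m-1))$ as $q \to 1/m^+$ and the value $m^2\ln m/(m-1)^2$ at $q = 1$) confirms $q_*$ is the global minimizer, with minimum value $m\ln(m-1)/(m-2)$. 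The lemma then reduces to the elementary inequality $2(m-1)^3 \ln(m-1) \ge m(m-2)(m\ln m - m + 1)$, which holds with equality at $m = 2$ and whose $m$-derivative comparison shows the left side grows faster thereafter. The main obstacle is this sharp lower bound on $g(q)/(q-1/m)^2$: naive convexity gives only $g(q') \le \alpha g(q)$, which is too weak by a factor of order $(m-1)/\ln m$, so identifying the exact minimizer $q_* = 1 - 1/m$ and the corresponding value is the crux of the argument.
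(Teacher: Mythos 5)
Your approach is a genuinely different route from the paper's: after the Ahlswede--G\'acs reduction, the paper factors $D(Q_Y\|P_Y)/D(Q_X\|P_X)$ through $\chi^2$-divergence (using $D\le D_{\chi^2}$), bounds the $\chi^2$-ratio by the squared maximal correlation $\rho_{\sf m}^2=\delta^2/(m-1)^2$, and bounds $D_{\chi^2}/D$ by the $f$-function ratio $\sup_{0<t\le m}(t-1)^2/(t\ln t - t+1)=(m-1)^2/(m\ln m-m+1)$. You instead carry out an explicit one-variable minimization of $g(q)/(q-1/m)^2$ and identify the interior minimizer $q_*=1-1/m$, which is a nice structural observation.

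However there is a genuine gap in your regime $q>1/m$. The Taylor estimate $g(q')\le \tfrac12 g''(1/m)(q'-1/m)^2$ is valid only if $g''(s)=1/(s(1-s))$ attains its maximum at $s=1/m$ over the segment joining $1/m$ and $q'$. Since $q'-1/m=\alpha(q-1/m)$ with $\alpha=\delta/(m-1)$, the hypothesis $\delta\in(-1,1)$ allows $\delta<0$, in which case $q'<1/m$; then $g''$ is \emph{decreasing} on $[q',1/m]\subset(0,1/2)$ so its max is $g''(q')$, not $g''(1/m)$, and $q'$ may be as small as $(1+\delta)/m$. Concretely, take $m=100$, $\delta=-0.9$, $q=1$: then $q'=0.001$, $g(q')=d(0.001\,\|\,0.01)\approx 6.74\times 10^{-3}$, while $\tfrac12 g''(1/m)(q'-1/m)^2\approx 4.09\times 10^{-3}$, so your claimed upper bound on $g(q')$ is simply false. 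The paper's $\chi^2$-factorization sidesteps this because it depends only on $\delta^2$ and treats both signs uniformly; your explicit parameterization would need a separate argument for $q'<1/m$. Two smaller issues: your parenthetical ``$q'<2/m<1/2$'' tacitly assumes $m>4$, whereas the theorem is stated for $m>1$; and the closing elementary inequality $2(m-1)^3\ln(m-1)\ge m(m-2)(m\ln m-m+1)$ is asserted rather than proved.
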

\begin{proof}
For this proof we can assume without loss of generality that the logarithms are natural.
For any $Q_X$, let $Q_Y$ be the output through the channel $P_{Y|X}$.
Then
\begin{align}
s^*(X;Y)
&\le
\frac{D(Q_Y\|P_Y)}{D(Q_X\|P_X)}
\label{e_wellknown}
\\
&\le \frac{D_{\chi^2}(Q_Y\|P_Y)}{D_{\chi^2}(Q_X\|P_X)}
\cdot\frac{D_{\chi^2}(Q_X\|P_X)}{D(Q_X\|P_X)}
\label{e188}
\\
&\le \frac{\delta^2}{m\ln m-m+1}
\label{e_chi}
\end{align}
where we define the $\chi^2$ divergence in \eqref{e_chi2}.
The justification of the steps are as follows:
\eqref{e_wellknown} is well-known.
\eqref{e188} follows since the $\chi^2$ divergence dominates the KL divergence (see e.g.\ \cite{7552457}).
To see \eqref{e_chi}, note that $\frac{D_{\chi^2}(Q_Y\|P_Y)}{D_{\chi^2}(Q_X\|P_X)}$ is upper bounded by $\rho_{\sf m}^2(X,Y)$, the square of the maximal correlation coefficient (see e.g.\ \cite{ahlswede1976spreading,anantharam2013maximal}).
As the operator norm of a linear operator,
$\rho_{\sf m}(X,Y)$ can be shown to equal the second largest singular value of 
\begin{align}
{\bf M}&:=\left(\frac1{\sqrt{P_X(x)}}P_{XY}(x,y)\frac1{\sqrt{P_Y(y)}}\right)_{x,y}
\label{e190}
\\
&=\begin{bmatrix}
\frac{1+\delta}{m} & *
\\
* &1-\frac1{m}+\frac{\delta}{m(m-1)}
\end{bmatrix};
\label{e191}
\end{align}
see e.g. \cite{anantharam2013maximal}.
Since ${\bf M}$ is a symmetric matrix, its singular values are its eigenvalues.
The largest eigenvalue of ${\bf M}$ is 1, corresponding to the eigenvector $(\sqrt{P_X(0)},\sqrt{P_X(1)})$ (which is evident from \eqref{e190}), 
whereas the trace 
\begin{align}
\tr({\bf M})=\frac{1+\delta}{m}
+1-\frac1{m}+\frac{\delta}{m(m-1)}=1+\frac{\delta}{m-1}
\end{align}
which is evident from \eqref{e191}.
Therefore $\rho_{\rm m}(X;Y)= \frac{\delta}{m-1}$.
Moreover, since $\chi^2$ and $KL$ divergences are both $f$-divergences, their ratio can be bounded by the ratio of their corresponding $f$-functions (see e.g.\ \cite{7552457}):
\begin{align}
\frac{D_{\chi^2}(Q_X\|P_X)}{D(Q_X\|P_X)}
&\le \sup_{0< t\le m}\frac{(t-1)^2}{t\ln t-t+1}
\label{e205}
\\
&= \frac{(m-1)^2}{m\ln m-m+1};
\label{e199}
\end{align}
The constraint $t\le m$ in \eqref{e205} is because $\min_xP_X(x)=\frac1{m}$ and $\max_x\frac{Q_X(x)}{P_X(x)}\le m$.
To show \eqref{e199}, it suffices to show that $$\inf_{u\in(-1,m-1]}\frac{(1+u)\ln(1+u)-u}{u^2}$$ is achieved at $u=m-1$.
For this, it suffices to show that the derivative of the objective function, $\frac{-(2+u)\ln(1+u)+2u}{u^3}$ is negative on $(-1,m-1]$.
Indeed, define $\phi(u):=(2+u)\ln (1+u)-2u$.
We can check that $\phi(0)=0$, $\phi'(0)=0$, and $\phi''(u)=\frac{u}{(1+u)^2}$, which imply that $\phi(u)>0$ for $u>0$ and $\phi(u)<0$ for $u<0$.
Therefore \eqref{e199}, and hence \eqref{e_chi}, holds.
\end{proof}

\subsection{Lower Bounding One-Way NP Estimation Risk}\label{1waylower}
Given $k,d,\beta,L,A$, consider a distribution $P_{\underline{X}\underline{Y}}$ on $\{0,1\}^2$ with matrix
\begin{align}
\left(
\begin{array}{cc}
\frac1{m^2}(1+\delta)  & \frac1{m}(1-\frac1{m})-\frac{\delta}{m^2}      \\
\frac1{m}(1-\frac1{m})-\frac{\delta}{m^2}
&     (1-\frac1{m})^2+\frac{\delta}{m^2}
\end{array}
\right),
\label{e196}
\end{align}
where
$m:=\left(\frac{ak}{\ln k}\right)^{\frac{d}{2\beta+d}}$ and $\delta:=m^{-\frac{\beta}{d}}$,
with $a:=\frac{16\beta+8d}{d}\ln2$ being a constant. 
We then need to ``simulate'' smooth distributions from $P_{\underline{X}\underline{Y}}$.
Let $f\colon \mathbb{R}^d\to [0,\infty)$ be a function satisfying the following properties:
\begin{itemize}
\item $f$ has a compact support;
\item $\int_{\mathbb{R}^d} f=1$;
\item $f(0)>0$;
\item $f(x)\in [0,1]$, for all $x\in\mathbb{R}^d$;
\item $\|f\|_{\mathbb{R}^d,\beta}<\frac{L}{4}$;
\item Define $g(x,y)=f(x)f(y)$ as a function on $\mathbb{R}^{2d}$. Then $\|g\|_{\mathbb{R}^{2d},\beta}<\frac{L}{4}$.
\end{itemize}
Clearly, such a function exists for any given $\beta,L,d$.
For sufficiently large $m$ such that $m^{-1/d}\supp(f)+x_0\in[0,1]^d$ and $m^{-1/d}\supp(f)+y_0\in[0,1]^d$ (recall that $(x_0,y_0)$ is the given point in the density estimation problem),
define 
\begin{align}
p_{X|\underline{X}=0}(x):=
\frac1{P_{\underline{X}}(0)}f(m^{\frac1{d}}(x-x_0)),\quad \forall x\in\mathbb{R}^d;
\\
p_{Y|\underline{Y}=0}(y):=
\frac1{P_{\underline{Y}}(0)}f(m^{\frac1{d}}(y-y_0)),\quad \forall y\in\mathbb{R}^d.
\end{align}
Since $P_{\underline{X}}(0)=P_{\underline{Y}}(0)=\frac1{m}$, clearly the above define valid probability densities supported on $[0,1]^d$.
Define 
\begin{align}
p_{X|\underline{X}=1}(x):=\frac{1\{x\in[0,1]^d\}}{P_{\underline{X}}(1)}[1-f(m^{1/d}(x-x_0))];
\\
p_{Y|\underline{Y}=1}(y):=\frac{1\{y\in[0,1]^d\}}{P_{\underline{Y}}(1)}[1-f(m^{1/d}(y-y_0))],
\end{align}
which are also probability densities supported on $[0,1]^d$.
Define $P_{XY|\underline{XY}}=P_{X|\underline{X}}P_{Y|\underline{Y}}$, where $P_{X|\underline{X}}$ and $P_{Y|\underline{Y}}$ are conditional distributions defined by the densities above.
Under the joint distribution $P_{XY\underline{XY}}$, we have 
\begin{align}
p_X(x)=p_Y(y)=1,\quad\forall x,y\in[0,1]^d.
\end{align}
Define 
\begin{align}
\bar{P}_{XY\underline{XY}}=P_{X|\underline{X}}P_{Y|\underline{Y}}P_{\underline{X}}P_{\underline{Y}}.
\end{align}
We now check that the density of $P_{XY}$ satisfies $\|p_{XY}\|_{(0,1)^{2d},\beta}\le L$ for $m$ sufficiently large.
Indeed, for $x,y\in[0,1]^d$,
\begin{align}
&\quad p_{XY}(x,y)
\nonumber\\
&=\sum_{i,j=0,1}p_{XY|\underline{XY}=(i,j)}(x,y)P_{\underline{XY}}(i,j)
\\
&=\sum_{i,j=0,1}p_{XY|\underline{XY}=(i,j)}(x,y)\bar{P}_{\underline{XY}}(i,j)
\nonumber\\
&\quad+\tfrac{\delta}{m^2}(p_{X|\underline{X}=0}(x)-p_{X|\underline{X}=1}(x))(p_{Y|\underline{Y}=0}(y)-p_{Y|\underline{Y}=1}(y))
\\
&=1+\tfrac{\delta}{m^2}(p_{X|\underline{X}=0}(x)-p_{X|\underline{X}=1}(x))
\nonumber\\
&\quad\cdot(p_{Y|\underline{Y}=0}(y)-p_{Y|\underline{Y}=1}(y))
\\
&=1+\delta\left[-\tfrac1{m-1}+\tfrac1{1-\tfrac1{m}}f(m^{1/d}(x-x_0))\right]
\nonumber\\
&\quad\cdot\left[-\tfrac1{m-1}+\tfrac1{1-\tfrac1{m}}f(m^{1/d}(y-y_0))\right]
\label{e206}
\\
&={\rm const.}-\frac{\delta m}{(m-1)^2}f(m^{1/d}(x-x_0))
\nonumber\\
&\quad-\frac{\delta m}{(m-1)^2}f(m^{1/d}(y-y_0))
\nonumber\\
&\quad+\frac{\delta}{(1-1/m)^2}f(m^{1/d}(x-x_0))f(m^{1/d}(y-y_0)).
\end{align}
By the assumptions on $f$, we see that 
\begin{align}
\|m^{-\beta/d}f(m^{1/d}(\cdot-x_0))\|_{(0,1)^d,\beta}&\le \frac{L}{4};
\\
\|m^{-\beta/d}f(m^{1/d}(\cdot-y_0))\|_{(0,1)^d,\beta}&\le \frac{L}{4};
\\
\|m^{-\beta/d}f(m^{1/d}(\cdot-x_0))f(m^{1/d}(*-y_0))\|_{(0,1)^{2d},\beta}&\le \frac{L}{4}.
\end{align}
Therefore with the choice $\delta=m^{-\beta/d}$, we have $\|p_{XY}\|_{(0,1)^{2d},\beta}\le L$ for $m\ge 10$.

Now we can apply a Le Cam style argument \cite{assouad1996fano} for the estimation lower bound.
Suppose that there exists an algorithm that estimates the density at $(x_0,y_0)$ as $\hat{p}$.
Alice and Bob can convert this to an algorithm for testing the binary distributions $P_{\underline{XY}}$ against $\bar{P}_{\underline{XY}}$.
Indeed, suppose that ${\bf (\underline{X},\underline{Y})}$ is an infinite sequence of i.i.d.\ random variable pairs according to either $P_{\underline{XY}}$ or $\bar{P}_{\underline{XY}}$.
Using the local randomness (which is implied by the common randomness),
Alice and Bob can simulate the sequence of i.i.d.\ random variables ${\bf (X,Y)}$ according to either $P_{XY}$ or $\bar{P}_{XY}$, by applying the random transformations $P_{X|\underline{X}}$ and $P_{Y|\underline{Y}}$ coordinate-wise.
Then Alice and Bob can apply the density estimation algorithm to obtain $\hat{p}$.
Note that $\bar{p}_{XY}(x_0,y_0)=1$ while 
\begin{align}
p_{XY}(x_0,y_0)=1+\delta\left[
\frac{m}{m-1}f(0)-\frac1{m-1}
\right]^2,
\label{e216}
\end{align}
the latter following from \eqref{e206}.
Now suppose that Bob declares $P_{\underline{XY}}$ if 
\begin{align}
|\hat{p}-p_{XY}(x_0,y_0)|\le |\hat{p}-1|,
\label{e_test}
\end{align}
and $\bar{P}_{\underline{XY}}$ 
otherwise.
By Chebyshev's inequality, the error probability (under either hypothesis) is upper bounded by 
\begin{align}
4\delta^{-2}
\left[
\tfrac{m}{m-1}f(0)-\tfrac1{m-1}
\right]^{-4}
\hspace{-10pt}\sup_{p_{XY}\in\mathcal{H}(\beta,L,A)}\mathbb{E}[|\hat{p}-p_{XY}(x_0,y_0)|^2].
\label{e211}
\end{align}
On the other hand, from \eqref{e19} and Theorem~\ref{thm_1point} we have
\begin{align}
\frac{D(P_{{\bf \underline{Y}}\Pi}\|\bar{P}_{{\bf \underline{Y}}\Pi})}{H(\Pi)}&\le s^*(\underline{X};\underline{Y})
\label{e213}
\\
&\le \frac{\delta^2}{m\ln m-m+1}
\\
&\le \frac{2\delta^2}{m\cdot\frac{d}{4\beta+2d}\ln k}
\\
&\le \frac{8\beta+4d}{dak}
\end{align}
when $m$ is sufficiently large.
However, it is known (from Kraft's inequality, see e.g.\ \cite{elements}) that the expected length of a prefix code upper bounds the entropy. Thus
\begin{align}
k\ge \mathbb{E}[|\Pi|]\ge \frac1{\log 2}H(\Pi)
\end{align}
and therefore
\begin{align}
D(P_{{\bf \underline{Y}}\Pi}\|\bar{P}_{{\bf \underline{Y}}\Pi})\le 
\frac{8\beta+4d}{da}\log 2.
\end{align}
Then by Pinsker's inequality (e.g.\ \cite{tsybakov2008introduction}), 
\begin{align}
1-\int dP_{{\bf \underline{Y}}\Pi}\wedge d\bar{P}_{{\bf \underline{Y}}\Pi} &\le \sqrt{\frac1{2\log e}D(P_{{\bf \underline{Y}}\Pi}\|\bar{P}_{{\bf \underline{Y}}\Pi})}
\\
&\le \sqrt{\frac{4\beta+2d}{da}\ln 2}
\\
&=\frac1{2}
\end{align}
where the last line follows from our choice $a=\frac{16\beta+8d}{d}\ln2$.
However, $\int dP_{{\bf \underline{Y}}\Pi}\wedge d\bar{P}_{{\bf \underline{Y}}\Pi}$ lower bounds twice of \eqref{e211}.
Therefore we have
\begin{align}
&\quad\sup_{p_{XY}\in\mathcal{H}(\beta,L,A)}\mathbb{E}[|\hat{p}-p_{XY}(x_0,y_0)|^2]
\nonumber\\
&\ge
\frac{\delta^2}{8}
\left[
\frac{m}{m-1}f(0)-\frac1{m-1}
\right]^4
\cdot\frac1{2}
\\
&=\frac1{16}m^{-2\beta/d}\left[
\frac{m}{m-1}f(0)-\frac1{m-1}
\right]^4
\end{align}
which is lower bounded by $\frac1{17}m^{-2\beta/d}f^4(0)=\frac{f^4(0)}{17}\left(\frac{ak}{\ln k}\right)^{-\frac{2\beta}{2\beta+d}}$ for large enough $k$.
Since $a$ and $f(0)$ depend only on $d,\beta,L$, this 
establishes the lower bound in Theorem~\ref{thm3}.

\section{Interactive Density Estimation Lower Bound}\label{app9}
In this section we prove the lower bound in Theorem~\ref{thm4}.
\subsection{Upper Bounding $s^*_{\infty}(X;Y)$}
The heart of the proof is the following technical result:
\begin{thm}\label{thm7}
There exists $c>0$ small enough such that the following holds: For any $P_{XY}$ which is a distribution on $\{0,1\}^2$ corresponding to the following matrix:
\begin{align}
\left(
\begin{array}{cc}
p^2(1+\delta)  & p\bar{p}-p^2\delta      \\
p\bar{p}-p^2\delta
& \bar{p}^2+p^2\delta
\end{array}
\right)
\label{e138}
\end{align}
where $p,|\delta|\in [0,c)$ and we used the notation $\bar{p}:=1-p$, we have
\begin{align}
s_{\infty}^*(X;Y)\le c^{-1}p\delta^2.
\end{align}
\end{thm}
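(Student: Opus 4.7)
The plan is to bound $s_\infty^*(X;Y)$ by reducing to a per-round conditional strong data processing analysis, and then controlling the conditional $s^*$ uniformly across all histories. Any feasible tuple $(R,S)$ from Definition~\ref{defn2} can be decomposed round by round: for each odd $i$, the Markov chain $U_i-(X,U^{i-1})-Y$ under $P_{XY|U^{i-1}=u^{i-1}}$ yields
\[
I(U_i;Y|U^{i-1}=u^{i-1})\le s^*(X;Y|U^{i-1}=u^{i-1})\cdot I(U_i;X|U^{i-1}=u^{i-1}),
\]
and symmetrically for even $i$. So if I can produce a uniform bound $s^*(X;Y|U^{i-1}=u^{i-1})\le c^{-1}p\delta^2$ over every admissible history, summing over rounds and taking expectations immediately yields $R\le c^{-1}p\delta^2\cdot S$, and supping over protocols gives the theorem.

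The main work is therefore to parameterize the conditional law $P_{XY|U^{i-1}=u^{i-1}}$ and to bound its strong data processing constant. I would write this conditional law in the same $2{\times}2$ form as \eqref{e138}, with history-dependent marginal parameters $p_u^X:=P(X=0|u^{i-1})$, $p_u^Y:=P(Y=0|u^{i-1})$, and a residual correlation parameter $\delta_u$. The zero-correlation point $\delta=0$ corresponds to product distribution: in the $\delta=0$ case $X$ and $Y$ are independent, and since the Markov structure of Definition~\ref{defn2} preserves independence, the conditional law under $\delta=0$ stays a product. Consequently, the conditional joint distribution at the true $\delta$ is obtained by perturbing the conditional product law; writing the Radon--Nikodym derivative to first order in $\delta$ via a Hoeffding/Efron--Stein decomposition, $\delta_u$ can be tracked as an explicit ratio involving the likelihood updates from odd and even rounds. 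A sharpened analogue of Theorem~\ref{thm_1point} for the asymmetric biased distribution then gives $s^*(X;Y|U^{i-1}=u^{i-1})\le C\min(p_u^X,p_u^Y)\,\delta_u^2$, following the same three-step chain \eqref{e_wellknown}--\eqref{e_chi}: (i) bound by the $\chi^2$-to-$\chi^2$ ratio (i.e.\ $\rho_{\sf m}^2$ of the conditional), (ii) compute this second-largest singular value from the trace identity as roughly $p_u\delta_u/\bar p_u$, and (iii) convert $\chi^2$ to KL by the factor $\min_x 1/P_X(x)=1/\min(p_u^X,\bar p_u^X)$.

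To close the loop I need to show the product $\min(p_u^X,p_u^Y)\,\delta_u^2$ is bounded by $O(p\delta^2)$ uniformly over $u^{i-1}$, i.e.\ that the protocol cannot amplify this potential. This is where the structure of the update should be exploited: an odd-round update scales $p_u^X$ by some factor $\lambda\ge 1$ (concentrating the posterior on $\{X=0\}$) while shrinking the effective tilt $\delta_u$ by roughly the same factor, because only the $(0,0)$ entry of \eqref{e138} carries the $\delta$-perturbation and the $(0,0)/(0,1)$ ratio is what an $X$-side message can resolve. A suitable Lyapunov/potential argument $\Phi_u:=\min(p_u^X,p_u^Y)\,\delta_u^2/p\delta^2$ should be bounded by a constant along all trajectories. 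A clean way is to directly show $\mathbb{E}[p_U^X\delta_U^2\mid \textrm{past}]\le p_{u^{i-1}}^X\delta_{u^{i-1}}^2$ via the Hoeffding decomposition plus the orthogonality of the interaction component to the $X$ and $Y$ marginals (the computation $E_Q[h|X]=E_Q[h|Y]=0$ that I sketched for the $\delta=0$ chi-square expansion).

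The hardest step will be step two: establishing the uniform bound on the conditional $s^*$ while simultaneously tracking $(p_u^X,p_u^Y,\delta_u)$. Three features conspire to make this delicate. First, unlike the unbiased Bernoulli or Gaussian cases where $s_\infty^*=s_1^*$ and symmetry trivializes things, here the asymmetry forces us to pay the sharp $\chi^2$-to-KL conversion factor. Second, the $p_u^X$ and $p_u^Y$ can evolve in opposite directions across odd/even rounds, so the right potential is not obvious --- the $\min$ seems essential and must be controlled with care. Third, the proof must not use any monotonicity that already fails when $\delta=0^+$, so the argument is inherently second-order in $\delta$ and will require a quadratic expansion of the KL divergence rather than a naive chaining; this is precisely the place where the methods of \cite{liu2017secret} need to be strengthened, consistent with the remark that earlier bounds on $s^*_\infty$ were not tight enough for the biased case.
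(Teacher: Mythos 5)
Your round-by-round reduction is valid as an inequality: conditioned on $U^{i-1}=u^{i-1}$, the odd-round Markov chain $U_i-(X,U^{i-1})-Y$ indeed gives
$I(U_i;Y\mid U^{i-1}=u^{i-1})\le s^*\bigl(X;Y\mid U^{i-1}=u^{i-1}\bigr)\,I(U_i;X\mid U^{i-1}=u^{i-1})$, and summing this over rounds after taking expectations would yield $R\le \bigl(\sup_{i,u^{i-1}}s^*(\,\cdot\mid u^{i-1})\bigr)\,S$. But the key step --- the uniform pointwise bound $s^*(X;Y\mid U^{i-1}=u^{i-1})\le c^{-1}p\delta^2$ across all histories --- is false, and the gap is not small. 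The Markov structure preserves the odds ratio but not the marginals: the protocol can, with positive probability, drive the conditional marginal $\alpha_u:=P(X=0\mid u^{i-1})$ to a constant such as $1/2$ (already with $U_1=X\cdot B$ for an independent Bernoulli $B$ this happens). At such a history the conditional law looks like an approximately unbiased correlated Bernoulli pair with (preserved) correlation $\Theta(\delta)$, so the conditional $s^*$ is $\Theta(\delta^2)$, which for small $p$ is much larger than $c^{-1}p\delta^2$. Hence $\sup_{i,u^{i-1}}s^*(\,\cdot\mid u^{i-1})$ cannot be bounded by $c^{-1}p\delta^2$, and your per-round argument cannot close.

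The Lyapunov/martingale idea you reach for at the end --- $\mathbb{E}[p_U^X\delta_U^2\mid \text{past}]\le p_{u^{i-1}}^X\delta_{u^{i-1}}^2$ --- is a conditional-expectation inequality and therefore does not rescue the argument: the per-round decomposition needs worst-case control over histories, not average-case control. What the paper does instead is fundamentally different and avoids the per-round decomposition entirely: it bounds $s^*_\infty$ by the global ratio
\begin{align}
\frac{I(X;Y)-I(X;Y\mid\mathbf{U})}{I(\mathbf{U};X)+I(\mathbf{U};Y)}
=\frac{\mathbb{E}\bigl[\phi(\alpha_{\mathbf U},\beta_{\mathbf U})\bigr]}{\mathbb{E}\bigl[\psi(\alpha_{\mathbf U},\beta_{\mathbf U})\bigr]}
\le\sup_{\alpha,\beta}\frac{\phi(\alpha,\beta)}{\psi(\alpha,\beta)},
\end{align}
with $\psi(\alpha,\beta)=d(\alpha\|p)+d(\beta\|p)$ and $\phi$ a linearly corrected version of $I(p,p)-I(\alpha,\beta)$; this charges the information-theoretic \emph{cost} of pushing the marginal to $\alpha\approx 1/2$ (which is $\psi\approx\ln(1/p)$, not merely the round's cost) against the extracted information, so the dangerous histories are exactly those where $\psi$ is large. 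Your approach, in contrast, only has access to the incremental cost of the round that created the bad history, which is not enough. If you want to repair the argument along your lines you would essentially need to reinvent the $\phi/\psi$ ``potential ratio,'' so it is better to treat the ratio globally rather than round by round.
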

The proof can be found in Appendix~\ref{app_thm7}.

\subsection{Lower Bounding Interactive NP Estimation Risk}
The proof is similar to the one-way case (Section~\ref{1waylower}).
Consider the distribution $P_{\underline{XY}}$
on $\{0,1\}^2$ as in \eqref{e196}.
Let $m:=(ak)^{\frac{d}{2\beta+d}}$ and $\delta:=m^{-\frac{\beta}{d}}$, where $a=\frac{2\ln 2}{c}$ with $c$ being the absolute constant in Theorem~\ref{thm7}.
Pick the function $f$, and define $P_{\underline{XY}XY}$ and $\bar{P}_{\underline{XY}XY}$ as before.
Note that, as before, $\bar{p}_{XY}$ is uniform on $[0,1]^{2d}$, while
$\|p_{XY}\|_{(0,1)^{2d},\beta}\le L$ for $m\ge 10$.
$p_{XY}(x_0,y_0)$ has the same formula \eqref{e216},
and Alice and Bob can convert a (now interactive) density estimation algorithm to an algorithm for testing $P_{\underline{XY}}$ against $\bar{P}_{\underline{XY}}$.
With the same testing rule \eqref{e_test}, the error probability under either hypothesis is again upper bounded by \eqref{e211}.

Changes arise in \eqref{e213}, where we shall apply Theorem~\ref{thm7} instead.
Note that for the absolute constant $c$ in Theorem~\ref{thm7}, the condition $\frac1{m},|\delta|<c$ is satisfied for sufficiently large $k$ (hence sufficiently large $m$).
\begin{align}
\frac{D(P_{{\bf \underline{Y}}\Pi}\|\bar{P}_{{\bf \underline{Y}}\Pi})}{H(\Pi)}
&\le s^*_{\infty}(\underline{X};\underline{Y})
\\
&\le \frac{c^{-1}\delta^2}{m}
\\
&\le (cak)^{-1}.
\end{align}
Again using Kraft's inequality to Bound $H(\Pi)$, we obtain
\begin{align}
D(P_{{\bf \underline{Y}}\Pi}\|\bar{P}_{{\bf \underline{Y}}\Pi})\le \frac{\log 2}{ca}.
\end{align}
Then Pinsker's inequality yields 
\begin{align}
1-\int dP_{\underline{\bf Y}\Pi}\wedge d\bar{P}_{\underline{\bf Y}\Pi}
\le \sqrt{\frac{\ln 2}{2ca}}=\frac1{2}
\end{align}
since we selected $a=\frac{2\ln 2}{c}$.
Again $\int dP_{\underline{\bf Y}\Pi}\wedge d\bar{P}_{\underline{\bf Y}\Pi}$ lower bounds twice of \eqref{e211},
therefore
\begin{align}
&\quad\sup_{p_{XY}\in\mathcal{H}(\beta,L,A)}\mathbb{E}[|\hat{p}-p_{XY}(x_0,y_0)|^2]
\nonumber\\
&\ge
\frac{\delta^2}{8}
\left[
\frac{m}{m-1}f(0)-\frac1{m-1}
\right]^4
\cdot\frac1{2}
\\
&=\frac1{16}m^{-2\beta/d}\left[
\frac{m}{m-1}f(0)-\frac1{m-1}
\right]^4
\\
&\ge \frac{f^4(0)}{17}(ak)^{-\frac{2\beta}{2\beta+d}}
\end{align}
where the last line holds for sufficiently large $k$. Since $a$ is a universal constant and $f$ depends on $d,\beta,L$ only, this completes the proof of the interactive lower bound.

\section{Acknowledgement}
The author would like to thank Professor Venkat Anantharam for bringing the reference 
\cite{orlitsky1990worst} to my attention and some interesting discussions.
This research was supported by the starting grant from the Department of Statistics, University of Illinois, Urbana-Champaign.

\appendices
\section{Proof of Theorem~\ref{thm_s}}
\label{app_thm_s}
First, assume that $i\in\{1,2,\dots, r\}\setminus 2\mathbb{Z}$.
By the definitions of $P_{U_i|XU^{i-1}}$ and $P_{U_i|YU^{i-1}}$, we can verify that the following holds (for $\delta=0$):
\begin{align}
P_{X|U^i}^{(0)}(0|{\bf 0})=\frac{P_X(0)}{P_X(1)\prod_{1\le j\le i}^{\rm odd}\alpha_j^{-1}+P_X(0)}.
\label{e50}
\end{align}
Indeed, \eqref{e50} follows by applying induction on the following
\begin{align}
P^{(0)}_{X|U^i}(0|{\bf 0})
&=\frac{p_0}{p_0+p_1}
\\
&=\frac{P^{(0)}_{X|U^{i-2}}(0|{\bf 0})}{P^{(0)}_{X|U^{i-2}}(0|{\bf 0})+
P^{(0)}_{X|U^{i-2}}(1|{\bf 0})\alpha_i^{-1}}
\end{align}
where $p_0:=P^{(0)}_{X|U^{i-1}}(0|{\bf 0})P^{(0)}_{U_i|XU^{i-1}}(0|0,{\bf 0})$,
$p_1:=
P^{(0)}_{X|U^{i-1}}(1|{\bf 0})P^{(0)}_{U_i|XU^{i-1}}(0|1,{\bf 0})$,
and we used $P^{(0)}_{X|U^{i-1}}(0|{\bf 0})=P^{(0)}_{X|U^{i-2}}(0|{\bf 0})$ which in turn follows from the factorization
\begin{align}
P^{(0)}_{XYU_{i-1}|U^{i-2}}
&=
P^{(0)}_{XY|U^{i-2}}P^{(0)}_{U_{i-1}|YU^{i-2}}
\\
&=P^{(0)}_{X|U^{i-2}}P^{(0)}_{Y|U^{i-2}}P^{(0)}_{U_{i-1}|YU^{i-2}}.
\end{align}
Now from \eqref{e50},
\begin{align}
P^{(0)}_{U_i|U^{i-1}}(0|{\bf 0})&=
P^{(0)}_{X|U^{i-1}}(0|{\bf 0})+\alpha_i^{-1}P^{(0)}_{X|U^{i-1}}(1|{\bf 0})
\\
&=\frac{P_X(1)\prod_{1\le j\le i}^{\rm odd}\alpha_j^{-1}+P_X(0)}{P_X(1)\prod_{1\le j\le i-2}^{\rm odd}\alpha_j^{-1}+P_X(0)}.
\label{e52}
\end{align}
Similarly, by switching the roles of $X$ and $Y$ we have
\begin{align}
P^{(0)}_{U_{i-1}|U^{i-2}}(0|{\bf 0})
&=\frac{P_Y(1)\prod_{2\le j\le i-1}^{\rm even}\alpha_j^{-1}+P_Y(0)}{P_Y(1)\prod_{2\le j\le i-3}^{\rm even}\alpha_j^{-1}+P_Y(0)}.
\label{e54}
\end{align}
Therefore,
\begin{align}
&\quad P_{U^i}^{(0)}({\bf 0})
\nonumber\\
&=\prod_{1\le j\le i}P_{U_i|{\bf U}^{i-1}}^{(0)}(0|{\bf 0})
\\
&=\left[P_X(1)\prod_{1\le j\le i}^{\rm odd}\alpha_j^{-1}+P_X(0)\right]
\left[P_Y(1)\prod_{2\le j\le i}^{\rm even}\alpha_j^{-1}+P_Y(0)\right]
\label{e_supplement62}
\end{align}
for any $i=1,\dots,r$.
We also see from
\eqref{e50} and \eqref{e_supplement62} that for any $i$ odd,
\begin{align}
P^{(0)}_{XU^{i-1}(0,{\bf 0})}
&=P_X(0)\left[P_Y(1)\prod_{2\le j\le i-1}^{\rm even}\alpha_j^{-1}+P_Y(0)
\right]
\\
&=\frac1{m_1}\left[(1-\frac1{m_2})\prod_{2\le j\le i-1}^{\rm even}\alpha_j^{-1}+\frac1{m_2}
\right]
\\
&\le \frac{1.1}{m_1} \prod_{2\le j\le i-1}^{\rm even}\alpha_j^{-1}
\end{align}
where the last step follows since $\prod_{2\le j\le i-1}^{\rm even}\alpha_j^{-1}\ge \frac{10}{m_2}$.
Therefore, the claim \eqref{e43} follows, and \eqref{e44} is similar.

Next, 
consider any $i\in\{1,\dots,r\}$. 
Define
\begin{align}
\delta_i:=\frac{P_{XY|U^{i-1}={\bf 0}}(0,0)}{P_{X|U^{i-1}={\bf 0}}(0)P_{Y|U^{i-1}={\bf 0}}(0)}-1.
\end{align}
Observe that the construction of $P_{U^r|XY}$ fulfills the Markov chain conditions \eqref{e_markov0}-\eqref{e_markov}, implying  that
\begin{align}
\frac{P^{(\delta)}_{XY}(0,0)P^{(\delta)}_{XY}(1,1)}{P^{(\delta)}_{XY}(0,1)P^{(\delta)}_{XY}(1,0)}
=\frac{P^{(\delta)}_{XY|U^{i-1}}(0,0|{\bf 0})P^{(\delta)}_{XY|U^{i-1}}(1,1|{\bf 0})}{P^{(\delta)}_{XY|U^{i-1}}(0,1|{\bf 0})P^{(\delta)}_{XY|U^{i-1}}(1,0|{\bf 0})}.
\end{align}
%for any $i\in\{1,\dots,r\}$.
We therefore have\footnote{We use the notation $\bar{x}:=1-x$ for $x\in[0,1]$.} 
\begin{align}
\frac{(1+\delta)(1+\frac{\delta}{(m_1-1)(m_2-1)})}{(1-\frac{\delta}{m_1-1})(1-\frac{\delta}{m_2-1})}
=
\frac{(1+\delta_i)(1+\delta_i\frac{b^{(\delta)}c^{(\delta)}}{\bar{b}^{(\delta)}\bar{c}^{(\delta)}})}
{(1-\delta_i\frac{b^{(\delta)}}{\bar{b}^{(\delta)}})(1-\delta_i\frac{c^{(\delta)}}{\bar{c}^{(\delta)}})}
\label{e70}
\end{align}
where we defined
\begin{align}
b^{(\delta)}&:=P_{X|U^{i-1}}^{(\delta)}(0|{\bf 0});
\\
c^{(\delta)}&:=P_{Y|U^{i-1}}^{(\delta)}(0|{\bf 0}).
\end{align}
By continuity, we have
\begin{align}
b^{(\delta)}&=b^{(0)}+o(1);
\label{e71}
\\
c^{(\delta)}&=c^{(0)}+o(1),
\label{e72}
\end{align}
as $\delta\to 0$. 
It is also easy to see from \eqref{e70} that $\delta_i=O(\delta)$ (for this proof, only $\delta$ is the variable, and all other constants, such as $m$ and $(\alpha_i)$, can be hidden in the Landau notations).
Therefore \eqref{e70}\eqref{e71}\eqref{e72} yield
\begin{align}
&\quad1+\left(1+\frac{1}{m_1-1}\right)
\left(1+\frac{1}{m_2-1}\right)
\delta+o(\delta)
\nonumber\\
&=
1+\left(1+\frac{b^{(\delta)}}{\bar{b}^{(\delta)}}
\right)
\left(1+\frac{c^{(\delta)}}{\bar{c}^{(\delta)}}
\right)
\delta_i+o(\delta)
\\
&=
1+\left(1+\frac{b^{(0)}}{\bar{b}^{(0)}}
\right)
\left(1+\frac{c^{(0)}}{\bar{c}^{(0)}}
\right)
\delta_i+o(\delta).
\label{e_62}
\end{align}
Using the fact that $X$ and $Y$ are independent under $P^{(0)}$, noting \eqref{e50} and the assumption $\prod_{1\le j\le r}^{\rm odd}\alpha_j^{-1}\ge \frac{10}{m_1}$, we have 
\begin{align}
b^{(0)}=P_{X|U^{i'}}^{(0)}(0|{\bf 0})
\le \frac{\frac1{m_1}}{\frac{10}{m_1}(1-\frac1{m_1})+\frac1{m_1}}\le \frac1{10}, 
\end{align}
where $i'$ is the largest odd integer not exceeding $i$.
Similarly we also have $c^{(0)}\le \frac1{10}$.
Consequently, \eqref{e_62} yields
\begin{align}
\delta_i^2
&\ge \delta^2\left(\frac{(1+\frac1{m_1-1})(1+\frac1{m_2-1})}{(1+\frac1{9})^2}\right)^2+o(\delta^2)
\\
&\ge 0.9^4\delta^2+o(\delta^2).
\end{align}
Moreover,
let us define
\begin{align}
a^{(\delta)}:=P^{(\delta)}_{U_i|U^{i-1}}(0|{\bf 0}).
\end{align}

In the following paragraph we consider arbitrary $i\in\{1,2,\dots,r\}\setminus 2\mathbb{Z}$, and we shall omit the superscripts $(\delta)$ for $a^{(\delta)}$, $b^{(\delta)}$, $c^{(\delta)}$, unless otherwise noted. 
Then
\begin{align}
&\quad I(U_i;Y|U^{i-1}={\bf 0})
\nonumber\\
&=aD(P_{Y|U^i={\bf 0}}\|P_{Y|U^{i-1}={\bf0}})
\nonumber\\
&\quad+
\bar{a}D(P_{Y|U_i=1,U^{i-1}={\bf0}}\|P_{Y|U^{i-1}={\bf 0}})\label{e93}
\end{align}
We can verify that $P_{X|U^i={\bf0}}(0)=\frac{b}{b+\alpha_i^{-1}\bar{b}}$.
Therefore, 
\begin{align}
P_{Y|U^i={\bf 0}}(0)&=
\frac{b}{b+\alpha_i^{-1}\bar{b}}
\cdot c(1+\delta_{i-1})
\nonumber\\
&\quad
+\frac{\alpha_i^{-1}\bar{b}}{b+\alpha_i^{-1}\bar{b}}
\cdot
\frac{\bar{b}c-\delta_{i-1}bc}{\bar{b}}
\\
&=c+\frac{bc(1-\alpha_i^{-1})}{b+\alpha_i^{-1}\bar{b}}\delta_{i-1}.
\label{e95}
\end{align}
Therefore as $\delta\to0$,
\begin{align}
D(P_{Y|U^i={\bf 0}}\|P_{Y|U^{i-1}={\bf 0}})
&=d\left(P_{Y|U^i={\bf 0}}(0)\|c\right)
\\
&=\frac1{2} c\left(
\frac{b(\alpha_i-1)}{\alpha_ib+\bar{b}}\delta_{i-1}
\right)^2+o(\delta^2)
\\
&\ge  \frac1{2}c\left(b(\alpha_i-1)0.9^2\delta\right)^2
+o(\delta^2)
\\
&\ge\frac{0.9^4}{2}cb^2(\alpha_i-1)\delta^2+o(\delta^2)
\end{align}
where $d(p\|q):=p\log\frac{p}{q}+(1-p)\log\frac{1-p}{1-q}$ denotes the binary divergence function, and recall that we assumed the natural base of logarithms.
On the other hand, $P_{Y|U_i=1,U^{i-1}={\bf0}}(0)=P_{Y|X=1,U^{i-1}={\bf0}}(0)=c-\frac{\delta_{i-1}bc}{1-b}$.
Therefore
\begin{align}
D(P_{Y|U_i=1,U^{i-1}={\bf0}}\|P_{Y|U^{i-1}={\bf0}})
&=
d\left(c-\frac{\delta_{i-1}bc}{1-b}\|c\right)
\\
&= \frac1{2} c\left(\frac{\delta_{i-1}b}{1-b}\right)^2+o(\delta^2)
\\
&\ge \frac{0.9^4}{2}cb^2\delta^2+o(\delta^2).
\end{align}
Turning back to \eqref{e93}, we obtain
\begin{align}
&\quad I(U_i;Y|U^{i-1}={\bf 0})
\nonumber\\
&=\frac{0.9^4}{2}\left[ab^2c(\alpha_i-1)^2\delta^2+(1-a)cb^2\delta^2\right]
+o(\delta^2)
\\
&\ge \frac{0.9^5}{2} \left(\alpha_i^{-1}(\alpha_i-1)^2+1-\alpha_i^{-1}\right)b^2c\delta^2
+o(\delta^2)
\label{e80}
\\
&\ge \frac{0.9^5}{2}(\alpha_i-1)b^2c\delta^2
+o(\delta^2)
\label{e104}
\end{align}
where \eqref{e80} follows since \eqref{e52} implies
\begin{align}
a^{(\delta)}&=a^{(0)}+o(1)
\\
&= \frac{(1-\frac1{m_1})\prod_{1\le j\le i}^{\rm odd}\alpha_j^{-1}+\frac1{m_1}}{(1-\frac1{m_1})\prod_{1\le j\le i-2}^{\rm odd}\alpha_j^{-1}+\frac1{m_1}}+o(1)
\\
&\ge 
\frac{(1-\frac1{m_1})\prod_{1\le j\le i}^{\rm odd}\alpha_j^{-1}}{(1-\frac1{m_1})\prod_{1\le j\le i-2}^{\rm odd}\alpha_j^{-1}}+o(1)
\\
&= \alpha_i^{-1}+o(1)
\end{align} 
and 
\begin{align}
1-a^{(\delta)}&=1-a^{(0)}+o(1)
\\
&= \frac{(1-\frac1{m_1})(1-\alpha_i^{-1})\prod_{1\le j\le i-2}^{\rm odd}\alpha_j^{-1}}{(1-\frac1{m_1})\prod_{1\le j\le i-2}^{\rm odd}\alpha_j^{-1}+\frac1{m_1}}
+o(1)
\\
&\ge \frac{(1-\frac1{m_1})(1-\alpha_i^{-1})\frac{10}{m_1}}{(1-\frac1{m_1})\frac{10}{m_1}+\frac1{m_1}}+o(1)
\\
&\ge 0.9(1-\alpha_i^{-1})+o(1).
\end{align}
Moreover, by \eqref{e50},
\begin{align}
b^{(\delta)}&=b^{(0)}+o(1)
\\
&=\frac{\frac1{m_1}}{(1-\frac1{m_1})\prod_{1\le j\le i-1}^{\rm odd}\alpha_j^{-1}+\frac1{m_1}}+o(1)
\\
&\ge 
\frac{\frac1{m_1}}{(1-\frac1{m_1}+\frac1{10})\prod_{1\le j\le i-1}^{\rm odd}\alpha_j^{-1}}+o(1)
\\
&\ge \frac{0.9}{m_1}\prod_{1\le j\le i-1}^{\rm odd}\alpha_j+o(1).
\end{align}
Similarly,
\begin{align}
c^{(\delta)}\ge \frac{0.9}{m_2}\prod_{1\le j\le i-1}^{\rm even}\alpha_j+o(1).
\end{align}
Therefore by \eqref{e104} and \eqref{e_supplement62},
\begin{align}
I(U_i;Y|U^{i-1})
&=
I(U_i;Y|U^{i-1}={\bf 0})P_{U^{i-1}}({\bf 0})
\\
&\ge \frac{0.9^5}{2}(\alpha_i-1)b^2c\delta^2\prod_{1\le j\le i-1}\alpha_j^{-1}
+o(\delta^2)
\\
&\ge \frac{0.9^8\delta^2}{2m_1^2m_2}(\alpha_i-1)\prod_{1\le j\le i-1}^{\rm odd}\alpha_j+o(\delta^2)
\\
&=\frac{0.9^8\delta^2}{2m_1^2m_2}\left(
\prod_{1\le j\le i}^{\rm odd}\alpha_j
-
\prod_{1\le j\le i-2}^{\rm odd}\alpha_j\right)+o(\delta^2)
\end{align}
and hence
\begin{align}
\sum_{1\le i\le r}^{\rm odd}I(U_i;Y|U^{i-1})
&\ge\frac{0.9^8\delta^2}{2m_1^2m_2}\prod_{1\le j\le r}^{\rm odd}\alpha_j
+o(\delta^2),
\end{align}
establishing the claim \eqref{e47} of the theorem.
The proof of \eqref{e48} is similar.

\section{Proof of Theorem~\ref{thm7}}\label{app_thm7}
We can choose the natural base of logarithms in this proof.
Choose ${\bf U}=(U_1,U_2,\dots,U_r)$ satisfying the Markov chain conditions \eqref{e_markov0}-\eqref{e_markov}
and so that
\begin{align}
s_{\infty}^*(X;Y)
&\le 2\cdot\frac{I(X;Y)-I(X;Y|\bf U)}{I({\bf U};X,Y)}
\\
&\le 
4\cdot\frac{I(X;Y)-I(X;Y|\bf U)}{I({\bf U};X)+I({\bf U};Y)}
\end{align}
which is possible by the definition of $s_{\infty}^*(X;Y)$.

Given $\alpha,\beta\in[0,1]$, define by $P^{\alpha,\beta}$ the unique distribution\footnote{Alternatively, $P^{\alpha,\beta}$ equals the $I$-projection $\argmin_{Q_{XY}}D(Q_{XY}\|P_{XY})$ under the constraints $Q_X=[\alpha,\bar{\alpha}]$ and $Q_Y=[\beta,\bar{\beta}]$ \cite[Corollary~3.3]{csiszar1975}.} such that 
\begin{align}
P^{\alpha,\beta}(x,y)=P_{XY}(x,y)f(x)g(y)
\label{e_scaling}
\end{align} 
for some functions $f$ and $g$,
and such that the marginals are $P^{\alpha}:=[\alpha,\bar{\alpha}]$ and $P^{\beta}:=[\beta,\bar{\beta}]$.
For the existence of $P^{\alpha,\beta}$, see e.g.\ \cite{hobby1965doubly,sinkhorn1967diagonal}.
Define $I(\alpha,\beta)$ as the mutual information 
of $(X,Y)$ under $P^{\alpha,\beta}$.
Define $\lambda=\lambda(\alpha,\beta)$ as the number such that $P^{\alpha,\beta}$ is  the matrix
\begin{align}
\left(
\begin{array}{cc}
\alpha\beta+\lambda & \alpha\bar{\beta}-\lambda     \\
\bar{\alpha}\beta-\lambda  
&     \bar{\beta}\bar{\beta}+\lambda
\end{array}
\right).
\label{e_244}
\end{align}

Given any $\bf u$, let $\alpha_{\bf u}\in [0,1]$ be such that $P_{X|{\bf U=u}}=[\alpha_{\bf u},\bar{\alpha}_{\bf u}]$. 
Define $\beta_{\bf u}$ similarly but for $P_{Y|{\bf U=u}}$.
With these notations, note that
\begin{align}
\mathbb{E}[\alpha_{\bf U}]=\mathbb{E}[\beta_{\bf U}]=p;
\label{e_ealpha}
\end{align}
and 
\begin{align}
I(X;Y)-I(X;Y|{\bf U})=I(p,p)-\mathbb{E}[I(\alpha_{\bf U},\beta_{\bf U})];
\\
I({\bf U};X)+I({\bf U};Y)=\mathbb{E}[d(\alpha_{\bf U}\|p)+d(\beta_{\bf U}\|p)]
\end{align}
where we recall that $d(\cdot\|\cdot)$ denotes the binary divergence function.
Define 
\begin{align}
\psi(\alpha,\beta):=d(\alpha\|p)+d(\beta\|p).
\end{align}
Then note that $\psi(\alpha,\beta)$ is a smooth nonnegative function on $[0,1]^2$ with vanishing value and first derivatives at $(p,p)$.
Also, define 
\begin{align}
&\quad\phi(\alpha,\beta):=
\nonumber\\
&I(p,p)-I(\alpha,\beta)+I_{\alpha}(p,p)(\alpha-p)
+I_{\beta}(p,p)(\beta-p)
\end{align}
where $I_{\alpha}(p,p):=\left.\frac{\partial}{\partial\alpha}I(\alpha,\beta)\right|_{(p,p)}$.
Then $\phi$ is also a smooth function on $[0,1]^2$ with vanishing value and first derivatives at $(p,p)$.
Moreover, due to \eqref{e_ealpha}, we have
\begin{align}
\mathbb{E}[\phi(\alpha_{\bf U},\beta_{\bf U})]=I(p,p)-\mathbb{E}[I(\alpha_{\bf U},\beta_{\bf U})].
\end{align}
Thus to prove the theorem it suffices to show the existence of sufficiently small $c>0$, such that for any $p,|\delta|\in (0,c)$, there is
\begin{align}
\sup_{\alpha,\beta}\frac{\phi(\alpha,\beta)}{\psi(\alpha,\beta)}
\le c^{-1}p\delta^2
\label{e_suff}
\end{align}
where the sup is over $\alpha,\beta\in (0,1)$.
\begin{itemize}
\item {\bf Case~1:} $0.1p<\alpha,\beta<10p$.\\
Since $\frac{\partial^2}{\partial \alpha^2}D(\alpha\|p)= \left[\frac1{\alpha}+\frac1{1-\alpha}\right]\ge \frac1{\alpha}\ge \frac1{10 p}$ for $\alpha\in[0,10p]$,
we have
\begin{align}
\psi(\alpha,\beta)\ge  \frac1{20p}[(\alpha-p)^2+(\beta-p)^2]
\end{align}
for $(\alpha,\beta)\in[0,10p]^2$.
Now if we can show that 
\begin{align}
\sup_{(\alpha,\beta)\in[0,10p]^2}\|\partial^2\phi(\alpha,\beta)\|
&=
\sup_{(\alpha,\beta)\in[0,10p]^2}\|\partial^2I(\alpha,\beta)\|
\\
&\lesssim \delta^2,
\label{e_i2d}
\end{align}
we will obtain $\sup_{(\alpha,\beta)\in[0,10p]^2}\frac{\phi(\alpha,\beta)}{\psi(\alpha,\beta)}
\lesssim p\delta^2$ which matches \eqref{e_suff}.
Here and below, $x\lesssim y$ means that there is an absolute constant $C>0$ such that 
$x\le Cy$ when $c$ in the theorem statement (and hence $p$ and $|\delta|$) is sufficiently small.

Before explicitly computing $\partial^2\phi(\alpha,\beta)$, 
we give some intuitions why we should expect \eqref{e_i2d} to be true.
For fixed $\alpha,\beta,p$, we will show that \begin{align}
I(\alpha,\beta)
=
\tilde{I}(\alpha,\beta)+o(\delta^2)
\label{e245}
\end{align}
as $\delta\to 0$, where we defined
\begin{align}
\tilde{I}(\alpha,\beta):=\frac{\delta^2}{2\bar{p}^4}\alpha\bar{\alpha}\beta\bar{\beta}.
\label{e_itilde}
\end{align}
If the difference between $I(\alpha,\beta)$ and $\tilde{I}(\alpha,\beta)$ could be neglected, then \eqref{e_i2d} should hold.
To see \eqref{e245}, for given $\alpha,\beta\in (0,1)$, 
note that \eqref{e_scaling} implies,
\begin{align}
\frac{(1+\frac{\lambda}{\alpha\beta})(1+\frac{\lambda}{\bar{\alpha}\bar{\beta}})}
{(1-\frac{\lambda}{\alpha\bar{\beta}})(1-\frac{\lambda}{\bar{\alpha}\beta})}
=
\frac{(1+\delta)(1+\frac{\delta p^2}{\bar{p}^2})}
{(1-\frac{\delta p}{\bar{p}})^2}.
\label{e_blambda}
\end{align}
Under the assumption $\delta\to 0$,
the above linearizes to 
\begin{align}
\frac{\lambda}{\alpha\bar{\alpha}\beta\bar{\beta}}
=
\frac{\delta}{\bar{p}^2}+o(\delta).
\label{e_inv}
\end{align}
Moreover, note that
\begin{align}
D_{\chi^2}
(P^{\alpha,\beta}\|P^{\alpha}\times P^{\beta})
&=
\frac{\lambda^2}{2\alpha\bar{\alpha}\beta\bar{\beta}}
\\
&=\frac{\delta^2}{2\bar{p}^4}\alpha\bar{\alpha}\beta\bar{\beta}
+o(\delta^2)
\end{align}
where the last step follows by comparing with \eqref{e_inv}. 
Since $I(\alpha,\beta)/D_{\chi^2}
(P^{\alpha,\beta}\|P^{\alpha}\times P^{\beta})\to 1$ as $\delta\to 0$, we see \eqref{e245} holds.
Of course, \eqref{e245} does not really show \eqref{e_i2d} since approximation of function values generally does not imply approximation of the derivatives.
However, we shall next explicitly take the derivatives to give a real proof, and the above observations are useful guides.

First, note that
\begin{align}
&\frac{\partial I(\alpha,\beta)}{\partial \alpha}
\nonumber\\
&=\sum_{x,y\in\{0,1\}}\left(
\frac{\partial}{\partial\alpha}P^{\alpha,\beta}(x,y)
\right)
\ln\frac{P^{\alpha,\beta}(x,y)}{P^{\alpha}(x)P^{\beta}(x)}
\\
&=(\beta+\lambda_{\alpha})\ln(1+\frac{\lambda}{\alpha\beta})
+(\bar{\beta}-\lambda_{\alpha})\ln(1-\frac{\lambda}{\alpha\bar{\beta}})\nonumber
\\
&\quad (-\beta-\lambda_{\alpha})\ln(1-\frac{\lambda}{\bar{\alpha}\beta})
+(-\bar{\beta}+\lambda_{\alpha})\ln(1+\frac{\lambda}{\bar{\alpha}\bar{\beta}}).
\label{e_ialpha}
\end{align}
where $\lambda_{\alpha}:=\frac{\partial}{\partial\alpha}\lambda$.
Next, we express the first and second derivatives, $\lambda_{\alpha}$, $\lambda_{\beta}$ and $\lambda_{\alpha,\beta}$, in terms of $\lambda$.
Differentiating the logarithm of \eqref{e_blambda} in $\beta$ yields 
\begin{align}
&\quad\lambda_{\beta}\left[
\frac1{\alpha\beta+\lambda}
+\frac1{\bar{\alpha}\bar{\beta}+\lambda}
+\frac1{\alpha\bar{\beta}-\lambda}
+\frac1{\bar{\alpha}\beta-\lambda}
\right]
\nonumber\\
&=\lambda\left[
\frac{\beta^{-1}}{\alpha\beta+\lambda}
-\frac{\bar{\beta}^{-1}}{\bar{\alpha}\bar{\beta}+\lambda}
-\frac{\bar{\beta}^{-1}}{\alpha\bar{\beta}-\lambda}
+\frac{\beta^{-1}}{\bar{\alpha}\beta-\lambda}
\right].
\label{e168}
\end{align}
In the rest of the proof the notation $f(t)=O(t)$ means $|f(t)|\lesssim |t|$
(recall the definition of $\lesssim$ in \eqref{e_i2d}), and $f(t)=\Theta(t)$ if $1\lesssim f(t)/t\lesssim 1$.
Note that for $0.1p<\alpha,\beta<10p$ we have
\begin{align}
\frac{\lambda}{\alpha\beta}=\Theta(\delta),
\end{align}
since the right side of \eqref{e_blambda} clearly equals $1+\Theta(\delta)$.
Then by \eqref{e168},
\begin{align}
\lambda_{\beta}\left[
\frac1{\alpha\bar{\alpha}\beta\bar{\beta}}
+O(\frac{\lambda}{\alpha^2\beta^2})
\right]
=\lambda\left[\frac{\bar{\beta}-\beta}{\alpha\bar{\alpha}\beta^2\bar{\beta}^2}
+O(\frac{\lambda}{\alpha^2\beta^3})
\right]
\end{align}
and hence,
\begin{align}
\lambda_{\beta}=\lambda\cdot\frac{\bar{\beta}-\beta}{\beta\bar{\beta}}\left(1+O(\frac{\lambda}{\alpha\beta})\right)
=O(\frac{\lambda}{\beta}).
\label{e167}
\end{align}
Expression of $\lambda_{\alpha}$ can be found similarly.
Moreover, differentiating \eqref{e168} we get
\begin{align}
&\lambda_{\alpha,\beta}\left[
\tfrac1{\alpha\beta+\lambda}
+\tfrac1{\bar{\alpha}\bar{\beta}+\lambda}
+\tfrac1{\alpha\bar{\beta}-\lambda}
+\tfrac1{\bar{\alpha}\beta-\lambda}
\right]
\nonumber\\
&+\lambda_{\alpha}\lambda_{\beta}
\left[
-\tfrac1{(\alpha\beta+\lambda)^2}
-\tfrac1{(\bar{\alpha}\bar{\beta}+\lambda)^2}
+\tfrac1{(\alpha\bar{\beta}-\lambda)^2}
+\tfrac1{(\bar{\alpha}\beta-\lambda)^2}
\right]
\nonumber\\
&+\lambda_{\alpha}\left[
-\tfrac{\alpha}{(\alpha\beta+\lambda)^2}
+\tfrac{\bar{\alpha}}{(\bar{\alpha}\bar{\beta}+\lambda)^2}
+\tfrac{\alpha}{(\alpha\bar{\beta}-\lambda)^2}
-\tfrac{\bar{\alpha}}{(\bar{\alpha}\beta-\lambda)^2}
\right]
\nonumber\\
&+\lambda_{\beta}\left[
-\tfrac{\beta}{(\alpha\beta+\lambda)^2}
+\tfrac{\bar{\beta}}{(\bar{\alpha}\bar{\beta}+\lambda)^2}
+\tfrac{\beta}{(\bar{\alpha}\beta-\lambda)^2}
-\tfrac{\bar{\beta}}{(\alpha\bar{\beta}-\lambda)^2}
\right]
\nonumber\\
&+\lambda\left[\tfrac1{(\alpha\beta+\lambda)^2}
+\tfrac1{(\bar{\alpha}\bar{\beta}+\lambda)^2}
-\tfrac1{(\alpha\bar{\beta}-\lambda)^2}
-\tfrac1{(\bar{\alpha}\beta-\lambda)^2}
\right]=0,
\end{align}
from which we can deduce that 
\begin{align}
\lambda_{\alpha,\beta}=O(\frac{\lambda}{\alpha\beta}).
\label{e169}
\end{align}
Now, taking the derivative in \eqref{e_ialpha}, we obtain
\begin{align}
&\quad\partial_{\alpha,\beta}I(\alpha,\beta)
\nonumber\\
&=
\lambda_{\alpha,\beta}\lambda\cdot\frac1{\alpha\bar{\alpha}\beta\bar{\beta}}
+\lambda_{\alpha}\lambda_{\beta}\cdot\frac1{\alpha\bar{\alpha}\beta\bar{\beta}}
+\lambda_{\alpha}\lambda\cdot\frac{\beta-\bar{\beta}}{\alpha\bar{\alpha}\beta^2\bar{\beta}^2}
\nonumber\\
&\quad +\lambda_{\beta}\lambda\cdot\frac{\alpha-\bar{\alpha}}{\alpha^2\bar{\alpha}^2\beta\bar{\beta}}
+\frac{\lambda^2}{2}\cdot\frac{(\bar{\alpha}-\alpha)(\bar{\beta}-\beta)}{\alpha^2\bar{\alpha}^2\beta^2\bar{\beta}^2}
\nonumber\\
&\quad+O\left(\frac{\lambda^3}{\alpha^3\beta^3}\right).
\label{e164}
\end{align}
%{\color{blue}need $x=O(y)$ to mean $x\le Cy$ where it is implied that $y\to 0$ as $c\to 0$.}
In deriving \eqref{e164}, we applied the Taylor expansions of $x\mapsto \ln(1+x)$ and  $x\mapsto \frac1{1+x}$.
Plugging \eqref{e167} and \eqref{e169} into 
\eqref{e164}, we obtain 
\begin{align}
|\partial_{\alpha,\beta}I(\alpha,\beta)|=O(\left(\frac{\lambda}{\alpha\beta}\right)^2)=O(\delta^2).
\label{e229}
\end{align}
% if $|\delta|,p<c$ and $c$ is sufficiently small. 
%This verifies \eqref{e_i2d}.

Next, we control $|\partial_{\alpha,\alpha}I(\alpha,\beta)|$.
Similarly to \eqref{e168}, we have
\begin{align}
&\quad\lambda_{\alpha}\left[
\frac1{\alpha\beta+\lambda}
+\frac1{\bar{\alpha}\bar{\beta}+\lambda}
+\frac1{\bar{\alpha}\beta-\lambda}
+\frac1{\alpha\bar{\beta}-\lambda}
\right]
\nonumber\\
&=\lambda\left[
\frac{\alpha^{-1}}{\alpha\beta+\lambda}
-\frac{\bar{\alpha}^{-1}}{\bar{\alpha}\bar{\beta}+\lambda}
-\frac{\bar{\alpha}^{-1}}{\bar{\alpha}\beta-\lambda}
+\frac{\alpha^{-1}}{\alpha\bar{\beta}-\lambda}
\right].
\end{align}
Further taking the derivative, we obtain
\begin{align}
&\quad\lambda_{\alpha,\alpha}\left[
\frac1{\alpha\beta+\lambda}
+\frac1{\bar{\alpha}\bar{\beta}+\lambda}
+\frac1{\bar{\alpha}\beta-\lambda}
+\frac1{\alpha\bar{\beta}-\lambda}
\right]
\nonumber
\\
&+\lambda_{\alpha}
\left[
\frac{-2\beta-\alpha^{-1}\lambda}{(\alpha\beta+\lambda)^2}
+\frac{2\bar{\beta}+\bar{\alpha}^{-1}\lambda}{(\bar{\alpha}\bar{\beta}+\lambda)^2}
\right.
\nonumber\\
&\quad\quad+\left.\frac{2\beta-\bar{\alpha}^{-1}\lambda}{(\bar{\alpha}\beta-\lambda)^2}
+\frac{-2\bar{\beta}+\alpha^{-1}\lambda}{(\alpha\bar{\beta}-\lambda)^2}
\right]
\nonumber
\\
&+\lambda^2\left[
\frac1{\alpha^2(\alpha\beta+\lambda)^2}
+\frac1{\bar{\alpha}^2(\bar{\alpha}\bar{\beta}+\lambda)^2}
\right.
\nonumber\\
&\quad\quad\left.
-\frac1{\bar{\alpha}^2(\bar{\alpha}\beta-\lambda)^2}
-\frac1{\alpha^2(\alpha\bar{\beta}-\lambda)^2}
\right]
\nonumber\\
&+2\lambda
\left[\frac{\beta}{\alpha(\alpha\beta+\lambda)^2}
+\frac{\bar{\beta}}{\bar{\alpha}(\bar{\alpha}\bar{\beta}+\lambda)^2}
\right.
\nonumber\\
&\quad\quad\left.
+\frac{\beta}{\bar{\alpha}(\bar{\alpha}\beta-\lambda)^2}
+\frac{\bar{\beta}}{\alpha(\alpha\bar{\beta}-\lambda)^2}
\right]
=0.\label{e230}
\end{align}
Next, we shall use the assumption of $\alpha,\beta\in(0.1p,10p)$ to simplify \eqref{e230} as
\begin{align}
\lambda_{\alpha,\alpha}\cdot\Theta(\frac1{p^2})
-\lambda_{\alpha}\cdot \Theta(\frac1{p^3})
+\lambda^2\cdot \Theta(\frac1{p^6})
+\lambda\cdot \Theta(\frac1{p^4})
=0.\label{e231}
\end{align}
Since, analogous to \eqref{e167}, we have 
\begin{align}
\lambda_{\alpha}=O(\frac{\lambda}{\alpha}),
\label{e_lalpha}
\end{align}
we see that \eqref{e231} implies
\begin{align}
\lambda_{\alpha,\alpha}=O(\frac{\lambda}{p^2}).
\end{align}
Tighter estimates of $\lambda_{\alpha,\alpha}$ are possible, but the above will suffice.
We now take the derivative of \eqref{e_ialpha} in $\alpha$:
\begin{align}
\partial_{\alpha,\alpha}I(\alpha,\beta)
=I_1+I_2
\end{align}
where
\begin{align}
I_1:&=\lambda_{\alpha,\alpha}\left[\ln(1+\frac{\lambda}{\alpha\beta})
-\ln(1-\frac{\lambda}{\alpha\bar{\beta}})
\right.
\nonumber\\
&\quad\quad
\left.
-\ln(1-\frac{\lambda}{\bar{\alpha}\beta})
+\ln(1+\frac{\lambda}{\bar{\alpha}\bar{\beta}})
\right]
\end{align}
and
\begin{align}
I_2&:=\lambda_{\alpha}^2\left(
\frac{\frac1{\alpha\beta}}{1+\frac{\lambda}{\alpha\beta}}
+\frac{\frac1{\alpha\bar{\beta}}}{1-\frac{\lambda}{\alpha\bar{\beta}}}
+\frac{\frac1{\bar{\alpha}\beta}}{1-\frac{\lambda}{\bar{\alpha}\beta}}
+\frac{\frac1{\bar{\alpha}\bar{\beta}}}{1+\frac{\lambda}{\bar{\alpha}\bar{\beta}}}
\right)
\nonumber\\
&+\lambda_{\alpha}\left(
\frac{\frac1{\alpha}}{1+\frac{\lambda}{\alpha\beta}}
-\frac{\frac1{\alpha}}{1-\frac{\lambda}{\alpha\bar{\beta}}}
+\frac{\frac1{\bar{\alpha}}}{1-\frac{\lambda}{\bar{\alpha}\beta}}
-\frac{\frac1{\bar{\alpha}}}{1+\frac{\lambda}{\bar{\alpha}\bar{\beta}}}
\right)
\nonumber\\
&+\lambda_{\alpha}\lambda\left(
\frac{-\frac1{\alpha^2\beta}}{1+\frac{\lambda}{\alpha\beta}}
+\frac{-\frac1{\alpha^2\bar{\beta}}}{1-\frac{\lambda}{\alpha\bar{\beta}}}
+\frac{\frac1{\bar{\alpha}^2\beta}}{1-\frac{\lambda}{\bar{\alpha}\beta}}
+\frac{\frac1{\bar{\alpha}^2\bar{\beta}}}{1+\frac{\lambda}{\bar{\alpha}\bar{\beta}}}
\right)
\nonumber\\
&+\lambda\left(
\frac{-\frac1{\alpha^2}}{1+\frac{\lambda}{\alpha\beta}}
+\frac{\frac1{\alpha^2}}{1-\frac{\lambda}{\alpha\bar{\beta}}}
+\frac{\frac1{\bar{\alpha}^2}}{1-\frac{\lambda}{\bar{\alpha}\beta}}
+\frac{-\frac1{\bar{\alpha}^2}}{1+\frac{\lambda}{\bar{\alpha}\bar{\beta}}}
\right)
\end{align}
We can Taylor expand $I_1$ using the facts that $\lambda_{\alpha,\alpha}=O(\frac{\lambda}{p^2})$, $\alpha,\beta=\Theta(p)$, to obtain
\begin{align}
I_1=O(\frac{\lambda^2}{p^4}).
\end{align}
We can Taylor expand $I_2$ using the fact that $\lambda_{\alpha}=O(\frac{\lambda}{p})$ (analogous to \eqref{e167}) to obtain
\begin{align}
I_2=O(\frac{\lambda^2}{p^4}).
\end{align}
Thus
\begin{align}
|\partial_{\alpha,\alpha}I(\alpha,\beta)|=O(\frac{\lambda^2}{p^4})=O(\delta^2).
\end{align}
By symmetry same bound holds for $|\partial_{\beta,\beta}I(\alpha,\beta)|$ as well.
Together with \eqref{e229},
we thus validated \eqref{e_i2d}, and consequently \eqref{e_suff} in this case.

\item {\bf Case~2:} $\max\{\alpha,\beta\}\ge 10p$.\\
Without loss of generality assume that $\alpha\ge\beta$ and $\alpha\ge 10p$.
From \eqref{e_ialpha}, we have
\begin{align}
&\quad\partial_{\alpha}I(p,p)
\nonumber\\
&=(p+\lambda_{\alpha}(p,p))\ln(1+\delta)
+(\bar{p}-\lambda_{\alpha}(p,p))\ln(1-\delta p/\bar{p})
\nonumber\\
&\quad+(-p-\lambda_{\alpha}(p,p))\ln(1-\delta p/\bar{p})
\nonumber\\
&\quad+(-\bar{p}+\lambda_{\alpha}(p,p))\ln(1+\delta p^2/\bar{p}^2).
\end{align}
Using \eqref{e_lalpha} with $\lambda\leftarrow \delta p^2$ and $\alpha\leftarrow p$, we obtain $\lambda_{\alpha}(p,p)=O(p\delta)$.
Thus Taylor expanding the above displayed, we obtain
\begin{align}
\partial_{\alpha}I(p,p)\lesssim p\delta^2.
\label{e_pibound}
\end{align}
Then 
\begin{align}
\phi(\alpha,\beta)&:=I(p,p)-I(\alpha,\beta)+I_{\alpha}(p,p)(\alpha-p)
\nonumber\\
&\quad
+I_{\beta}(p,p)(\beta-p)
\\
&\le p^2\delta^2-0+2I_{\alpha}(p,p)(\alpha-p)
\\
&\lesssim p\alpha\delta^2
\label{e_phib}
\end{align}
where we used the assumption that $\alpha\ge\beta$ and the fact that $I_{\alpha}(p,p)=I_{\beta}(p,p)$.
Now the assumption of 
$\alpha\ge 10p$ implies 
\begin{align}
\psi(\alpha,\beta)\ge d(\alpha\|p)\gtrsim \alpha.
\label{e235}
\end{align}
To see the second inequality in \eqref{e235}, note that 
\begin{align}
\min_{\alpha\in(10p,1]}\frac1{\alpha}d(\alpha\|p)
&=\min_{\alpha\in[10p,1]}\left\{
\ln\frac{\alpha}{p}+\frac{1-\alpha}{\alpha}\ln\frac{1-\alpha}{1-p}
\right\}
\\
&=\frac{d(10p\|p)}{10p}
\end{align}
where the minimization is easily solved by checking that the derivative is positive for $\alpha\ge 10p$.
Finally, combining \eqref{e235} with \eqref{e_phib},
we obtain $\frac{\phi(\alpha,\beta)}{\psi(\alpha,\beta)}\lesssim \delta^2p$, as desired.

\item {\bf Case~3:} $\min\{\alpha,\beta\}\le 0.1p$, $\max\{\alpha,\beta\}<10p$.\\
Assume without loss of generality that $\alpha\le 0.1p$.
In this case, using \eqref{e_pibound}, we have
\begin{align}
\phi(\alpha,\beta)&:=I(p,p)-I(\alpha,\beta)+I_{\alpha}(p,p)(\alpha-p)
\nonumber\\
&\quad
+I_{\beta}(p,p)(\beta-p)
\\
&\le p^2\delta^2-0+I_{\alpha}(p,p)(\alpha+\beta-2p)
\\
&\le  p^2\delta^2+I_{\alpha}(p,p)\cdot 18p
\\
&=O(p^2\delta^2).
\end{align}
On the other hand,
\begin{align}
\psi(\alpha,\beta)
&\ge d(\alpha\|p)
\\
&\ge d(0.1p\|p)
\\
&=(0.9-0.1\ln10)p+O(p^2)
\\
&=\Theta(p).
\end{align}
Thus we once again obtain $\frac{\phi(\alpha,\beta)}{\psi(\alpha,\beta)}\lesssim \delta^2p$, as desired.
\end{itemize}

\bibliographystyle{plainurl}
\bibliography{ref_ccnp}

%% or include bibliography directly:
% \begin{thebibliography}{}
% \bibitem{b1}
% \end{thebibliography}
\begin{IEEEbiographynophoto}{Jingobo Liu}
received the B.S. degree in Electrical Engineering from Tsinghua University, Beijing, China in 2012, and the M.A. and Ph.D. degrees in Electrical Engineering from Princeton University, Princeton, NJ, USA, in 2014 and 2017. He was a Norbert Wiener Postdoctoral Research Fellow at the MIT Institute for Data, Systems, and Society (IDSS) during 2018-2020.
Since 2020, he has been an assistant professor in the Department of Statistics and an affiliate in the Department of Electrical and Computer Engineering at the University of Illinois, Urbana-Champaign, IL, USA.

His research interests include information theory, high dimensional statistics and probability, coding theory, and the related fields. His undergraduate thesis received the best undergraduate thesis award at Tsinghua University (2012). He gave a semi-plenary presentation at the 2015 IEEE Int. Symposium on Information Theory, Hong-Kong, China. He was a recipient of the Princeton University Wallace Memorial Honorific Fellowship in 2016. His Ph.D. thesis received the Bede Liu Best Dissertation Award of Princeton and the Thomas M. Cover Dissertation Award of the IEEE Information Theory Society (2018).
\end{IEEEbiographynophoto}
\end{document}